\tikzset{
->-/.style args={#1rotate#2}{decoration={markings, mark=at position #1 with {\arrow[scale=1.5,rotate = #2 ]{stealth}}}, postaction={decorate}}
}
\tikzset{
-r-/.style args={#1rotate#2}{decoration={markings, mark=at position #1 with {\arrow[scale=1,rotate = #2 ]{>}}}, postaction={decorate}}
}
\DeclareMathOperator{\Aut}{Aut}
\DeclareMathOperator{\Hom}{Hom}
\DeclareMathOperator{\bbZ}{\mathbb{Z}}
\DeclareMathOperator{\bbR}{\mathbb{R}}
\DeclareMathOperator{\bbC}{\mathbb{C}}
\DeclareMathOperator{\Arf}{\textrm{Arf}}
\DeclareMathOperator{\TY}{\mathrm{TY}}
\DeclareMathOperator{\Vc}{\mathrm{Vec}}
\DeclareMathOperator{\Rep}{\mathrm{Rep}}
\DeclareMathOperator{\Irr}{\mathrm{Irr}}
\DeclareMathOperator{\calC}{\mathcal{C}}
\DeclareMathOperator{\calD}{\mathcal{D}}
\DeclareMathOperator{\calA}{\mathcal{A}}
\DeclareMathOperator{\calF}{\mathcal{F}}
\DeclareMathOperator{\calB}{\mathcal{B}}
\DeclareMathOperator{\calZ}{\mathcal{Z}}
\DeclareMathOperator{\calH}{\mathcal{H}}
\DeclareMathOperator{\calM}{\mathcal{M}}
\DeclareMathOperator{\calW}{\mathcal{W}}
\DeclareMathOperator{\id}{\mathds{1}}
\DeclareMathOperator{\rk}{\mathrm{rk}}
\newcommand{\Mod}[1]{ \, \mathrm{mod} \, #1}
\newtheorem{theorem}{Theorem}
\newtheorem{proposition}{Proposition}
\title{Duality Defects in \texorpdfstring{$E_8$}{E8}}
\abstract{We classify all non-invertible Kramers-Wannier duality defects in the $E_8$ lattice Vertex Operator Algebra (i.e. the chiral $(E_8)_1$ WZW model) coming from $\mathbb{Z}_m$ symmetries. We illustrate how these defects are systematically obtainable as $\mathbb{Z}_2$ twists of invariant sub-VOAs, compute defect partition functions for small $m$, and verify our results against other techniques. Throughout, we focus on taking a physical perspective and highlight the important moving pieces involved in the calculations. Kac's theorem for finite automorphisms of Lie algebras and contemporary results on holomorphic VOAs play a role. We also provide a perspective from the point of view of (2+1)d Topological Field Theory and provide a rigorous proof that all corresponding Tambara-Yamagami actions on holomorphic VOAs can be obtained in this manner. We include a list of directions for future studies.}
\author[1]{Ivan M. Burbano,}
\author[2,3,4]{Justin Kulp,\note[4]{Corresponding author.}}
\author[5]{Jonas Neuser}
\affiliation[1]{Department of Physics, University of California, 366 Physics North, Berkeley, 94720-7300, CA, USA}
\affiliation[2]{Perimeter Institute for Theoretical Physics, Waterloo, ON N2L 2Y5, Canada}
\affiliation[3]{Department of Physics \& Astronomy, University of Waterloo, Waterloo, ON N2L 3G1, Canada}
\affiliation[5]{Institute for Quantum Gravity, University of Erlangen-Nürnberg, Staudtstraße 7 / B2, 91058 Erlangen, Germany}
\emailAdd{ivan\_burbano@berkeley.edu}
\emailAdd{jgjkulp@gmail.com}
\emailAdd{jonas.neuser@fau.de}
\begin{document}

\maketitle

\section{Introduction}
A topological defect line in a 2d QFT corresponds to some oriented (possibly charged) 1d inhomogeneity in the system, which is invariant under continuous deformations, so long as the deformation does not move the inhomogeneity past any other operators. %That is to say, any observables in the QFT, which depend on the lines, are invariant under isotopies of the lines.}
Historically, such topological defect lines have been of interest in CFT for their connection to BCFT, twisted boundary conditions, and orbifolds \cite{Cardy:1986gw, Oshikawa:1996, Oshikawa:1996_2, Petkova:2000, Fuchs:2002cm, Fuchs:2003id, Fuchs:2004dz, Fuchs:2004xi, Fjelstad:2005ua}. However, more modern applications have exploded in recent years, including: notions of generalized orbifolds and duality \cite{FR_HLICH_2010, theoMoonshine, Beigi_2011}; connections to anyon condensation \cite{Bais:2008ni, Kong:2013aya}; better understanding of defects in statistical physics systems \cite{Aasen:2016dop,aasen2020topological}; constraints for RG flows \cite{Chang_2019, thorngren2019fusion, gaiotto2020orbifold, KikuchiRCFT}; constraints for 2d modular bootstrap \cite{Lin:2019kpn, lin2021mathbbzn, Pal:2020wwd, collier2021bootstrapping}; and even statements about confinement in (1+1)d QCD \cite{komargodski2020symmetries, Nguyen:2021naa, Delmastro:2021otj} and quantum gravity \cite{rudelius2020topological}. Very broadly, these ideas fit into a series of research programs studying applications of symmetries in quantum field theory and their generalizations to ``higher-form symmetries'' (see e.g. \cite{Gaiotto_2015}) and ``non-invertible symmetries'' (see e.g. \cite{Bhardwaj_2018} in 2d).

Our goal in this paper is to understand the construction of $\bbZ_m$ duality defects in the chiral $(E_8)_1$ WZW model, i.e. lines which separate the theory from its $\bbZ_m$ orbifold, and their associated twisted partition functions. Equivalently, we will focus on $\bbZ_m$ Tambara-Yamagami category actions on the $E_8$ lattice Vertex Operator Algebra (VOA). Let us briefly motivate some of these choices of specialization. 

First, duality defects are an interesting object of study in the realm of non-invertible symmetries. They are non-invertible topological defect lines which separate a theory from a gauged theory, and help to explain phenomena such as Kramers-Wannier duality in statistical physics \cite{KramWan, Frohlich_2004, Frohlich:2006, Frohlich:2009gb} (see \cite{Koide:2021zxj, Choi:2021kmx, Kaidi:2021xfk} for related work in 4d). They also provide ``spin-selection rules'' (see e.g. \cite{Chang_2019}) and other constraints on CFTs, which play a role in e.g. the 2d modular bootstrap. There is also a sense (which we will review) in which duality defects are the simplest defects after symmetry defects, this makes them a good tool for understanding the physics of non-invertible defect lines.

To motivate the choice of chiral $(E_8)_1$, we compare to two other systems with duality defects: the Ising CFT and the Monster CFT. The Ising CFT has a $\bbZ_2$ duality defect, and all of the properties of this defect can be understood from the fact that it has a (2+1)d interpretation as one of the three simple anyons in the Ising TFT. By comparison, the TFT/MTC associated to chiral $(E_8)_1$ is trivial (it has no irreducible modules besides itself), so we cannot use this approach to study its duality defects. Moving on, the Monster CFT has $\bbZ_p$ duality defects (where $p$ is a prime dividing the order of the Monster group) and in \cite{MonsterCFT} the $\bbZ_2$ defects of the Monster CFT were found by fermionization (see Section \ref{sec:fermionization}). Generalizing such an analysis to a ``parafermionization'' could be possible \cite{Yao:2020dqx, thorngren2021fusion}, but our analysis evades the technical difficulties involved with this process. 

All in all, our strategy should generalize to theories with duality defects who do not lift to lines in the representation category of the theory, and which are more formidable than minimal models, such as the Monster and other Sporadic group CFTs. Additionally, we conjecture that it will work for higher $n$-ality defects.

The chiral $(E_8)_1$ theory is also interesting in its own right. For example, it is the unique holomorphic VOA with central charge $8$, and hence describes the boundary edge modes of Kitaev's (2+1)d $E_8$ phase \cite{kitaevTalk} (see \cite{Kaidi:2021gbs} for a recent discussion). Commutant pairs inside the chiral $(E_8)_1$ theory have also appeared in general CFT contexts and the study of MLDEs and the conformal bootstrap \cite{Mukhi:2020sxt, Hegde:2021sdm, lin2021mathbbzn, collier2021bootstrapping}.

Note: throughout the paper we will always write the total number of spacetime dimensions.

\subsection{Outline of the Paper}
In this paper we want to construct $\bbZ_m$ duality defects of the chiral $E_8$ theory, but a side-mission will be to carve a pathway through the literature in a ground-up way for physicists, focusing on examples and highlighting various moving pieces to enable explicit calculations. We also hope this will be of benefit to non-physicists to understand the physical point of view. We do not review any formal mathematical definitions (e.g. of fusion categories or vertex algebras) as such reviews have been done better in other places, although we do provide references to those materials throughout. When we do introduce some mathematical formalism, we try to emphasize the physical context so that the mathematical definitions seem ``inevitable.''

In Section \ref{sec:topologicalDefects} we provide background for defects in 2d CFT. We recap how topological defect lines appear in CFTs in Section \ref{sec:DDCFT} and how they relate to gauging symmetries (or orbifolds) in Section \ref{sec:orbifoldingAbSym}; we also set notation for the rest of the paper. We briefly recount the story of the critical Ising model for concreteness in Section \ref{sec:IsingCFT}. We conclude in Section \ref{sec:FusionCats} by providing the data needed to define the fusion categories $\Vc_G$ and $\TY(A,\chi,\tau)$ and provide a physical story for those data.

In Section \ref{sec:DDinE8} we build all the tools necessary to compute duality defects in the chiral $(E_8)_1$ theory. Since we prefer to think of this theory from the lattice VOA viewpoint, we describe the construction of lattice VOAs in Section \ref{sec:LVOACon}. Understanding symmetries of VOAs with all the technical details is important for our construction, so in Section \ref{sec:Automorphisms} we recount the classic story about lifts from the underlying lattice, Kac's theorem for finite order automorphisms of Lie algebras, and a (computer-implementable) way to compute the relevant twisted characters. In Section \ref{sec:orbifoldsOfVE8} we discuss orbifolds from the lattice VOA perspective to make both calculations and the connections to physics concrete. In Section \ref{sec:DefectedPFs} we explain how to compute duality defected partition functions in the chiral $E_8$ theory, and in Section \ref{sec:Z2DDs} we implement our proposal for $\bbZ_2$ symmetries and compare to the result obtained by fermionization.

In Section \ref{sec:higherDDs} we show the duality defected partition functions for $\bbZ_3$, $\bbZ_4$, and $\bbZ_5$. The purpose of this section is not just the enumeration of tables of defects, but to show-off potential complications that may arise in computing defects. For instance, we see the relevance of order-doubling in Section \ref{sec:Z3Defects}, and what happens when the invariant sub-VOA does not come from a root lattice in Section \ref{sec:Z4Defects} and Section \ref{sec:Z5Defects}. In Section \ref{sec:Z4Defects} we also describe a computer algorithm for systematically computing ($q$-expansions of) defects.

In Section \ref{sec:3dTFT} we rephrase our discussion of duality defects from the point of view of (2+1)d TFTs and gapped boundary conditions. In Section \ref{sec:topBds} we provide a brief review of the relationship between (2+1)d TFTs, Modular Tensor Categories (MTCs), and their gapped boundary conditions, then discuss how orbifolds of 2d theories (and duality defects) can be understood from this picture in Section \ref{sec:DDs}. We conclude with a description of how this applies to holomorphic VOAs (like the $E_8$ lattice VOA) in Section \ref{sec:holoVOA}.

We end the paper with a list of open problems in Section \ref{sec:OpenProblems}. Namely those where explicit computations could give examples of the underlying physics.

In Appendix \ref{sec:PottsDefect} we give an alternative way to compute one of our $\bbZ_3$ duality defects from the Potts CFT, framing it from a (2+1)d point of view. In Appendix \ref{sec:Rearrangements} we make some technical comments on canonicality and symmetric non-degenerate bicharacters associated with the duality defects.

\section{Topological Defects in 2d}\label{sec:topologicalDefects}
The first example of a topological defect line in 2d occurs in the study of $0$-form symmetries.\footnote{Since we will not discuss higher-form symmetries at all (until Section \ref{sec:3dTFT}), from here out we will drop the word $0$-form.} Given a theory with a global $G$ symmetry, each element $g\in G$ corresponds to a topological line defect $X_g$ in the 2d theory: if $X_g$ sweeps past an operator insertion, it acts by $g$ in the appropriate way (see e.g. \cite{Gaiotto_2015, yujiCERN}). In the case that $G$ is a continuous group, these topological lines are simply the (exponentials of the) Noether charges obtained by integrating the conserved currents associated to $G$ along the line. Their topological nature can be understood as a consequence of the Ward identities for such currents.

Continuous or discrete, the topological defect lines associated with symmetries compose in a natural way: if two symmetry defects labelled by $g,h\in G$ are brought sufficiently close together, in parallel, and with the same orientation, they fuse as $X_{g} \otimes X_{h} = X_{gh}$. As a result, we find that topological defect lines associated with symmetries are \textit{invertible}, with $X_{g^{-1}}$ the left and right inverse of the line $X_g$.

However, not all topological defect lines in a theory are invertible. Such non-invertible defect lines in QFTs provide a rich source of information about a theory, because they are a manifestation of some non-trivial topological data including dualities, anomalies, and other constraints on RG flows. 

The classical example of a non-invertible topological defect line is the ``Kramers-Wannier'' duality defect in the Ising model \cite{Frohlich_2004,Frohlich:2006,FR_HLICH_2010}, which relates correlators in the Ising model to disorder correlators. We briefly review this in Section \ref{sec:IsingCFT} (see also \cite{Thorngren:2018bhj,Ji:2019ugf,MonsterCFT} for recent discussions of the Kramers-Wannier duality defect). 
%Such duality defects are particularly interesting because they are a very simple extension (in a way we will describe) of symmetry defects, and they capture the data relating a theory $T$ to the theory after gauging a finite symmetry group $[T/G]$, i.e. after orbifolding.

Another source of topological defect lines are the ``Verlinde lines'' of Rational Conformal Field Theories (RCFTs), which are not mutually exclusive from the two previous examples. Verlinde lines are a natural consequence of the bulk-boundary relationship between a 2d RCFT and its associated (2+1)d TFT: they can be understood as 2d topological defects coming from anyons of the associated bulk (2+1)d TFT, brought to the boundary where the 2d RCFT data lives \cite{Verlinde:1988sn, Petkova:2000}.

In this paper, we will focus on topological defect lines which are captured by the mathematical data of a fusion category. Fusion categories describe theories of topological defect lines which are particularly discrete in that they are defined to only have a finite number of simple (irreducible) lines. In addition to being mathematically tractable, fusion categories are also interesting because they are ``rigid'' and cannot be ``continuously deformed'' (under RG flows, say), which is mathematically captured by the concept of Ocneanu rigidity \cite{ENO}. This means that they put constraints on RG flows very closely related to those implied by 't Hooft anomalies \cite{tHooft:1979rat}, see Section 7 of \cite{Chang_2019} for a modern discussion.

By now, fusion categories have appeared in a variety of ways in the physics literature, broadly in the intersection of fields that study topological aspects of quantum field theories. This includes more traditional ``high-energy'' contexts (e.g. \cite{Chang_2019, komargodski2020symmetries, Bhardwaj_2018, thorngren2019fusion, gaiotto2020orbifold}) and especially in the study of condensed matter systems and quantum computation (e.g. \cite{Kong:2013aya, Lan:2013wia, Kong:2014qka, Ji:2019eqo, Ji:2019jhk}).

Rather than re-define fusion categories, we recommend the enthusiastic physics reader to check out \cite{Bhardwaj_2018} for the definitions of a fusion category (called a ``symmetry category'' there) and to Section 3 of \cite{Chang_2019}. For the reader in a hurry, we recommend Section 5 and Appendix A of \cite{komargodski2020symmetries}.

%For more information on the generalities of topological defect lines and fusion categories, we highly recommend the recent references \cite{Chang_2019} and \cite{Bhardwaj_2018} respectively.

%In any case, fusion categories physically describe theories of topological defect lines which only have a finite number of simple lines, and the things that can be done with them. From a 2d perspective, fusion categories are interesting because they are the mathematical structure which most appropriately generalizes the notion of a discrete symmetry group, the associated topological defects, and their anomalies.

\subsection{CFT and Defect Partition Functions}\label{sec:DDCFT}
Since a defect is an inhomogeneity in our system, part of defining a line defect in 2d involves specifying boundary conditions for fields on either side of the defect line. We can do this in a 2d CFT by describing how it acts on states of the Hilbert space.

In a 2d CFT on the plane, any defect line $X$ defines an operator $\hat{X}$ on the Hilbert space $\mathcal{H}$ of states on a circle, which acts as follows: place a state $\phi \in \mathcal{H}$ at the origin and the defect on the unit circle. Then the state that this setup provides is by definition $\hat{X}\phi \in \mathcal{H}$ \cite{FR_HLICH_2010}. Said more cylindrically, prepare the state $\phi$ in the infinite past and have the defect line waiting ``half way'' up the cylinder, then the out state is by definition $\hat{X}\phi$.

From this, we obtain a very algebraic definition of a topological defect line as one that commutes with all the modes of the stress tensor
\begin{equation}
    [\hat{L}_n,\hat{X}]=0=[\hat{\bar{L}}_n,\hat{X}]\,.
\end{equation}

The topological defect line $X$ does not need to lie on a closed path. However, if we do have a defect line along an open oriented path, then we will also have to specify how the line operator ends. The Hilbert space $\mathcal{H}_X$ of operators upon which $X$ can end is the Hilbert space of the theory on a circle, except with a single future oriented $X$ defect piercing the circle.

To obtain the CFT partition functions, we simply periodicitize time in the cylinder setups. The partition function with the $X$ line inserted along a spacelike slice (a twist in Euclidean time) is given by
\begin{equation}
    Z^X(\tau,\bar{\tau}) := \Tr_{\mathcal{H}} \hat{X} q^{L_0-\frac{c}{24}} \bar{q}^{\bar{L}_0-\frac{c}{24}}\,,
\end{equation}
where $q := e^{2\pi i \tau}$. Likewise, the partition function with the $X$ line inserted along a timelike slice (a twist in space) is given by
\begin{equation}
    Z_X(\tau,\bar{\tau}) := \Tr_{\mathcal{H}_X} q^{L_0-\frac{c}{24}} \bar{q}^{\bar{L}_0-\frac{c}{24}}\,.
\end{equation}
We illustrate these setups in Figure \ref{fig:PartitionFunctions}. Note that the two are related by the modular $S$-transformation $\tau \mapsto -\frac{1}{\tau}$ (and similarly for antiholomorphic), because the $S$-transformation swaps the two torus cycles.

\begin{figure}
	\begin{minipage}{.45\textwidth}
	\centering
	\begin{tikzpicture}[baseline={(current bounding box.center)}]
    \draw (0,0) ellipse (1.25 and 0.5);
    \draw (-1.25,0) -- (-1.25,-3.5);
    \draw (-1.25,-3.5) arc (180:360:1.25 and 0.5);

    %Old decoration code with less pointy arrow
    %decoration={markings, mark=at position 0.5 with {\arrow{<}}}, postaction={decorate}

    \draw[dashed, very thick, red] (-1.25,-1.75) arc (180:360:1.25 and -0.5);
    \draw[very thick, red, ->- = 0.45 rotate 180] (-1.25,-1.75) arc (180:360:1.25 and 0.5);
    \draw[left] (-1.25, -1.75) node{$X$};
    
    \draw [dashed] (-1.25,-3.5) arc (180:360:1.25 and -0.5);
    \draw (1.25,-3.5) -- (1.25,0);  
    \fill [gray,opacity=0.1] (-1.25,0) -- (-1.25,-3.5) arc (180:360:1.25 and 0.5) -- (1.25,0) arc (0:180:1.25 and -0.5);
    \end{tikzpicture}
	\end{minipage}
	\begin{minipage}{.45\textwidth}
	\centering
	\begin{tikzpicture}[baseline={(current bounding box.center)}]
    \draw (0,0) ellipse (1.25 and 0.5);
    \draw (-1.25,0) -- (-1.25,-3.5);
    \draw (-1.25,-3.5) arc (180:360:1.25 and 0.5);
    \draw [dashed] (-1.25,-3.5) arc (180:360:1.25 and -0.5);
    \draw (1.25,-3.5) -- (1.25,0);  
    \fill [gray,opacity=0.1] (-1.25,0) -- (-1.25,-3.5) arc (180:360:1.25 and 0.5) -- (1.25,0) arc (0:180:1.25 and -0.5);
    
    %Old decoration code with less pointy arrow
    %decoration={markings, mark=at position 0.64 with {\arrow{>}}}, postaction={decorate}
    \draw[very thick, red, ->- = 0.64 rotate 0] (0,-3.5-0.5) -- (0,0-0.5);
    \draw[left] (-1.25+1.20, -1.75) node{$X$};
    
    \end{tikzpicture}
    \end{minipage}
	\caption{On the left, an $X$ defect line (red) is inserted along a spatial slice, this gives a ``twist'' by $X$ in the time direction. On the right, the defect line is inserted along the time direction. Periodicitizing this setup on the right, we get the trace over the $X$-twisted Hilbert space $\mathcal{H}_X$.}
	\label{fig:PartitionFunctions}
\end{figure}
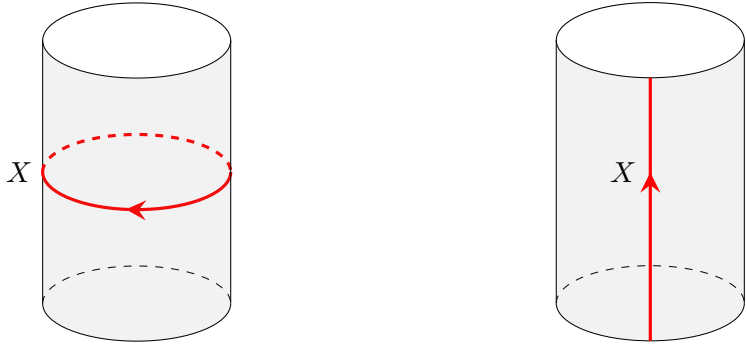

As an almost trivial example, if $X_{-}$ is a topological defect line generating a $\bbZ_2$ symmetry, then, basically by definition, if $\phi_i$ is charged under the $\bbZ_2$ symmetry we have
\begin{equation}
    \hat{X}_{-} \ket{\phi_i} = - \ket{\phi_i}\,.\label{eq:chargedAction}
\end{equation}
When tracing over the states in the Hilbert space to obtain $Z^{X_-}(\tau,\bar{\tau})$, for example, we will need to consider the signs from the charged states.

When studying a RCFT, another natural collection of topological defects comes from Verlinde line operators \cite{Verlinde:1988sn, Petkova:2000, Chang_2019}. Verlinde lines are in one-to-one correspondence with primaries, and for a diagonal RCFT, a Verlinde line $X_i$ commutes with the whole chiral algebra and acts on a primary by
\begin{equation}\label{eq:VerlindeAction}
    \hat{X}_i \ket{\phi_j} = \frac{S_{ij}}{S_{0j}}\ket{\phi_j}\,,
\end{equation}
where $S_{ij}$ is the modular $S$-matrix of the CFT. It is a straightforward application of Verlinde's formula to show that the Verlinde lines then fuse according to the fusion rules of the associated primaries, i.e.
\begin{equation}
    X_i \otimes X_j = \bigoplus_{k} N_{ij}^k X_k\,.
\end{equation}

Given any topological defect line $X$ encircling a local operator $\phi$ in the plane, we can take an equivalent view where the line $X$ squashes down leaving us with only the local operator $X\phi$ in the plane. If, instead of a general $\phi$, we had chosen the identity operator, it's clear we are just computing the expectation value of a loop of $X$ in the plane, aka the quantum dimension of $X$. Throughout, we will denote this as $\dim X$ rather than $\expval{X}$.

More generally, when a defect line $X$ sweeps past a local operator $\phi$, the result is not necessarily a local operator, but can be written as a local operator plus a defect operator $\phi^\prime$ that lies on the end of a topological tail adjoined to $X$, as depicted in Figure \ref{fig:defectLines}.

\begin{figure}
\begin{minipage}{.45\textwidth}
\centering
\begin{equation*}
	\begin{tikzpicture} [scale = 1, baseline = -2.5]
        \draw[thick, ->- = .52 rotate -10] (0,0) arc (0:360:0.8);
        \draw[fill] (-0.8,0) circle [radius = .05];
        \node[anchor = south] at (-0.8,0) {$\phi$};
        \node[anchor = east] at (-0.8*2,0) {$X$};
	\end{tikzpicture} \,=
    \begin{tikzpicture} [scale = 1, baseline = -2.5]
        \draw[fill] (0,0) circle [radius = .05];
        \node[anchor = south] at (0,0) {$X\phi$};
	\end{tikzpicture}
\end{equation*}
\end{minipage}
\begin{minipage}{.45\textwidth}
\centering
\begin{equation*}
\begin{tikzpicture}[scale = 1, baseline = 0]
        \coordinate (o) at (0,0);
        \coordinate (d) at (0.5,-1);
        \coordinate (m) at (0.5,0);
        \coordinate (u) at (0.5,1);
        \draw[fill] (o) circle [radius = .05];
        \node[anchor = south] at (o) {$\phi$};
        \draw[thick, ->- = .55 rotate 0] (d) -- (u);
        \node[anchor = west] at (m) {$X$};
    \end{tikzpicture} = \frac{1}{\dim X} \left(
    \begin{tikzpicture}[scale = 1, baseline = 0]
        \coordinate (o) at (0.7,0);
        \coordinate (d) at (0,-1);
        \coordinate (m) at (0,0);
        \coordinate (u) at (0,1);
        \draw[fill] (o) circle [radius = .05];
        \node[anchor = south] at (o) {${X\phi}$};
        \draw[thick, ->- = .55 rotate 0] (d) -- (u);
        \node[anchor = east] at (m) {$X$};
    \end{tikzpicture} \right) \,\,+\,
    \begin{tikzpicture}[scale = 1, baseline = 0]
        \coordinate (o) at (1,0);
        \coordinate (d) at (0,-1);
        \coordinate (m) at (0,0);
        \coordinate (u) at (0,1);
        \draw[fill] (o) circle [radius = .05];
        \node[anchor = south] at (o) {$\phi^\prime$};
        \draw[thick, ->- = .55 rotate 0] (d) -- (m);
        \draw[thick, ->- = .55 rotate 0] (m) -- (u);
        \draw[->- = .5 rotate 0,dashed] (m) -- (o);
        \node[anchor = east] at ($(m)!.5!(d)$) {$X$};
        \node[anchor = east] at ($(m)!.5!(u)$) {$X$};
    \end{tikzpicture}
\end{equation*}
\end{minipage}
	\caption{On the left, a topological defect line $X$ encircles a $\phi$ insertion in the plane. Since the line is topological, this is equivalent instead to the local operator $X\phi$. On the right, a topological defect line $X$ sweeps past the $\phi$ insertion leaving behind a local operator and a defect operator $\phi^\prime$. The defect operator is connected to $X$ by a topological tail (dotted).}
	\label{fig:defectLines}
\end{figure}

There are some interesting and important subtleties about all of this machinery, which are described in Section 2 of \cite{Chang_2019}. We have overlooked them because they are not particularly pressing for our discussion, but we will mention two immediate ones.

The first is that the diagram on the right of Figure \ref{fig:defectLines} simply does not reduce to the diagram on the left, we are left with the naive answer plus a topological tadpole. The vanishing of this tadpole can be proven if one assumes that the topological defect lines act faithfully on the bulk local operators. This condition can be violated in the case the CFT has multiple ground states, but such CFTs are just a direct sum of theories and won't be relevant for us.\footnote{There exists some recent literature on this topic and its interplay with higher-form symmetries \cite{komargodski2020symmetries,Sharpe:2019ddn,Yu:2020twi}.}

%giving phases to topological lines that depend on the area they sweep out when they are deformed over curved surfaces.
The second subtlety is that in 2d there are local curvature counterterms which may be added to the action. Since we are only really concerned with producing torus partition functions, this will not play any role because the torus is flat. In any case, following such terms would just change overall phases.

\subsection{Orbifolding Abelian Symmetries}\label{sec:orbifoldingAbSym}
To warm-up, consider a 2d CFT $T$ on $M$ with a global finite Abelian symmetry $A$. If the symmetry is non-anomalous, then we can couple the theory to a background $A$-connection in a well-defined way. Since $A$ is finite, this connection is necessarily flat and is specified by an $\alpha \in H^1(M,A)$ which labels the holonomies around the different cycles of $M$. 

Framed in terms of topological defects, we can represent a background $A$-connection by a network of topological $A$-defect lines meeting at trivalent junctions (see \cite{Tachikawa:finite} and diagrams within for an introduction). When a local operator passes the topological defect line labelled by $g\in A$, it applies the appropriate (linear) $g$-action.

For each background connection $\alpha\in H^1(M,A)$ we obtain a ``twisted partition function'' of the theory $Z_T[\alpha]$, which is like the ``regular'' partition function with (invertible) topological symmetry defect lines inserted around non-contractible loops. For example, on the torus, we have the $\abs{A}^2$ twisted partition functions
\begin{equation}
    Z_T[g,h] := \Tr_{\mathcal{H}_g} \hat{X}_h q^{L_0-\frac{c}{24}} \bar{q}^{\bar{L}_0-\frac{c}{24}}\,.
\end{equation}
Here $g,h\in A$, $\mathcal{H}_g = \mathcal{H}_{X_g}$ is the $g$-twisted Hilbert space, and we have suppressed dependence on $\tau$ and $\bar{\tau}$.

The \textit{orbifold} theory $[T/A]$ is the theory obtained by gauging the discrete $A$ symmetry of $T$. As is familiar from the path integral, gauging a symmetry is making the background connection dynamical, so the orbifold is obtained by summing over the twisted partition functions. Moreover, when we orbifold there is an emergent dual symmetry, given by the Pontryagin dual $\hat{A}$ of $A$, which comes from the action of the Wilson lines for the $A$ gauge fields \cite{vafa:quantumSymmetry}. This new dual symmetry can be coupled to a background flat connection $\beta \in H^1(M,\hat{A})$, and we obtain a formula relating the twisted partition functions in the different theories
\begin{equation}
    Z_{[T/A]}[\beta] = \frac{1}{\sqrt{\abs{H^1(M,A)}}} \sum_{\alpha \in H^1(M,A)} e^{i(\beta,\alpha)} Z_T[\alpha]\,.\label{eq:twistedPartFunc}
\end{equation}
The exponential denotes the intersection pairing between $H^1(M,A)$ and $H^1(M,\hat{A})$. Non-Abelian symmetries can be treated with only a bit more work.\footnote{Even more generally, we can gauge by ``symmetric Frobenius algebras'' in a theory with fusion category symmetry by inserting the corresponding triangulating trivalent mesh of topological defect lines as described in \cite{FR_HLICH_2010, Carqueville_2016, Bhardwaj_2018}. We discuss fusion categories more in Section \ref{sec:FusionCats}.}

Equation \eqref{eq:twistedPartFunc} is not necessarily the most general possible way to orbifold by a symmetry. In particular, one can include extra $U(1)$ weights $\epsilon(\alpha)$ in front of the twisted partition functions, and still obtain a gauge invariant partition function. In the CFT language, such a choice is known as a choice of ``discrete torsion'' \cite{vafa:Torsion1, vafa:Torsion2, sharpeTorsion}. Such a choice of $U(1)$ phase is classified by a group cohomology class $\nu_2\in H^2(A,U(1))$, so that most generally we have
\begin{equation}
    Z_{[T/_{\nu_2}A]}[\beta] = \frac{1}{\sqrt{\abs{H^1(M,A)}}} \sum_{\alpha \in H^1(M,A)} e^{i(\beta,\alpha)} \epsilon_{\nu_2}(\alpha) Z_T[\alpha]\,.
\end{equation}
$\epsilon_{\nu_2}$ is obtained from $\nu_2$ by evaluating it on a triangulation of $M$, e.g. on the torus with fluxes in the cycles given by $(\alpha_1,\alpha_2)$, so $\epsilon_{\nu_2}(\alpha_1,\alpha_2) = \nu_2(\alpha_1,\alpha_2)/\nu_2(\alpha_2,\alpha_1)$.

The two (commensurate) modern viewpoints on this discrete torsion phase are as follows: Anomalies of 2d theories manifest as phase ambiguities in coupling $T$ to a background connection, which in turn are encoded in a cohomology class $\mu_3 \in H^3(A, U(1))$. When a theory is non-anomalous, we don't just need to know that the cohomology class is trivial, but must pick a trivialization $\nu_2 \in H^2(A, U(1))$ that consistently resolves all phase ambiguities. Alternatively, we may view the phase as the partition function for a 2d SPT phase $\epsilon_{\nu_2}(\alpha)\sim e^{i S_{\nu_2}[\alpha]}$. Then orbifolding with discrete torsion is equivalent to taking a non-anomalous theory $T$, with phase ambiguities already resolved, and then stacking with such a 2d SPT phase before orbifolding \cite{Chen:2011pg, wenAnomalies} (see also \cite{Kapustin:2014tfa}).

\subsubsection{Example: \texorpdfstring{$\bbZ_2$}{Z2} Orbifold}
To get a better appreciation of what happens at the level of the Hilbert spaces, and see where the dual symmetry comes from, consider a theory $T$ with a non-anomalous (bosonic) $\bbZ_2$ symmetry. The untwisted orbifold partition function on the torus is
\begin{align}
    Z_{[T/\bbZ_2]}[0,0] 
        &= \frac{1}{2}\sum_{g,h \in \bbZ_2} Z_T[g,h]\\
        &= \sum_{g\in\bbZ_2} \Tr_{\mathcal{H}_g} \left[ \left(\frac{\hat{X}_{+}+\hat{X}_{-}}{2}\right) q^{L_0-\frac{c}{24}} \bar{q}^{\bar{L}_0-\frac{c}{24}}\right]\,.
\end{align}
The term in round brackets just projects us onto the states which are invariant under the $\bbZ_2$ symmetry when we trace over $\mathcal{H}_g$. This shows that the operators that contribute to the untwisted orbifold partition function are those in both the untwisted and twisted Hilbert spaces, $\mathcal{H}_{+}$ and $\mathcal{H}_{-}$ respectively, but only those which are invariant under the $\bbZ_2$ action. 

Said differently, we can decompose these Hilbert spaces into sums of even and odd subsectors
\begin{align}
    \mathcal{H}_+ &= \mathcal{H}_+^{\text{Even}} \oplus \mathcal{H}_+^{\text{Odd}}\,, \label{eq:HUntwisted}\\
    \mathcal{H}_- &= \mathcal{H}_-^{\text{Even}} \oplus \mathcal{H}_-^{\text{Odd}}\,. \label{eq:HTwisted}
\end{align}
Then the untwisted orbifold Hilbert space contains just even (i.e. gauge-invariant) operators, while the twisted orbifold Hilbert space contains just odd operators
\begin{align}
    \mathcal{H}_{\textrm{Orbi.}}^{\textrm{Even}} 
        &= \mathcal{H}_+^{\text{Even}} \oplus \mathcal{H}_-^{\text{Even}}\,,\label{eq:HOrbiUntwisted}\\
    \mathcal{H}_{\textrm{Orbi.}}^{\textrm{Odd}} 
        &= \mathcal{H}_+^{\text{Odd}} \oplus \mathcal{H}_-^{\text{Odd}}\,\label{eq:HOrbiTwisted}.
\end{align}
Hence we see the emergent $\hat{\bbZ}_2$ symmetry appear in the definition of this new twisted orbifold Hilbert space. For $A=\bbZ_m$ orbifolds we have to replace ``Even'' and ``Odd'' by the appropriate $\bbZ_m$ projectors, which are labelled by representations of $A$, i.e. elements of $\hat{A}$.

\subsection{Example: The Ising CFT}
\label{sec:IsingCFT}
Consider the (full) $c=\frac{1}{2}$ Ising CFT, which describes the Ising model at criticality. The theory has 3 Virasoro primaries: the identity operator $\mathds{1}$, the energy operator $\epsilon$, and the spin operator $\sigma$,
with conformal weights $(h,\bar{h})=(0,0),(\frac12,\frac12),(\frac{1}{16},\frac{1}{16})$ respectively \cite{BPZ}. The fusion rules are
\begin{equation}
        [\sigma][\sigma] = [\mathds{1}] \oplus [\epsilon]\,,\quad [\sigma][\epsilon] = [\sigma]\,,\quad [\epsilon][\epsilon] = [\mathds{1}]\,. \label{eq:isingFusion}
\end{equation}

The Ising CFT also has exactly 3 simple topological defect lines, these are the Verlinde lines naturally associated with the 3 primaries: $X_{\mathds{1}}$, $X_{\epsilon}$, and $X_{\sigma}$. They fuse according to the fusion rules of the primaries in Equation \eqref{eq:isingFusion} above, and so it is clear that the $\sigma$ line is not-invertible. The action of $X_{\mathds{1}}$ on primaries is straightforward. $X_\epsilon$ is the $\bbZ_2$ symmetry line, so acts only non-trivially on the $\bbZ_2$ charged operator(s) ($\sigma$, $\psi$, $\bar\psi$, etc.) as in Equation \eqref{eq:chargedAction}. The action of $X_\sigma$ on the Ising CFT primaries are depicted in Figure \ref{fig:IsingRules}.

\begin{figure}
\begin{minipage}{.45\textwidth}
\centering
\begin{equation*}
	\begin{tikzpicture}[scale = 1.2, baseline = 0]
        \coordinate (o) at (0,0);
        \coordinate (d) at (0.5,-1);
        \coordinate (m) at (0.5,0);
        \coordinate (u) at (0.5,1);
        \draw[fill] (o) circle [radius = .05];
        \node[anchor = south] at (o) {$\sigma$};
        \draw[thick, ->- = .55 rotate 0] (d) -- (u);
        \node[anchor = west] at (m) {$X_\sigma$};
    \end{tikzpicture} \,=\, 
    \begin{tikzpicture}[scale = 1.2, baseline = 0]
        \coordinate (o) at (1,0);
        \coordinate (d) at (0,-1);
        \coordinate (m) at (0,0);
        \coordinate (u) at (0,1);
        \draw[fill] (o) circle [radius = .05];
        \node[anchor = south] at (o) {$\mu$};
        \draw[thick, ->- = .55 rotate 0] (d) -- (m);
        \draw[thick, ->- = .55 rotate 0] (m) -- (u);
        \draw[->- = .5 rotate 0,dashed] (m) -- (o);
        \node[anchor = north] at ($(m)!.5!(o)$) {$X_\epsilon$};
        \node[anchor = east] at ($(m)!.5!(d)$) {$X_\sigma$};
        \node[anchor = east] at ($(m)!.5!(u)$) {$X_\sigma$};
    \end{tikzpicture} 
\end{equation*}
\end{minipage}
\begin{minipage}{.45\textwidth}
\centering
\begin{equation*}
	\begin{tikzpicture}[scale = 1.2, baseline = 0]
        \coordinate (o) at (0,0);
        \coordinate (d) at (0.5,-1);
        \coordinate (m) at (0.5,0);
        \coordinate (u) at (0.5,1);
        \draw[fill] (o) circle [radius = .05];
        \node[anchor = south] at (o) {$\epsilon$};
        \draw[thick, ->- = .55 rotate 0] (d) -- (u);
        \node[anchor = west] at (m) {$X_\sigma$};
    \end{tikzpicture} \,=\, 
    \begin{tikzpicture}[scale = 1.2, baseline = 0]
        \coordinate (o) at (1,0);
        \coordinate (d) at (0,-1);
        \coordinate (m) at (0,0);
        \coordinate (u) at (0,1);
        \draw[fill] (o) circle [radius = .05];
        \node[anchor = south] at (o) {$-\epsilon$};
        \draw[thick, ->- = .55 rotate 0] (d) -- (u);
        %\draw[thick, ->- = .55 rotate 0] (m) -- (u);
        %\draw[->- = .5 rotate 0,dashed] (m) -- (o);
        %\node[anchor = north] at ($(m)!.5!(o)$) {$X_\epsilon$};
        %\node[anchor = east] at ($(m)!.5!(d)$) {$X_\sigma$};
        \node[anchor = east] at (m) {$X_\sigma$};
    \end{tikzpicture} 
\end{equation*}
\end{minipage}%\\
	\caption{Left, the $X_\sigma$ defect sweeps past a $\sigma$ primary, turning it into the disorder operator $\mu$, plus a topological $\bbZ_2$ tail. Such a $\mu$ is traditionally called a ``twist-field.'' Right, the $X_\sigma$ defect sweeps past the relevant operator $\epsilon$ and turns it into $-\epsilon$.}
	\label{fig:IsingRules}
\end{figure}
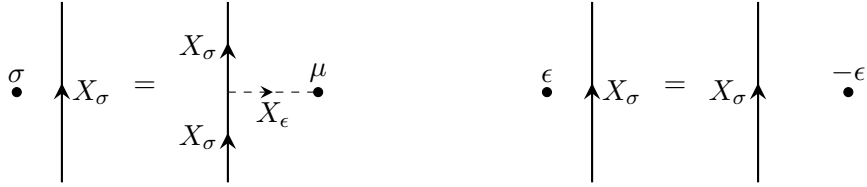

Famously, in the Ising model (not necessarily at criticality), the correlator of spins at some inverse temperature $\beta$ is equal to the correlator of disorder/twist operators at the Kramers-Wannier dual inverse temperature $\tilde\beta$, i.e. $\langle \sigma(z_1)\cdots\sigma(z_n) \rangle_\beta = \langle \mu(z_1)\cdots\mu(z_n) \rangle_{\tilde\beta}$. This exhibits some  duality between weak and strong couplings and can be used to understand the critical point of the Ising model \cite{KramWan, Frohlich_2004} (see also \cite{Aasen:2016dop, aasen2020topological}).

However, the disorder fields $\mu$ are not local operators in the theory alone, but have to be accompanied by a topological ``line of frustration,'' which has to end similarly on another $\bbZ_2$ ``twist-field.'' In the CFT context above, this line of frustration is the $\bbZ_2$ symmetry line $X_{\epsilon}$, and the $\bbZ_2$ twisted Hilbert space is $\mathcal{H}_{\epsilon} = \mathrm{span}\{\mu, \psi,\bar{\psi}\}$.

A statistical interpretation of the $X_{\sigma}$ line is not immediately clear, but can be realized by first studying how it acts on the primaries in the CFT picture. In particular, when we sweep the $X_{\sigma}$ line past a $\sigma$ insertion, we are left with a $\mu$ insertion and a topological $X_\epsilon$ tail. But in the orbifold CFT $[T/\bbZ_2]$, $\mu$ is a local operator (the $\bbZ_2$ line becomes invisible when we gauge), and $\sigma$ must be accompanied by a topological tail.

Moreover, when $X_\sigma$ sweeps past an $\epsilon$ insertion, it leaves behind a $-\epsilon$ insertion. The energy operator $\epsilon$ is a relevant operator in the Ising CFT and deformations by it flow the Ising CFT to the high-temperature (unbroken) or low-temperature (broken) phases. The action of $X_\sigma$ reflects the fact that the Ising model and its orbifold see this deformation oppositely. This also explains why Kramers-Wannier duality holds away from criticality.

For these reasons, $X_\sigma$ is the archetypal \textit{duality defect}. It extends the algebra of topological defects in the Ising model from a collection of $\bbZ_2$ group-like defects to a Tambara-Yamagami category (see Section \ref{sec:FusionCats}).

It is important to stress two basic facts about the duality defect line. The first is that it is \textit{not} the non-simple topological defect line $X_{\textrm{Proj.}}:=X_{\mathds{1}}\oplus X_\epsilon$. Inserting a triangulating mesh of $X_{\textrm{Proj.}}$ into the original theory $T$ just produces $[T/\bbZ_2]$. Indeed, $X_{\sigma}$ separates the theory $T$ from $[T/\bbZ_2]$. The second basic fact is that it is not the orbifold duality defect line $X_{\textrm{Proj.}}$... but it does square to it! This is important because when the $X_\sigma$ line collides with itself (e.g. when deformed around a cycle of the torus) it does produce a $X_{\textrm{Proj.}}$ defect.

It is also instructive to compare the twisted and ``defected'' partition functions for this theory. The twisted partition functions can be written in terms of the usual Virasoro characters as
\begin{align}
    Z_T[0,0] 
        &= \lvert\chi_0\rvert^2+\lvert\chi_\frac{1}{2}\rvert^2+\lvert\chi_\frac{1}{16}\rvert^2\,,\\
    Z_T[0,1] 
        &= \lvert\chi_0\rvert^2+\lvert\chi_\frac{1}{2}\rvert^2-\lvert\chi_\frac{1}{16}\rvert^2\,,\\
    Z_T[1,0] 
        &= \chi_0\bar{\chi}_{\frac{1}{2}}+\chi_{\frac{1}{2}}\bar{\chi}_0+\lvert\chi_\frac{1}{16}\rvert^2\,,\\
    Z_T[1,1] 
        &= -\chi_0\bar{\chi}_{\frac{1}{2}}-\chi_{\frac{1}{2}}\bar{\chi}_0+\lvert\chi_\frac{1}{16}\rvert^2\,.
\end{align}
While the defected partition function, with an $X_{\sigma}$ inserted around the spatial $S^1$, is easily computable with the help of Equation \eqref{eq:VerlindeAction} for Verlinde lines on primaries
\begin{equation}
    Z_T[0,X_\sigma] = \sqrt{2}\lvert\chi_0\rvert^2 - \sqrt{2}\lvert\chi_{\frac{1}{2}}\rvert^2\,.
\end{equation}

Such a structure for the defect partition function is typical, and can be understood from the second important fact about $X_\sigma$. Since $X_\sigma \otimes X_\sigma = X_{\mathds{1}} \oplus X_\epsilon = X_{\mathrm{Proj.}}$ we can compute (without identifying $X_\sigma$ with the Verlinde line) that $\hat{X}_{\mathrm{Proj}.}\ket{\phi} = 2 \ket{\phi}$ if $\phi$ corresponds to a $\bbZ_2$ uncharged operator, and is $0$ otherwise. This means that
\begin{equation}
    \hat{X}_\sigma \ket{\phi} = \pm \sqrt{2} \ket\phi
\end{equation}
if $\phi$ is uncharged under the $\bbZ_2$, and is $0$ otherwise. Hence the defected partition function is just a sum over all of the uncharged primaries, and all we have to do is resolve a sign.

\subsection{Fusion Categories for Physics}
\label{sec:FusionCats}
Here we will recap the pertinent properties of the 2 most important fusion categories for our purposes: the fusion category of $G$ graded vector spaces $\Vc_G$ and Tambara-Yamagami categories $\TY(A,\chi,\tau)$.

\subsubsection{Fusion Category: Graded Vector Spaces \texorpdfstring{$\Vc_G$}{VecG}}
Given a finite symmetry group $G$, the topological defects associated to the symmetry group form a unitary fusion category in a natural way. Simple objects of the category are oriented lines labelled by group elements $g,h\in G$. If two such lines are brought sufficiently close together, in parallel, with the same orientation, then they fuse according to the group multiplication law
\begin{equation}
X_g \otimes X_h = X_{gh}\,\qquad \textrm{for $g,h\in G$.}
\end{equation}
The category is unitary because the orientation reversal of a line is (by definition) the inverse $X_g^* = X_g^{-1}=X_{g^{-1}}$.\footnote{Since we are technically dealing with objects in a category, by $=$ we mean that there exists a natural isomorphism between objects in the category.}

Given three simple lines labelled by $g,h,k\in G$, the data of an associator isomorphism $\alpha$ is needed to relate the object $(X_g\otimes X_h) \otimes X_k$ to $X_g\otimes (X_h \otimes X_k)$. In practice, this means that there can be an overall phase ambiguity between the background connection depicted with $X_g$ and $X_h$ fusing first or $X_h$ and $X_k$ fusing first as in Figure \ref{fig:associator}.

\begin{figure}
%\begin{minipage}{.45\textwidth}
\centering
\begin{equation*}
	\begin{tikzpicture}[thick, scale = 1, baseline = 0]
    %\draw (-2,-2) rectangle (2,2);
    %\fill[black] ({-0.65, -2/3}) circle (0.125);
    %\fill[black] ({0.1, 2/3}) circle (0.125);
    
    \draw[->- = .5 rotate 0] (-1.4,-2) -- (-0.67,-0.68);
    \draw[->- = .5 rotate 0] (-0.67,-0.68) -- (0.08,0.65);
    \draw[->- = .5 rotate 0] (0.08,0.65) -- (0.08,0.65+1.52);
    \draw[->- = .5 rotate 0] (0.1,-2) -- (-0.65,-2/3);
    \draw[->- = .5 rotate 0] (1.6,-2) -- (0.08,0.65);
    
    %\draw[left] (-1.1,-1.46) node {$g$};
    %\draw[left] ( 0.3,-1.4) node {$h$};
    %\draw[left] ( 1.3,-1.4) node {$k$};
    %\draw[left] ( -0.2, 0.1) node {$gh$};
    %\draw[left] ( 1.4, 1.3) node {$ghk$};
    
    \draw[below] (-1.4,-2) node {$X_g$};
    \draw[below] ( 0.1,-2) node {$X_h$};
    \draw[below] ( 1.6,-2) node {$X_k$};
    
    \draw[above] (0.08,0.65+1.52) node {$(X_g\otimes X_h)\otimes X_k$};    
    
    %\draw[left] (-0.65,-2/3) node {$c(g,h)$};
    %\draw[left] (0.1,2/3) node {$c(gh,k)$};
    \end{tikzpicture} \,= \,\alpha(g,h,k)\,
    \begin{tikzpicture}[thick, scale = 1, baseline = 0]
    %\draw (-2,-2) rectangle (2,2);
    %\fill[black] ({-0.65, -2/3}) circle (0.125);
    %\fill[black] ({0.1, 2/3}) circle (0.125);
    
    \draw[->- = .5 rotate 0] (1.4,-2) -- (0.67,-0.68);
    \draw[->- = .5 rotate 0] (0.67,-0.68) -- (-0.08,0.65);
    \draw[->- = .5 rotate 0] (-0.08,0.65) -- (-0.08,0.65+1.52);
    \draw[->- = .5 rotate 0] (-0.1,-2) -- (0.65,-2/3);
    \draw[->- = .5 rotate 0] (-1.6,-2) -- (-0.08,0.65);
    
    \draw[below] (1.4,-2) node {$X_k$};
    \draw[below] (-0.1,-2) node {$X_h$};
    \draw[below] (-1.6,-2) node {$X_g$};
    
    \draw[above] (-0.08,0.65+1.52) node {$X_g\otimes (X_h\otimes X_k)$};  
    
    %\draw[right] (1.1,-1.46) node {$k$};
    %\draw[right] (-0.3,-1.4) node {$h$};
    %\draw[right] (-1.3,-1.4) node {$g$};
    %\draw[right] ( 0.2, 0.1) node {$hk$};
    %\draw[right] (-1.4, 1.3) node {$ghk$};
    
    %\draw[left] (-0.65,-2/3) node {$c(g,h)$};
    %\draw[left] (0.1,2/3) node {$c(gh,k)$};
    \end{tikzpicture}
\end{equation*}
%\end{minipage}
	\caption{Mathematically, the associator $\alpha$ is an element of $\Hom((X_g\otimes X_h) \otimes X_k, X_g\otimes (X_h \otimes X_k))$ that defines in what way the two tensor products are equal. Physically, given a theory with non-trivializable anomaly $\alpha \in H^3(G,U(1))$, two networks of symmetry defect lines can be different up to a $U(1)$ phase.}
	\label{fig:associator}
\end{figure}
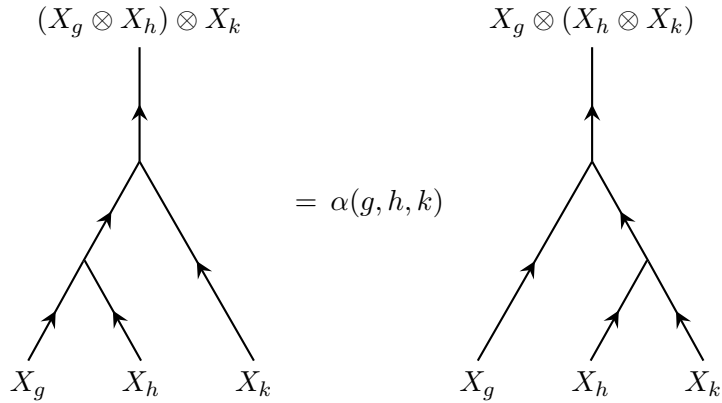

Such an $\alpha$ is required to satisfy the familiar ``pentagon identity,'' which asserts that whatever $\alpha$ is, it must agree when we consider the multiple ways to fuse four $G$ defects as in Figure \ref{fig:pentagon}.\footnote{Readers may be familiar with the data of an associator presented in the form of an $F$-symbol or $6j$-symbol.} In this case, such $\alpha$ are classified by an element of the group cohomology $H^3(G, U(1))$ (see e.g. \cite{Faddeev:1985iz, DW90}). Such phase ambiguities in coupling a theory to a background $G$ connection, which cannot be removed by adding local counterterms, are the 't Hooft anomalies from the physics literature.

\begin{figure}
\begin{center}
    \begin{tikzcd}[thick]
    & (X_g\otimes X_h) \otimes (X_k \otimes X_{\ell}) \arrow[rd, "\alpha "] & \\
    ((X_g\otimes X_h) \otimes X_k) \otimes X_{\ell} \arrow[ru, "\alpha"] \arrow[d, "\alpha \otimes 1"'] & & X_g\otimes (X_h \otimes (X_k \otimes X_{\ell})) \\
    (X_g\otimes (X_h \otimes X_k)) \otimes X_{\ell} \arrow[rr, "\alpha"'] & & X_g\otimes ((X_h \otimes X_k) \otimes X_{\ell}) \arrow[u, "1 \otimes \alpha"']
    \end{tikzcd}
\end{center}
\caption{Given 4 defect lines $X_g$, $X_h$, $X_k$, and $X_{\ell}$, we only demand that the pentagon diagram commutes. This is familiar to those in RCFT as the pentagon identity for F-symbols.}
	\label{fig:pentagon}
\end{figure}

All considered, such a collection of defects labelled by elements of $G$ with a choice of group cohomology class $\alpha \in H^3(G,U(1))$ form the fusion category $\Vc_G^\alpha$ of $G$-graded vector spaces. In practice, we are going to be interested in the case where the symmetry in question does not have an anomaly, so that we can orbifold by that symmetry, the fusion category $\Vc_G$.

It may not be a surprising fact, when paralleled with our discussion of coupling a CFT in the regular group symmetry framework, that the group of ``autoequivalences'' of the fusion category $\Vc_G$ (basically, automorphisms of $\Vc_G$ as a fusion category) is $\Aut(G) \ltimes H^2(G,U(1))$. The $\Aut(G)$ roughly corresponding to renaming of $G$ lines, and $H^2(G,U(1))$ corresponding to stacking with 2d SPT phases \cite{Bhardwaj_2018}.

\subsubsection{Fusion Category: Tambara-Yamagami  \texorpdfstring{$\TY(A,\chi,\tau)$}{TY(A,X,t)}}
In the previous subsubsection we started with a set of desired fusion rules (and unitary structure) and then were left with classifying the associators compatible with those fusion rules/unitary structure and pentagon identity. This is the same thing that Tambara and Yamagami do in a ground-up approach in their seminal paper \cite{tambara1998tensor} (see also \cite{tambara2000representations}), except their fusion rules are minimally enriched by a duality defect.

Start with a finite Abelian group $A$, the simple objects of $\TY(A,\chi,\tau)$ are again associated with elements $a\in A$, but the fusion algebra is now extended by an additional object $m$ of quantum dimension $\sqrt{\abs{A}}$.\footnote{In fact, such a fusion ring makes sense if the group is not Abelian, but it is only categorifiable if $A$ is Abelian (see \cite{tensorCatBook} Example 4.10.5).} In addition to the familiar $\Vc_A$ fusion rules, we also demand
\begin{align}
    X_a \otimes X_m &= X_m\,,\\
    X_m \otimes X_m &= \bigoplus_{a\in A} X_a\,.
\end{align}

What Tambara and Yamagami show, by direct computation, is that the data of potential associators is classified by a symmetric non-degenerate bicharacter $\chi:A\times A \to U(1)$, and a sign or ``Frobenius-Schur indicator'' $\tau = \pm 1/\sqrt{\abs{A}}$ \cite{tambara1998tensor,tambara2000representations} (see also \cite{homotopyFusion, Bhardwaj_2018, thorngren2019fusion, aasen2020topological}). In particular, the non-trivial associators are
\begin{align}
    \alpha_{a,m,b} &= \chi(a,b)\id_m\,,\\
    \alpha_{m,a,m} &= \bigoplus_b \chi(a,b) \id_b\,,\\
    \alpha_{m,m,m} &= (\tau\chi(a,b)^{-1}\id_m)_{a,b}\,.
\end{align}

A first point of note, which will be relevant later, is that TY categories are a $\bbZ_2$-extension of $\Vc_A$. In general, a $G$-extension of a fusion category $\mathcal{D}$ is a $G$-graded fusion category
\begin{equation}
    \mathcal{F} = \bigoplus_{g\in G} \mathcal{F}_g\,,
\end{equation}
satisfying $\mathcal{F}_e \cong \mathcal{D}$.

Another thing to note is that $\chi:A\times A \to U(1)$, as opposed to the more familiar pairing of $A$ with $\hat{A}$. This means $\chi$ is an assignment of (magnetic) $A$ charges to $A$ defect lines, in contradistinction to the natural Fourier-like pairing in orbifolds $\rho:A\times \hat{A} \to U(1)$. Pushing further, a bicharacter $\chi$ defines a homomorphism $A\to \hat{A}$, and non-degeneracy implies that this is an isomorphism. Since $T$ and $[T/A]$ have $A$ and $\hat{A}$ symmetry respectively, this $\chi$ reflects the freedom of choice in picking an isomorphism from $A$ to $\hat{A}$ (or rather reflects the non-canonicality of $A\cong \hat{A}$). By the same logic, $\chi$ determines an isomorphism from $\hat{A}\to A$, and while $A$ and its dual $\hat{A}$ are not canonically isomorphic, $A$ \textit{is} canonically isomorphic to its double dual $\hat{\hat{A}}$. The symmetric property ensures that whatever isomorphism $\chi$ determines from $A$ to $\hat{A}$, and $\hat{A}$ to $A$, that it respects the canonicality $A=\hat{\hat{A}}$ i.e. it is honestly the identity on $A$ and not some random automorphism of $A$.

The role of $\tau$ is less physically obvious, and shows up in the crossing relations for $m$ \cite{thorngren2019fusion}. It originates categorically as an associativity-like constraint similar in spirit to the associator in $\Vc_A^\alpha$ \cite{homotopyFusion}, but could be better understood physically.

As a closing remark, the Ising Fusion category that we are familiar with forms $\TY(\bbZ_2,1,+1/\sqrt{2})$ (see Section 4.1 of \cite{thorngren2019fusion} for elaboration).

\section{Duality Defects in \texorpdfstring{$(E_8)_1$}{(E8)1}}\label{sec:DDinE8}
The main goal of this paper is to classify and construct $\bbZ_m$ duality defects in the chiral WZW-model $(E_8)_1$. Said differently, we want to understand the actions of $\bbZ_m$ Tambara-Yamagami categories on the holomorphic lattice VOA $V_{E_8}$.

Vertex Operator Algebras are widespread in physics as the axiomatization of ``chiral algebras,'' which glue together to form a ``full CFT'' in two dimensions. We refer readers to the standard mathematical texts on VOAs \cite{frenkel1989vertex, dong1993generalized, kac1998vertex, Frenkel:2004jn, lepowsky2004introduction} or any standard textbook on 2d CFT or string theory for an introduction (a nice review for both communities appeared in \cite{Bae:2020pvv}). Lattice VOAs are also prominent in the physics literature, but understanding subtleties of their construction, automorphisms, and characters is important for our purposes, so we review them briefly in the following sections. Some particularly helpful reviews on lattice VOAs besides those already listed include \cite{Dolan:1989vr, dolan1996, van_Ekeren_2020, moller2017cyclic}.

\subsection{Lattice Vertex Operator Algebras}\label{sec:LVOACon}
Given a positive definite even lattice $L$ one can construct a lattice VOA $V_L$. Mathematically, $V_L = M_{\hat{\mathfrak{h}}}(1) \otimes \bbC_{\epsilon}[L]$ where the first factor is the ``Heisenberg VOA'' describing oscillator modes of chiral bosons (also called $\mathrm{Bos}(\mathfrak{h})$), and the second factor is the ``twisted group algebra'' describing ``quantized momentum.''\footnote{See e.g. Section 5 of \cite{moller2017cyclic} and references listed within for the formal mathematical construction of these objects and proof $V_L$ is rational, $C_2$-cofinite, self-contregredient, and of CFT type.} 

We start the construction with the quantized momentum modes: for every vector $\alpha\in L$ we have a state $\ket{\alpha}$ which is created by the vertex operator $\Gamma_{\alpha}$. By definition, the states are orthonormal $\braket{\alpha}{\beta}=\delta_{\alpha\beta}$. There are also chiral bosonic oscillators $a_n^i$, $n\in\bbZ$ and $1 \leq i \leq \rk(L)$, satisfying the usual Heisenberg commutation relations
\begin{equation}
    [a_m^i, a_n^j] = m\delta^{ij}\delta_{m+n,0}\,,
\end{equation}
with $a_m^{i\dag} = a_{-m}^i$.

Combining the two, the positive oscillator modes should annihilate the $\ket{\alpha}$
\begin{equation}
    a_n^i \ket{\alpha} = 0\,,\qquad \textrm{if $n > 0$,}
\end{equation}
and $a_0^i$ should actually behave as the momentum operator, i.e.
\begin{equation}
    a_0^i \ket{\alpha} = \alpha^i \ket{\alpha}.
\end{equation}
The physical Hilbert space is thusly generated by the $a_{-n}^i$, $n > 0$, acting on the $\ket{\alpha}$, so that a general basis state is of the form
\begin{equation}\label{eq:statesLVOA}
    a^{i_N}_{-n_N}\cdots a^{i_2}_{-n_2} a^{i_1}_{-n_1}\ket{\alpha}\,.
\end{equation}
Such a state has integral conformal weight
\begin{equation}
    h = \frac{1}{2} ( \alpha, \alpha ) + \sum_{i=1}^N n_i\,,
\end{equation}
where $(\,.\,,\,.\,)$ denotes the inner product on the lattice $L$.

As mentioned, the vertex operator $\Gamma_\alpha$ creates the state $\ket{\alpha}$, and one expects that $\Gamma_\alpha \Gamma_\beta \propto \Gamma_{\alpha+\beta}$. However, asking for $\Gamma$ to be a representation of the lattice is too strong; locality requires us to consider projective representations, i.e.
\begin{equation}
    \Gamma_\alpha \Gamma_\beta = \epsilon(\alpha,\beta)\Gamma_{\alpha+\beta}\,,
\end{equation}
where $\epsilon: L\times L \to \{\pm1\}$ is a (normalized) 2-cocycle satisfying $\epsilon(\alpha,\beta) = (-1)^{(\alpha,\beta)}\epsilon(\beta,\alpha)$. This skew condition on the 2-cocycle determines $\epsilon$ up to a coboundary, and so $\epsilon \in H^2(L,\{\pm1\})$, which will play a role for us in understanding automorphisms.\footnote{In other words, the $\Gamma_\alpha$ furnish, at minimum, a representation of a double cover $\hat{L}$ of the root lattice $L$. The freedom of choice in $\epsilon$ (due to the central extension defining the double cover not splitting) reflects the fact that there is no functor from root lattices to vertex algebras, i.e. a choice of $\epsilon$ is required \cite{theobook}. Double covers of $L$ coming from differing choices of $\epsilon$ are isomorphic, which is why one can non-canonically write $V_L = M_{\hat{\mathfrak{h}}}(1)\otimes\bbC[L]$ if they settle some signs once and for all.} 

%In short, an even lattice $L$ determines a lattice VOA up to sign ambiguities in the lift from $L$ to the twisted group algebra. 
When $L$ is the root lattice of a simply laced Lie algebra $\mathfrak{g}$, this construction gives a vertex-representation of the $\hat{\mathfrak{g}}_1$ WZW model (see e.g. Section 15.6.3 of \cite{francesco2012conformal}). In this case, the sign ambiguities which occur in trying to build $V_L$ from the root lattice $L$ are precisely the same as those in trying to build the Lie algebra $\mathfrak{g}$ from $L$. For non-simply laced algebras (or $k > 1$), one must add a free fermion (or parafermion) respectively.

Up to isomorphism, the \textit{untwisted} irreducible modules of $V_L$ are labelled by elements of the discriminant group $D_L := L^*/L$, here $L^*$ is the dual lattice to $L$ \cite{dong1993vertex}. The MTC of representations $\Rep(V_L)$ has $D_L$ group-like fusion in the obvious way. For some $\lambda \in L^*$, the character of the irreducible $V_L$-module $V_{\lambda+L}$
\begin{equation}
    \chi_{V_{\lambda+L}}(\tau) = \tr_{V_{\lambda+L}}q^{L_0-\frac{c}{24}} = \frac{\theta_{\lambda+L}(\tau)}{\eta(\tau)^{\rk(L)}}\,,
\end{equation}
where $\theta_{\lambda+L}(\tau)$ is the theta function of the shifted lattice $\lambda+ L$, i.e.
\begin{equation}
    \theta_{\lambda+L}(\tau) = \sum_{\xi \in \lambda + L} q^{\frac{1}{2}( \xi , \xi )}\,.
\end{equation}

\subsubsection{Example: \texorpdfstring{$V_{E_8}$}{VE8} as a \texorpdfstring{$V_{E_8}$}{VE8} Module} \label{ex:VE8}
The $E_8$ root lattice is positive-definite, even, and self-dual, therefore $V_{E_8}$ itself is the only irreducible $V_{E_8}$ module (aka the ``adjoint module''). VOAs whose only irreducible module are the adjoint module are called \textit{holomorphic}, i.e. $\Rep(V_{E_8}) = \Vc$.

Using the description of the $E_8$ lattice as points in $\bbZ^8\cup(\bbZ+\frac{1}{2})^8$, whose coordinates sum to an even integer, we obtain the $E_8$ partition function
\begin{align}
    \tr_{V_{E_8}} q^{L_0 - \frac{c}{24}} 
        &= \frac{1}{\eta(\tau)^{8}}\sum_{\xi \in E_8} q^{\frac{1}{2}( \xi,\xi)}\\
        &= \frac{1}{2\eta(\tau)^{8}}\sum_{x \in \bbZ^8} \left(1+(-1)^{\sum{x_i}}\right)\left(q^{\frac{1}{2}\sum{x_i^2}}+q^{\frac{1}{2}\sum{(x_i+\frac{1}{2})^2}}\right)\\
    Z_{E_8}(\tau)    
        &= \frac{1}{2\eta(\tau)^{8}}\left(\theta_1(\tau)^8+\theta_2(\tau)^8+\theta_3(\tau)^8+\theta_4(\tau)^8\right) \label{eq:E8partition}
\end{align}
Note: even though this expression is not modular invariant, since $c_L-c_R \in 8\bbZ$ its modular non-invariance can be cured by a (2+1)d gravitational Chern-Simons term \cite{Alvarez-Gaume:1983ihn, Witten:1988hf} (see \cite{Kaidi:2021gbs} for a recent discussion).

\subsection{Automorphisms and Twisted Partition Functions}\label{sec:Automorphisms}
\subsubsection*{Describing Automorphisms}
Recall that a VOA $V$ has an automorphism/symmetry given by an (invertible) linear map $\hat{g}: V \to V$, if the action of $\hat{g}$ preserves the vacuum $\hat{g}\ket{0} = \ket{0}$, preserves the (chiral) stress-tensor $\hat{g}T(z)= T(z)$, and commutes with the state operator map
\begin{equation}
    (\hat{g}\phi)(z) = \hat{g}\phi(z)\hat{g}^{-1}\,.
\end{equation}

Suppose $V_L$ is a lattice VOA constructed from the even lattice $L$. The subgroup of $\Aut(V_L)$ which maps the subalgebra of bosonic oscillators to itself is given by the (usually) non-split extension $T.O(L)$. Here $O(L)$ are the symmetries of the lattice itself and $T := \bbR^{\rk(L)}/L^*$ is the torus dual to $L$. $T$ carries the natural action of $O(L)$ viewed as a subgroup of $O(\rk(L),\bbR)$ \cite{dong1999automorphism}.

The subgroup $T < T.O(L)$ acts trivially on the bosonic oscillators, and simply modifies the group algebra of quantized momentum modes by phases. That is, for any $x\in T$ we have the induced map
\begin{align}
    a^i_n &\mapsto a^i_n \,, \\
    \Gamma_\alpha &\mapsto e^{2\pi i (x,\alpha)} \Gamma_\alpha \,.\label{eq:innerAutomorphismT}
\end{align}
Note again that this map is actually trivial if $x\in L^*$, and so we are only interested in $x \Mod L^*$. Conversely, the $O(L)$ factor acts on the bosons and also permutes the $\Gamma_\alpha$ \cite{lepowsky1985calculus, dong1999automorphism, moller2017cyclic, van_Ekeren_2020, Hohn:2020xfe}  (see also Section 4.4 of \cite{evans2020tambarayamagami}).

In this vein, suppose $g \in O(L)$ is any automorphism of the lattice $L$, then $g$ preserves the skew form $(-1)^{(\,\cdot\,,\,\cdot\,)}$. Since the skew condition determines the 2-cocycle $\epsilon$ up to a 2-coboundary, we see that $\epsilon(\alpha,\beta)$ and $\epsilon(g\alpha,g\beta)$ are in the same cohomology class; and so
\begin{equation}
    \frac{\epsilon(\alpha,\beta)}{\epsilon(g\alpha,g\beta)} = \frac{u(\alpha) u(\beta)}{u(\alpha + \beta)} \label{eq:liftExists}\,,
\end{equation}
for all $\alpha,\beta \in L$ and some 1-cocycle $u:L\to \{\pm1\}$.

It is a classic result that such a $g\in O(L)$ and choice of 1-cocycle $u$ define a lift to an automorphism $\hat{g}_u\in T.O(L) < \Aut(V_L)$ \cite{lepowsky1985calculus,dong1999automorphism}. This automorphism also acts in a nice way
\begin{align}
    \hat{g}_u(a_n)
        &=(g a)_n \,, \label{eq:liftCartan}\\
    \hat{g}_u\Gamma_{\alpha} 
        &= u(\alpha) \Gamma_{g\alpha} \label{eq:liftRoot}\,.
\end{align}
Furthermore, one can always choose ``a standard lift'' such that $u(\alpha)=1$ on $L^g$ \cite{lepowsky1985calculus}. We will not work with standard lifts right away, but with foresight we quote the fact that: any two standard lifts are conjugate in $\Aut(V_L)$ \cite{van_Ekeren_2020}.\footnote{The term standard lift is a bit of a misnomer and does not have the properties one might expect: for example, there can be more than one ``standard'' lift; moreover, since the extension $T.O(L)$ is \textit{not} split (i.e. it is not a semi-direct product), the product of standard lifts is not necessarily standard. All of this richness can be traced back to the fact that there is no canonical way to construct a Lie algebra and lattice VOA from the root lattice $L$. For further discussion see \cite{van_Ekeren_2020, moller2017cyclic, theobook} and references within.}

Given a positive definite even lattice $L$ and a lifted automorphism $\hat{g}_u \in \Aut(V_L)$, the irreducible \textit{$\hat{g}_u$-twisted} modules can be constructed and classified similar to the untwisted modules \cite{dong1996algebraic, bakalov2004twisted, moller2017cyclic}. In short, they are labelled by elements of the discriminant $L^*/L$ fixed under the $g$ action. i.e. they are of the form $V_{\lambda + L}$, where $\lambda + L \in (L^*/L)^g = \{\lambda\in L^*/L : (\id - g)\lambda\in L \}$. 

The twisted characters are
\begin{equation}\label{eq:twistedPartition}
    \tr_{V_L} \hat{g}_u q^{L_0 - \frac{c}{24}} =\frac{ \theta_{L^g, u}(\tau)}{\eta_g(\tau)}\,,
\end{equation}
where we've defined the generalized theta function
\begin{equation}
    \theta_{L^g, u}(\tau):=\sum_{\alpha \in L^g} u(\alpha) q^{( \alpha, \alpha )/2} \,.
\end{equation}
The eta product $\eta_g(\tau)$ is given by
\begin{equation}\label{eq:eta_product}
    \eta_g(\tau) := \prod_{t | m}\eta(t \tau)^{b_t},
\end{equation}
when $g\in O(L)$ has cycle shape $\prod_{t|m} t^{b_t}$. The cycle shape can be determined by calculating the characteristic polynomial of the matrix representing the automorphism, which is of the form $\prod_{t|m} (\lambda^{t} - 1)^{b_t}$. 

More physically, factors of $\eta(\tau)$ arise in characters/partition functions from the contributions of each independent tower of (chiral) bosonic modes. If $g$ is, for example, a $\bbZ_2$ symmetry swapping the $a^1_m$ and $a^2_n$ towers, then one expects to get factors of $\eta(2\tau)$ in the vacuum character of the invariant sub-VOA due to the ``extra length'' of the combined $\bbZ_2$-invariant tower.

Finally, given an order $m$ automorphism $g\in O(L)$, a standard lift $\hat{g}_u\in\Aut(V_L)$ has: order $m$ if $m$ is odd; order $m$ if $m$ is even and $(\alpha,g^{m/2}\alpha) \in 2\bbZ$ for all $\alpha$; and order $2m$ otherwise (see e.g. Corollary 5.3.6 of \cite{moller2017cyclic}).

All of these technicalities about canonicality, lifts, and order doubling may seem overkill, but actually plays an important role in constructing as many duality defects as possible.

\subsubsection*{Finding Automorphisms}
To understand how to find automorphisms of lattice VOAs, it is helpful to expand on the structure of the automorphism group of a VOA and its relationship to Lie algebras and physics.

We use the following fact for orientation: For a VOA $V$ of ``CFT-type'' (like lattice VOAs or the Monster), the weight one subspace $V_1$ is a (possibly trivial) complex Lie algebra with bracket $[u,v]=u_0 v$. Moreover, all the weight spaces of $V$-modules are Lie algebra modules for $V_1$ (including the weight spaces of the adjoint module, i.e. the $V_n$), so that we can see the VOA and its modules as strata of $V_1$ Lie algebra modules \cite{frenkel1993axiomatic}. From a physics perspective, we are not surprised that the currents form a Lie algebra under which things transform, this is just Noether's theorem and Wigner's theorem.

There are two extreme examples for the Lie algebra $V_1$ in relation to $V$: the symmetry currents $V_1$ generate $V$ and $\Aut(V_1) \cong \Aut(V)$, as in affine Kac-Moody VOAs or their semi-simple quotients; or alternatively, there are no currents and $V_1 = 0$, as with the Virasoro algebra $\Aut(\mathrm{Vir})= \{\mathds{1}\}$, and the Monster CFT $\Aut(V^\natural) = \mathbb{M}$ \cite{dong1999automorphism}.

The key move is to note that any automorphism/symmetry $\hat{g}\in \Aut(V)$ restricts to a Lie algebra automorphism of $V_1$.\footnote{A \textit{super}symmetry would relate operators of $V_1$ to operators with different spins.} Hence we ask: \textit{to what extent can we understand $\Aut(V)$ by lifting automorphisms of the Lie algebra $\Aut(V_1)$?}

The answer to this question is known for lattice VOAs. For lattice VOAs, the weight one subspace is
\begin{equation}\label{eq:weightOneSubspace}
    (V_L)_1 = \mathcal{H} \oplus \bigoplus_{\alpha^2=2} \bbC\{\Gamma_\alpha\}\,,
\end{equation}
where $\mathcal{H}$ is a Cartan subalgebra spanned by the weight-1 chiral bosons and the $\bbC\{\Gamma_\alpha\}$ play the role of the root spaces. The exact Lie algebra structure constants depend on the choice of $\epsilon$, but all choices are isomorphic to a Lie algebra $\mathfrak{g}$ with root lattice $L$.

Using this, in \cite{dong1999automorphism} the authors prove that $\Aut(V_L)$ is a non-split product
\begin{equation}
    \Aut(V_L) = K.O(\hat{L})\,,
\end{equation}
where $K = \langle \{e^{u_0}: u\in (V_L)_1\} \rangle$ is the inner automorphism group generated by (exponentials of) ``currents.'' In other words, we have that $\mathrm{der}(V_L) := \mathrm{Lie}(\Aut(V_L)) = \mathfrak{g}$, and so we will be interested in finite order automorphisms of $\mathfrak{g}$.\footnote{If we were studying the derivations of a non-rational VOA, whose representations need not form an MTC and thus not be so discrete, its possible that $\mathrm{der}(V)$ isn't entirely formed by currents. In this case we could ask about infinitesimal outer automorphisms $\mathrm{out}(V) := \mathrm{der}(V)/\{\mathrm{currents}\} = \mathrm{der}(\Rep(V))$, which correspond to smooth deformations of the representation category. This is not exotic: consider $n$ free chiral bosons, $V=\mathrm{Bos}(n)$, in this case $\Aut(\Rep(V)) \cong O(n)$.} Note that what we called $T$ before is the maximal toral subgroup, i.e. $T = \langle \{e^{u_0}: u\in \mathcal{H} \} \rangle$.

Beautifully, conjugacy classes of finite order automorphisms of (semi-)simple Lie algebras were classified by Kac \cite{kats1969automorphisms, kac_1990} (also see Section 8.3.3 of \cite{theobook} and \cite{moller2017cyclic} for examples). This includes both inner and outer automorphisms! Even better, the classification gives an easy way to read off the fixed point Lie subalgebra under that symmetry! The theorem is as follows: 

\begin{theorem}[Theorem 8.6, \cite{kac_1990}]\label{th:KacClass}
Let $\mathfrak{g}$ a finite dimensional simple Lie algebra with Dynkin diagram $X_n$, choose $k=1,2,3$ and write $X^{(k)}_n$ for the corresponding affine Dynkin diagram (or ``twisted Dynkin diagram'' if $k>1$) with $\ell+1$ nodes. Then choose non-negative relatively prime integers $s=(s_0,\cdots,s_\ell)$ and set 
\begin{equation}
    m = k \sum_{i=0}^\ell a_i s_i\,,
\end{equation}
where $a_i$ are the marks of $X_n^{(k)}$. Then the following statements are true:
\begin{enumerate}
    \item The choices $(k,s)$ define an order $m$ automorphism $g_{k,s}\in\Aut(\mathfrak{g})$. For $k=1$, if one chooses a set of simple roots and works in the usual Chevalley basis, the action is given by
        \begin{equation}
            g_{1,s}(E^{j}) = e^{2 \pi i s_j/m} E^{j}\,.
        \end{equation}
    For $k=2,3$ the definition of $E^{j}$ is slightly more complicated (see \cite{kac_1990}).
    \item Up to conjugation, all order $m$ automorphisms of $\mathfrak{g}$ are obtained this way.
    \item Two automorphisms $g_{k, s}$ and $g_{k', s'}$ obtained in this way are conjugate by an automorphism of $\mathfrak{g}$ if and only if $k = k'$ and the sequence $s$ can be transformed into $s'$ by an automorphism of the diagram $X^{(k)}_n$.
    \item The number $k$ is the least positive integer for which $g^k_{k,s}$ is an inner automorphism. i.e. $k=1$ automorphisms are inner, and $k=2,3$ automorphisms are outer.
    \item Let $i_1, \cdots , i_p$ be all the indices such that $s_{i_1} = \cdots = s_{i_p} = 0$. Then the fixed-point Lie subalgebra $\mathfrak{g}^{g_{k,s}}$ is isomorphic to a direct sum of: the $(\ell-p)$-dimensional center $\cong \bbC^{\ell-p}$, and a semisimple Lie algebra whose Dynkin diagram is the subdiagram of the affine diagram $X^{(k)}_n$ consisting of the nodes $i_1, \cdots, i_p$.
\end{enumerate}
\end{theorem}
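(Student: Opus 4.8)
The plan is to prove all five statements through the structure theory of affine Kac--Moody algebras, realizing finite-order automorphisms of $\mathfrak{g}$ as $\mathbb{Z}/m$-gradings that descend from the affine algebra of type $X_n^{(k)}$.

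\textbf{Set-up and reduction.} First I would recall that $\Aut(\mathfrak{g})/\mathrm{Inn}(\mathfrak{g})$ is the automorphism group of the Dynkin diagram $X_n$, which is trivial, $\mathbb{Z}/2$, or $S_3$; hence any finite-order $\sigma\in\Aut(\mathfrak{g})$ projects to a diagram automorphism of some order $k\in\{1,2,3\}$, and $\sigma^k$ is the smallest power lying in $\mathrm{Inn}(\mathfrak{g})$, which already yields statement (4). A standard conjugacy argument for semisimple automorphisms then lets me assume $\sigma=\nu_\mu\circ\exp(2\pi i\,\mathrm{ad}\,h)$, where $\nu_\mu$ is a fixed ``pinned'' lift of a diagram automorphism $\mu$ of order $k$ preserving a chosen Cartan subalgebra $\mathfrak{h}$ and Borel, and $h\in\mathfrak{h}$; the order of $\sigma$ is then controlled by $h$.

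\textbf{Affine realization and existence.} The $\mu$-twisted loop algebra $\mathcal{L}(\mathfrak{g},\mu)$, together with its canonical central extension and scaling derivation, is the affine Kac--Moody algebra $\mathfrak{g}(A)$ of type $X_n^{(k)}$; its Chevalley generators $e_i,f_i$ for $i=0,\dots,\ell$, the marks $a_i$, and the minimal imaginary root $\delta=\sum_i a_i\alpha_i$ are exactly the data in the statement. Given non-negative relatively prime $s=(s_0,\dots,s_\ell)$, I would define a $\mathbb{Z}$-grading of $\mathfrak{g}(A)$ by $\deg e_i=s_i$, $\deg f_i=-s_i$, and degree $0$ on the Cartan; this is compatible with the Serre relations, so it is well defined, a root space $\mathfrak{g}_\alpha$ with $\alpha=\sum_i n_i\alpha_i$ lies in degree $\sum_i n_i s_i$, and $\delta$ acquires degree $\sum_i a_i s_i=m/k$. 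Intersecting this $s$-grading with the intrinsic loop ($t$-power) grading and reading off residues mod $m$ produces a $\mathbb{Z}/m$-grading of the finite-dimensional $\mathfrak{g}$ embedded as the classical part, equivalently an order-$m$ automorphism $g_{k,s}$. For $k=1$ one has $\mathcal{L}(\mathfrak{g},\mathrm{id})\cong\mathfrak{g}[t,t^{-1}]\oplus\cdots$, and unwinding the grading shows $g_{1,s}(E^j)=e^{2\pi i s_j/m}E^j$ in the Chevalley basis, which is the explicit formula in (1).

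\textbf{Exhaustiveness, conjugacy, and fixed points.} For (2), starting from the reduced form $\sigma=\nu_\mu\exp(2\pi i\,\mathrm{ad}\,h)$, the element $h$ is pinned down only up to the affine Weyl group of $X_n^{(k)}$ (translations by the relevant coroot lattice together with the Weyl group of $\mathfrak{g}^{\nu_\mu}$). Since this group acts simply transitively on alcoves, I can move $h$ into the closed fundamental alcove, whose points for which $\exp(2\pi i\,\mathrm{ad}\,h)$ has order dividing $m$ are precisely parametrized by tuples $s$ with $k\sum_i a_i s_i=m$; this exhibits $\sigma$ as some $g_{k,s}$. For (3), the residual ambiguity in the alcove representative is exactly the stabilizer of the alcove in $\Aut(\mathfrak{g})$, which is the group of diagram automorphisms of the affine diagram $X_n^{(k)}$ acting on the $s_i$, and $k$ itself is conjugation-invariant since it is determined by the outer class of $\sigma$. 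For (5), the fixed subalgebra is the degree-$0$ component of the grading, i.e. the subalgebra generated by $\mathfrak{h}$ together with those $e_i,f_i$ having $s_i=0$; these generate the semisimple subalgebra attached to the subdiagram of $X_n^{(k)}$ on the nodes $\{i:s_i=0\}$, while the remaining $\ell-p$ Cartan directions form an abelian center, giving the decomposition $\mathbb{C}^{\ell-p}\oplus(\text{semisimple})$.

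\textbf{Main obstacle.} The genuinely hard step, requiring real affine Kac--Moody input rather than bookkeeping, is the exhaustiveness and conjugacy criterion of (2)--(3): one must show every finite-order automorphism with a given outer class is $\Aut(\mathfrak{g})$-conjugate into the normal form indexed by a point of the fundamental alcove, and must identify the stabilizer of the alcove with the affine diagram automorphism group. This rests on the conjugacy theorem for Cartan subalgebras of $\mathfrak{g}(A)$ (equivalently, simple transitivity of the affine Weyl group on alcoves) and a careful description of its normalizer; by contrast, statements (1), (4), and (5) become essentially formal once the affine realization and the $s$-grading are in place.
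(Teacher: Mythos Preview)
The paper does not prove this theorem: it is quoted verbatim as Theorem~8.6 of \cite{kac_1990} and used as a black-box classification tool for finite-order automorphisms of $\mathfrak{g}$, with no proof or proof sketch given in the paper itself. Your proposal is a reasonable outline of Kac's original argument via twisted loop algebras and the affine Weyl group action on the fundamental alcove, but there is nothing in the paper to compare it against.
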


Kac's theorem is great for 3 purposes: enumerating finite order automorphisms of $\mathfrak{g}$, finding the fixed Lie subalgebra $\mathfrak{g}^{g_{k,s}}$, and understanding the systematic construction of $x\in T < \Aut(V_L)$ describing inner automorphisms of lattice VOAs like Equation \eqref{eq:innerAutomorphismT}. In particular, for an order $m$ inner automorphism, $x$ is constructed from the sequence $s$ using the basis of fundamental weights $\omega_1,\dots, \omega_l$, dual to the basis of simple coroots:
\begin{equation}
    x = \frac{1}{m} \sum_{j=1}^l s_j \, \omega_j\,.
\end{equation}

%The phase function $u:L \to U(1)$ in Equation \eqref{eq:liftRoot} can be constructed from the sequence $s$ using the basis of fundamental weights $\omega_1,\dots, \omega_l$, dual to the basis of simple coroots
%\begin{equation}
%    u(\alpha) = \exp(\frac{2 \pi i}{m} \sum_{j=1}^l s_j \omega_j(\alpha))\,.
%\end{equation}

%In general, given two lifts specified by $u_1,u_2:L\to\{\pm1\}$, equation \eqref{eq:liftExists} implies they differ by a homomorphism $\lambda: L\to\{\pm1\}$, which we can write in the form $\lambda(\alpha)=e^{2\pi i (\xi,\alpha)}$ with $\xi\in (1/2)L^*$.

On the other hand, it is systematically easier for calculations of twisted characters of outer automorphisms to use a slightly different procedure from \cite{Hohn:2020xfe}, which we describe in the remainder of this section.

Let $V_L$ be a lattice VOA and fix a choice of Cartan subalgebra $\calH$ of $(V_L)_1$ as in Equation \eqref{eq:weightOneSubspace}, further fix a choice of simple roots $\Delta$. Then it can be shown that: up to conjugacy, every finite-order $\hat{\sigma} \in \Aut(V_L)$ is of the form
\begin{equation}
    \hat{\sigma} = e^{2\pi i (x,\,\cdot\,)}.\hat{g} \label{eq:svensAutomorphisms}
\end{equation}
where $\hat{g}$ is a (fixed) choice of \textit{standard lift} of $g\in H_{\Delta}$ and $x\in L^g \otimes \mathbb{Q}$ is some phase understood modulo $(L^g)^*$. By $H_{\Delta} := O(L)_{\{\Delta\}}$ we mean the setwise stabilizer of $\Delta$ in $O(L)$, note that $H_{\Delta} \cong O(L)/W$ which is in turn isomorphic to the automorphism group of the Dynkin diagram associated to $L$, its just key that we've fixed a base $\Delta$. See \cite{dong1999automorphism} and Theorem 2.13 of \cite{Hohn:2020xfe} for details. 

This theorem allows for a simple computation of the $\hat{\sigma}$-twisted traces 
\begin{equation}
    \tr_{V_L} \hat{\sigma}\, q^{L_0 - \frac{c}{24}} =\frac{1}{\eta_g(\tau)}\sum_{\alpha\in L^g}  e^{2\pi i (x,\alpha)} q^{(\alpha,\alpha)/2}\,. \label{eq:SvensTwisted}
\end{equation}

Importantly, if $\hat{\sigma}$ is built from the standard lift $\hat{g}$ of $g\in O(L)$ as above, then $\hat{\sigma}$ has order $\abs{g}$ if and only if
\begin{equation}
    x\in
    \begin{cases}
        0+(1/m)(L^*)^g & \text{and $\abs{\hat{g}} = m$}\\
        \zeta+(1/m)(L^*)^g & \text{and $\abs{\hat{g}} = 2m$}\\
    \end{cases}
\end{equation}
where $\zeta \in (1/2m) (L^*)^g$.

The short of all this is that: rather than studying finite order automorphisms of $(V_L)_1$ and lifting to automorphisms of the VOA, we can take a more ground-up approach. For example, suppose we want to find all $\hat{\sigma}$-twisted characters of a VOA $V_L$ up to conjugacy, and we want $\hat{\sigma}$ to have order $2$. Then we fix a set of simple roots for $L$ and enumerate all $g\in H_{\Delta}$ of order $1$ or $2$. For each $g$, we make some choice of standard lift $\hat{g} \in \Aut(V_L)$, so $\hat{g}$ has order $1$, $2$, or $4$. Then we enumerate all inner automorphisms of order up to $4$ (by making choices of vectors $x$ as described previously). Finally, we combine all of our choices as in \eqref{eq:svensAutomorphisms}, and throw away whichever do not have order $2$. The theorem guarantees that we have produced all $\hat{\sigma}$ of order $2$ up to conjugacy.

\subsection{Orbifolds of \texorpdfstring{$V_{E_8}$}{VE8}}\label{sec:orbifoldsOfVE8}
Now that we understand automorphisms of lattice VOAs, we may study their associated orbifolds. Throughout, let $V := V_{E_8}$ be the holomorphic VOA\footnote{Also simple, rational, $C_2$-cofinite, self-contregredient, CFT-type. We will ignore such technical details from here out in this section.} constructed from the $E_8$ root lattice $L$ as in Example \ref{ex:VE8}. $V$ is the unique (up to isomorphism) holomorphic CFT with chiral central charge $c_L = 8$, and gauging by any non-anomalous $\bbZ_m$ symmetry will simply produce another holomorphic CFT with the same central charge, so $[V/\bbZ_m] \cong V$.

For this reason, computing orbifold partition functions isn't particularly illuminating, they are all just as in Equation \eqref{eq:E8partition}. However, while the orbifold theory $[V/\bbZ_m]$ is isomorphic to $V$, they are not identical (just as in the Ising model). To appreciate this, we should look at what it means to orbifold $V$ at the level of lattices.

For this, we will follow the conventions of \cite{conway2013sphere}. In particular, $L$ will be spanned by simple roots
\begin{equation}
    \begin{aligned}
    \alpha_1&=(-1,1,0^6),\\
    \alpha_2&=(0,-1,1,0^5),\\
    &\vdots\\
    \alpha_7&=(0^6,-1,1),\\
    \alpha_8&=\qty(\frac{1}{2}^5,-\frac{1}{2}^3),
    \end{aligned}
\end{equation}
which fit into the Dynkin diagram
\begin{equation}
\begin{tikzpicture}[baseline={(current bounding box.center)}, scale = 1]
        \draw (0,0) node {$\alpha_1$};
        \draw (0.3,0) -- (0.8,0) node [anchor = west] {$\alpha_2$};
        \draw (1.5,0) -- (2,0) node [anchor = west] {$\alpha_3$};
        \draw (2.7,0) -- (3.2,0) node [anchor = west] {$\alpha_4$};
        \draw (3.9,0) -- (4.4,0) node [anchor = west] {$\alpha_5$};
        \draw (5.1,0) -- (5.6,0) node [anchor = west] {$\alpha_6$};
        \draw (6.3,0) -- (6.8,0) node [anchor = west] {$\alpha_7$};
        \draw (4.75,0.3) -- (4.75,0.8) node [anchor = south] {$\alpha_8$}; 
        \end{tikzpicture}\,.
\end{equation}
This diagram can be extended to its affine version with the help of the highest root
\begin{equation}\label{eq:highest_e8}
    \alpha_{\text{high}}=2\alpha_1+3\alpha_2+4\alpha_3+5\alpha_4+6\alpha_5+4\alpha_6+2\alpha_7+3\alpha_8\,,
\end{equation}
which yields
\begin{equation}\label{eq:affine_e8}
\begin{tikzpicture}[baseline={(current bounding box.center)}, scale = 1]
        \draw (0.3,0) node [anchor=east] {$(-\alpha_{\text{high}})$};
        \draw (0.3,0) -- (0.8,0) node [anchor = west] {$\alpha_1$};
        \draw (1.5,0) -- (2,0) node [anchor = west] {$\alpha_2$};
        \draw (2.7,0) -- (3.2,0) node [anchor = west] {$\alpha_3$};
        \draw (3.9,0) -- (4.4,0) node [anchor = west] {$\alpha_4$};
        \draw (5.1,0) -- (5.6,0) node [anchor = west] {$\alpha_5$};
        \draw (6.3,0) -- (6.8,0) node [anchor = west] {$\alpha_6$};
        \draw (7.5,0) -- (8,0) node [anchor = west] {$\alpha_7$};
        \draw (5.95,0.3) -- (5.95,0.8) node [anchor = south] {$\alpha_8$}; 
        \end{tikzpicture}\,.
\end{equation}

Now, let $G = \langle g_{s} \rangle \cong \bbZ_m$ be a subgroup of $T < \Aut(V)$ generated by the sequence $s$ as in the previous section (and with $k=1$) i.e. $G$ acts by a phase on the group algebra. Explicitly, write $x := \sum x_j \omega_j$ with $x_j := s_j/m$, then
\begin{equation}
\begin{aligned}
    a^i_n &\mapsto a^i_n \,, \\
    \Gamma_\alpha &\mapsto e^{2\pi i (x,\alpha)} \Gamma_\alpha \,, \label{eq:innerAutAction}
\end{aligned}
\end{equation}
with $(x,\alpha_i) \geq 0$, $(x, \alpha_{\mathrm{high}}) \leq 1$, and $x_i \in \frac{1}{m}\bbZ$. 

Such a symmetry is non-anomalous and can be orbifolded if we can extend the invariant sub-VOA $V^G$ by twisted sectors to a distinct VOA.\footnote{Note: If $V$ is a lattice VOA, then under an inner automorphism the fixed sub-VOA $V^G$ is always another lattice VOA.} In physics terms, the orbifold Hilbert space is obtained as an extension of the subspace of unscreened/invariant operators, as in Equation \eqref{eq:HOrbiUntwisted} for a $\bbZ_2$ symmetry. The anomaly classifies the obstruction to this extension procedure. 

The group action above leaves the chiral Cartan bosons unscreened as well as the vertex operators with $(x,\alpha) \in \bbZ$. This last condition distinguishes a sublattice $L_0 \subset L$ of index $m$, so that the fixed sub-VOA $V^G$ is the lattice VOA constructed from $L_0$. Said in reverse, $\rk(L_0)=\rk(L)$ both define the same Heisenberg module of chiral bosons, but in $V^G$ only the momentum states corresponding to $L_0$ are admissible.\footnote{Note: in the mathematics literature, $V^G$ is often called the ``orbifold VOA,'' but we will always mean the full extension.}
% Said in reverse, $\rk(L_0)=\rk(L)$ and they both induce the same inner product, so they both define the same Heisenberg module of chiral bosons; but in $V^G$ only the momentum states allowed by $L_0$ are possible. 

Our question then becomes: can we extend $L_0$ to an even self-dual lattice $L^\prime$, distinct from $L$, inside $L_0^*$?\footnote{Dropping the even requirement is how one produces fermionic orbifolds.} This can be depicted
\begin{equation}
    \begin{tikzpicture}[scale = 1, baseline = -2.5]
        \node at (-1,0) {$L_0$};
        \node at (-0.5,0.5) {\rotatebox{45}{$\subseteq$}};
        \node at (-0.5,-0.5) {\rotatebox{-45}{$\subseteq$}};
        \node at (0.5,0.8) {$L=L^*$};
        \node at (0.5,-0.8) {$L^\prime = L^{\prime*}$};
        \node at (1.5,-0.5) {\rotatebox{45}{$\subseteq$}};
        \node at (1.5,0.5) {\rotatebox{-45}{$\subseteq$}};
        \node at (2,0) {$L_0^*$};
    \end{tikzpicture}\,\,.
\end{equation}

Such an $L^\prime$ exists (and is unique) if and only if
\begin{equation}
    q(x) := \frac{(x,x)}{2} \in \frac{1}{m}\bbZ\,.\label{eq:disc}
\end{equation}
Generally, $q(x) \in \frac{1}{m^2}\bbZ$ because $m x \in L$, so the anomaly class is the class
\begin{equation}
    [q(x)] \in \left(\frac{1}{m}\bbZ\right) / \left(\frac{1}{m^2}\bbZ\right) \cong \bbZ_m\,.
\end{equation}
Note: $H^3(\bbZ_m, U(1)) = \bbZ_m$, so the matchup with anomalies is evident.

\begin{table}
\centering
\begin{tabular}{c|*4c}
    \toprule
            & $[V/\bbZ_m]$ & $[V/\bbZ_m]^1$ & $\cdots$ & $[V/\bbZ_m]^{m-1}$  \\\midrule
    $V$     & $\calH^0_0$ & $\calH^1_0$ & $\cdots$ & $\calH^{m-1}_0$  \\
    $V_1$ & $\calH^0_1$ & $\calH^1_1$ & $\cdots$ & $\calH^{m-1}_1$  \\
    $\vdots$ & $\vdots$ & $\vdots$ & $\ddots$ & $\vdots$    \\
    $V_{m-1}$ & $\calH^0_{m-1}$ & $\calH^1_{m-1}$ & $\cdots$ & $\calH^{m-1}_{m-1}$  \\
    \bottomrule
    \hline
\end{tabular}
\caption{The holomorphic VOA $V$ decomposes as a direct sum of the untwisted Hilbert spaces, where the upper grading denotes the ``electric'' $\bbZ_m$ charge. The orbifold VOA decomposes as a direct sum of the gauge invariant Hilbert spaces, where the lower index grading denotes the dual ``magnetic'' $\hat{\bbZ}_m$ charge.}\label{tab:RepVG}
\end{table}

In the language of VOAs we are just saying that the irreducible representations $\Irr(V^G)$ have group like fusion, with a fusion group given by a central extension of $\bbZ_m$ by $\bbZ_m$, reflecting the structure of $L_0^*/L_0$ when we try to factor through $L$. The structure of the fusion group is controlled by a class living in $H^2(\bbZ_m,\hat{\bbZ}_m) = \bbZ_m$, but should not be confused with the group of anomalies. The anomaly does influence this group, but the map between anomalies and fusion rules is not 1-to-1 (see \cite{van_Ekeren_2020}), as can be seen in following examples with $\bbZ_2$. However, in the case that $G$ acts non-anomalously, then $\Irr(V^G) = \bbZ_m \times \bbZ_m$ for sure.

If we suggestively label the elements of $\Irr(V^G)$ as $\mathcal{H}_j^i$, mimicking the decomposition following Equation \eqref{eq:HUntwisted}, then $\Irr(V^G)$ can be organized like Table \ref{tab:RepVG}. The original VOA decomposes into the ``electrically charged'' sectors
\begin{equation}
    V = \bigoplus_{i=1}^m \mathcal{H}_0^i\,,
\end{equation}
and the orbifold VOA, if it exists, decomposes into the ``magnetically charged'' sectors as
\begin{equation}
    [V/\bbZ_m] = \bigoplus_{j=1}^m \mathcal{H}_j^0\,.
\end{equation}
Each of the $\calH^i_j \in \Irr(V^G)$ is associated with a coset of $L_0^*/L_0$, and the discriminant $q(x)$ is simply telling us the conformal weight in that sector. In the non-anomalous case, the conformal weight of a vector in $\calH_j^i$ is $\frac{ij}{m} + \bbZ$ which gives $\Irr(V^G)$ the structure of a ``metric Abelian group'' (see Section 4 of \cite{moller2017cyclic} for more details).

We record the automorphisms of $V$, anomaly, and corresponding Lie algebra for small $m$ in Table \ref{tab:allsym}. In Table \ref{tab:nonanomsym} we record just the non-anomalous symmetries with the marked Dynkin diagram.

\begin{table}[t]
\centering
\begin{tabular}{>{\centering\arraybackslash} m{2cm} >{\centering\arraybackslash} m{5cm} >{\centering\arraybackslash} m{3cm} >{\centering\arraybackslash} m{3cm}}    \toprule
    \emph{Order} & \emph{Automorphism} & $\bbZ_m$ \emph{Anomaly} & \emph{Fixed Subalgebra} \\
    \midrule
    \multirow{2}{*}{2} 
        & (0,0,0,0,0,0,0,1,0) & 0 & $D_8$ \\
        & (0,1,0,0,0,0,0,0,0) & 1 & $A_1 \times E_7$ \\
    \midrule
    \multirow{4}{*}{3}    
        & (0,0,0,0,0,0,0,0,1) & 1 & $A_8$ \\
        & (1,0,0,0,0,0,0,1,0) & 2 & $D_7 \times \mathfrak{u}(1)$\\
        & (0,0,1,0,0,0,0,0,0) & 0 & $A_2 \times E_6$ \\
        & (1,1,0,0,0,0,0,0,0) & 1 & $E_7 \times \mathfrak{u}(1)$ \\
    \midrule
    \multirow{7}{*}{4}    
        & (1,0,0,0,0,0,0,0,1) & 0 & $A_7 \times \mathfrak{u}(1)$ \\
        & (2,0,0,0,0,0,0,1,0) & 2 & $D_7 \times \mathfrak{u}(1)$ \\
        & (0,0,0,0,0,0,1,0,0) & 3 & $A_1 \times A_7$ \\
        & (0,0,0,1,0,0,0,0,0) & 2 & $A_3 \times D_5$ \\
        & (1,0,1,0,0,0,0,0,0) & 3 & $A_1 \times E_6 \times \mathfrak{u}(1)$ \\
        & (2,1,0,0,0,0,0,0,0) & 1 & $E_7 \times \mathfrak{u}(1)$ \\
        & (0,1,0,0,0,0,0,1,0) & 1 & $A_1 \times D_6 \times \mathfrak{u}(1)$ \\
    \bottomrule
    \multirow{14}{*}{5}    
        & (2,0,0,0,0,0,0,0,1) & 4 & $A_7 \times \mathfrak{u}(1)$ \\
        & (3,0,0,0,0,0,0,1,0) & 2 & $D_7 \times \mathfrak{u}(1)$ \\
        & (0,0,0,0,0,0,0,1,1) & 1 & $A_7 \times \mathfrak{u}(1)$ \\
        & (1,0,0,0,0,0,0,2,0) & 3 & $D_7 \times \mathfrak{u}(1)$ \\
        & (1,0,0,0,0,0,1,0,0) & 2 & $A_1 \times A_6 \times \mathfrak{u}(1)$ \\
        & (0,0,0,0,1,0,0,0,0) & 0 & $A_4 \times A_4$ \\
        & (1,0,0,1,0,0,0,0,0) & 1 & $A_2 \times D_5 \times \mathfrak{u}(1)$ \\
        & (2,0,1,0,0,0,0,0,0) & 3 & $A_1 \times E_6 \times \mathfrak{u}(1)$ \\
        & (0,0,1,0,0,0,0,1,0) & 4 & $A_2 \times D_5 \times \mathfrak{u}(1)$ \\
        & (3,1,0,0,0,0,0,0,0) & 1 & $E_7 \times \mathfrak{u}(1)$ \\
        & (0,1,0,0,0,0,0,0,1) & 3 & $A_1 \times A_6 \times \mathfrak{u}(1)$ \\
        & (1,1,0,0,0,0,0,1,0) & 0 & $D_6 \times \mathfrak{u}(1)^{2} $ \\
        & (0,1,1,0,0,0,0,0,0) & 2 & $A_1 \times E_6 \times \mathfrak{u}(1)$ \\
        & (1,2,0,0,0,0,0,0,0) & 4 & $E_7 \times \mathfrak{u}(1)$ \\
    \bottomrule
    \hline
\end{tabular}
\caption{The finite order automorphisms of the $E_8$ Lie algebra can be enumerated, up to conjugacy, by Theorem \ref{th:KacClass}. The automorphism is given by a sequence $s$ which describes the automorphism. In physics terms, $s$ describes how the operators of $V_{E_8}$ are screened and the anomaly of that symmetry.}\label{tab:allsym}
\end{table}

\begin{table}[t]
\centering
\begin{tabular}{>{\centering\arraybackslash} m{2cm} >{\centering\arraybackslash} m{8cm} >{\centering\arraybackslash} m{3cm}}    \toprule
    \emph{Order} & \emph{Automorphism} & \emph{Fixed Subalgebra} \\\midrule
    %% Order 2
        \multirow{1}{*}{2}    
        & \begin{tikzpicture}
        \draw (0,0) node {0};
        \draw (0.2,0) -- (0.7,0) node [anchor = west] {0};
        \draw (1.2,0) -- (1.7,0) node [anchor = west] {0};
        \draw (2.2,0) -- (2.7,0) node [anchor = west] {0};
        \draw (3.2,0) -- (3.7,0) node [anchor = west] {0};
        \draw (4.2,0) -- (4.7,0) node [anchor = west] {0};
        \draw (5.2,0) -- (5.7,0) node [anchor = west] {0};
        \draw (6.2,0) -- (6.7,0) node [anchor = west] {1};
        \draw (4.95,0.3) -- (4.95,0.8) node [anchor = south] {0}; 
        \end{tikzpicture}  
        & $D_8$      \\
        \midrule
    %% Order 3
        \multirow{1}{*}{3}    
        & \begin{tikzpicture}
        \draw (0,0) node {0};
        \draw (0.2,0) -- (0.7,0) node [anchor = west] {0};
        \draw (1.2,0) -- (1.7,0) node [anchor = west] {1};
        \draw (2.2,0) -- (2.7,0) node [anchor = west] {0};
        \draw (3.2,0) -- (3.7,0) node [anchor = west] {0};
        \draw (4.2,0) -- (4.7,0) node [anchor = west] {0};
        \draw (5.2,0) -- (5.7,0) node [anchor = west] {0};
        \draw (6.2,0) -- (6.7,0) node [anchor = west] {0};
        \draw (4.95,0.3) -- (4.95,0.8) node [anchor = south] {0}; 
        \end{tikzpicture}  
        & $A_2 \times E_6$      \\
        \midrule
    %% Order 4
        \multirow{1}{*}{4}    
        & \begin{tikzpicture}
        \draw (0,0) node {1};
        \draw (0.2,0) -- (0.7,0) node [anchor = west] {0};
        \draw (1.2,0) -- (1.7,0) node [anchor = west] {0};
        \draw (2.2,0) -- (2.7,0) node [anchor = west] {0};
        \draw (3.2,0) -- (3.7,0) node [anchor = west] {0};
        \draw (4.2,0) -- (4.7,0) node [anchor = west] {0};
        \draw (5.2,0) -- (5.7,0) node [anchor = west] {0};
        \draw (6.2,0) -- (6.7,0) node [anchor = west] {0};
        \draw (4.95,0.3) -- (4.95,0.8) node [anchor = south] {1}; 
        \end{tikzpicture}  
        & $A_7 \times \mathfrak{u}(1)$      \\
        \midrule
    %% Order 5
        \multirow{2}[12]{*}{5}    
        & \begin{tikzpicture}
        \draw (0,0) node {0};
        \draw (0.2,0) -- (0.7,0) node [anchor = west] {0};
        \draw (1.2,0) -- (1.7,0) node [anchor = west] {0};
        \draw (2.2,0) -- (2.7,0) node [anchor = west] {0};
        \draw (3.2,0) -- (3.7,0) node [anchor = west] {1};
        \draw (4.2,0) -- (4.7,0) node [anchor = west] {0};
        \draw (5.2,0) -- (5.7,0) node [anchor = west] {0};
        \draw (6.2,0) -- (6.7,0) node [anchor = west] {0};
        \draw (4.95,0.3) -- (4.95,0.8) node [anchor = south] {0}; 
        \end{tikzpicture}  
        & $A_4 \times A_4$      \\
        & \begin{tikzpicture}
        \draw (0,0) node {1};
        \draw (0.2,0) -- (0.7,0) node [anchor = west] {1};
        \draw (1.2,0) -- (1.7,0) node [anchor = west] {0};
        \draw (2.2,0) -- (2.7,0) node [anchor = west] {0};
        \draw (3.2,0) -- (3.7,0) node [anchor = west] {0};
        \draw (4.2,0) -- (4.7,0) node [anchor = west] {0};
        \draw (5.2,0) -- (5.7,0) node [anchor = west] {0};
        \draw (6.2,0) -- (6.7,0) node [anchor = west] {1};
        \draw (4.95,0.3) -- (4.95,0.8) node [anchor = south] {0}; 
        \end{tikzpicture}  
        & $D_6 \times \mathfrak{u}(1)^2$      \\
    \bottomrule
    \hline
\end{tabular}
\caption{List of non-anomalous symmetries of the $E_8$ Lie algebra and fixed point Lie subalgebras up to order $m=5$. Instead of listing the corresponding sequence $s$ for the automorphism, we mark $s$ onto the (affine) $E_8$ Dynkin diagram, illustrating the origin of the fixed Lie subalgebras.}\label{tab:nonanomsym}
\end{table}

\subsubsection{Example: \texorpdfstring{$\bbZ_2$}{Z2} Orbifold of \texorpdfstring{$V_{E_8}$}{VE8}}\label{sec:Z2Orbifold}
Let $V:=V_{E_8}$ throughout. By Kac's theorem and the discussion of Section \ref{sec:Automorphisms}, $V$ has two automorphisms of order $2$ up to conjugacy (see Table \ref{tab:allsym}). Call the non-anomalous one 2A and the anomalous one 2B, they are generated by
\begin{align}
    x^A &:= \frac{1}{2}\omega_7\\
    x^B &:= \frac{1}{2}\omega_1\,.
\end{align}

Let us start with the 2A symmetry. In this case, our fixed sub-VOA is $V^{\bbZ_2^A} = V_{D_8}$. To see this at the level of lattices, first note $(x^A,\alpha_i)=0$ for all $i$ except $i=7$, so the only nontrivial action is $\Gamma_{\alpha_7}\mapsto -\Gamma_{\alpha_7}$. Vertex operators like $\Gamma_{2\alpha_7}$ do remain invariant though, so the invariant VOA is the lattice VOA constructed from the span of $\{\alpha_1,\dots,\alpha_6,2\alpha_7,\alpha_8\}$. The reason why Kac's theorem works is because of the form of the highest root $\alpha_{\mathrm{high}}$ in Equation \eqref{eq:highest_e8}, which guarantees that the fixed sublattice is also spanned by $\{\alpha_1,\dots,\alpha_6,-\alpha_{\mathrm{high}},\alpha_8\}$. It is then clear from the way these vectors fit inside of the affine Dynkin diagram in Equation \eqref{eq:affine_e8} that they span a $D_8$ lattice.

Since 2A is non-anomalous, the preceding discussion tells us that the fusion group is a metric Abelian group $\Irr(V^{\bbZ_2^A}) = \bbZ_2 \times \bbZ_2$ with two isotropic subgroups. To see this at the level of lattices, we note that the $E_8$ lattice can be presented as the span of the rows of
\begin{equation}
    \begin{pmatrix}
        2 & 0 & 0 & 0 & 0 & 0 & 0 & 0\\
        -1 & 1 & 0 & 0 & 0 & 0 & 0 & 0\\
        0 & -1 & 1 & 0 & 0 & 0 & 0 & 0\\
        0 & 0 & -1 & 1 & 0 & 0 & 0 & 0\\
        0 & 0 & 0 & -1 & 1 & 0 & 0 & 0\\
        0 & 0 & 0 & 0 & -1 & 1 & 0 & 0\\
        0 & 0 & 0 & 0 & 0 & -1 & \,\,1\,\, & 0\\
        \tfrac12 & \tfrac12 & \tfrac12 & \tfrac12 & \tfrac12 & \tfrac12 & \tfrac12 & \,\,\tfrac12\,\,
    \end{pmatrix}\,,
\end{equation} 
as in Chapter 4 of \cite{conway2013sphere} (these are not roots). Similarly, the $D_8$ lattice and $D_8^*$ lattice can be written as the span of the rows of
\begin{equation}
    \begin{pmatrix}
        -1 & -1 & 0 & 0 & 0 & 0 & 0 & 0\\
        1 & -1 & 0 & 0 & 0 & 0 & 0 & 0\\
        0 & 1 & -1 & 0 & 0 & 0 & 0 & 0\\
        0 & 0 & 1 & -1 & 0 & 0 & 0 & 0\\
        0 & 0 & 0 & 1 & -1 & 0 & 0 & 0\\
        0 & 0 & 0 & 0 & 1 & -1 & 0 & 0\\
        0 & 0 & 0 & 0 & 0 & 1 & -1 & 0\\
        0 & 0 & 0 & 0 & 0 & 0 & 1 & -1
    \end{pmatrix}\quad\textrm{and}\quad
    \begin{pmatrix}
       \,\,1\,\, & 0 & 0 & 0 & 0 & 0 & 0 & 0\\
        0 & \,\,1\,\, & 0 & 0 & 0 & 0 & 0 & 0\\
        0 & 0 & \,\,1\,\, & 0 & 0 & 0 & 0 & 0\\
        0 & 0 & 0 & \,\,1\,\, & 0 & 0 & 0 & 0\\
        0 & 0 & 0 & 0 & \,\,1\,\, & 0 & 0 & 0\\
        0 & 0 & 0 & 0 & 0 & \,\,1\,\, & 0 & 0\\
        0 & 0 & 0 & 0 & 0 & 0 & \,\,1\,\, & 0\\
        \tfrac12 & \tfrac12 & \tfrac12 & \tfrac12 & \tfrac12 & \,\,\tfrac12\,\, & \tfrac12 & \,\,\tfrac12\,\,
    \end{pmatrix}\
\end{equation}
respectively. Conveniently for us, \cite{conway2013sphere} also records the following elements of $D_8^*$ which are representatives of separate classes in $D_8^*/D_8$:
\begin{align}
    [0] &:= (0,0,\dots,0)\,,\\
    [1] &:= (\tfrac12,\tfrac12,\dots,\tfrac12)\,,\\
    [2] &:= (0,0,\dots,1)\,,\\
    [3] &:= (\tfrac12,\tfrac12,\dots,-\tfrac12)\,.
\end{align}
From this we see that $D_8^*/D_8 \cong \bbZ_2\times\bbZ_2$ with decomposition and discriminant
\begin{equation}
D_8^* =
    \begin{tabular}{|c|c|}
        \hline
        $D_8 + [0]$ & $D_8 + [1]$\\
        \hline
        $D_8 + [3]$ & $D_8 + [2]$\\
        \hline
    \end{tabular}
    \quad\stackrel{\mathrm{Weight}}{\rightsquigarrow}\quad
    q^A(x)=\begin{tabular}{|c|c|}
        \hline
        $0$ & $0$\\
        \hline
        $0$ & $\tfrac{1}{2}$\\
        \hline
    \end{tabular}\,.
\end{equation}
Note that $(D_8 + [0]) \cup (D_8 + [1])$ and $(D_8 + [0]) \cup (D_8 + [3])$ both give $E_8$ lattices in $D_8^*$, and that they're not the same (their intersection is only a $D_8$ lattice).

The characters can be obtained as in Section \ref{ex:VE8} and are
\begin{align}
    \tr_{V_{D_8 + [0]}} q^{L_0-\frac{c}{24}}
        &=\frac{1}{2\eta(\tau)^8}\qty(\theta_3^8(q) + \theta_4^8(q))\label{eq:ch_D8}\,,\\
    \tr_{V_{D_8 + [1]}} q^{L_0-\frac{c}{24}}
        &= \frac{1}{2\eta(\tau)^8}\qty(\theta_2^8(q) + \theta_1^8(q))\,,\\
    \tr_{V_{D_8 + [2]}} q^{L_0-\frac{c}{24}}
        &= \frac{1}{2\eta(\tau)^8}\qty(\theta_3^8(q) - \theta_4^8(q))\,,\\
    \tr_{V_{D_8 + [3]}} q^{L_0-\frac{c}{24}}
        &= \frac{1}{2\eta(\tau)^8}\qty(\theta_2^8(q) - \theta_1^8(q))\,.
\end{align}
Since $\theta_1(\tau) = 0$, we can clearly see how to decompose $V$ into $V_{D_8}$ modules in two distinct ways just by comparing twisted partition functions. 

Looking at the 2B symmetry, $V^{\bbZ_2^B} = V_{A_1\times E_7}$ and has fusion group $\bbZ_2 \times \bbZ_2$ as well. We may write the discriminant in this case and it's
\begin{equation}
q^B(x)=\begin{tabular}{|c|c|}
        \hline
        $0$ & $0$\\
        \hline
        $\tfrac{3}{4}$ & $\tfrac{1}{4}$\\
        \hline
    \end{tabular}\,.
\end{equation}
As expected, the only isotropic subgroup corresponds to the decomposition we started with initially.

It would be remiss not to point out that $q^A(x)$ and $q^B(x)$ are the spins of simple anyons in the $\bbZ_2$ gauge theory and twisted $\bbZ_2$ gauge theory (also called semion-antisemion) respectively. 

Generally, since $V$ is well-behaved, $\Rep(V^{\bbZ_m})$ is a modular tensor category (this follows rigorously from technical results of \cite{miyamoto2004uniform, miyamoto2015c_2, carnahan2016regularity}) whose simple objects, the irreducible modules $\calH_{n_m}^{n_e}$, correspond to simple anyons in a (2+1)d TFT which is interpretable as a (possibly twisted) $\bbZ_m$ gauge theory. If we couple $V^{\bbZ_m}$ to the bulk TFT as a 2d boundary theory, then the anyon associated to $\calH_{n_m}^{n_e}$ may terminate on operators from the corresponding module. %If we think of the bulk TFT as a (possibly twisted) $\bbZ_2$ gauge theory, then the line corresponding to $\calH_{n_m}^{n_e}$ has electric charge $n_e$ and magnetic charge $n_m$ under the original $\bbZ_2$ symmetry. 
We will elaborate on these (2+1)d points further in Section \ref{sec:3dTFT} (see also \cite{gaiotto2020orbifold}).

\subsection{Computing Defected Partition Functions in \texorpdfstring{$(E_8)_1$}{(E8)1}}\label{sec:DefectedPFs}
In this section we use our general results about automorphisms of lattice VOAs to explain how to compute $\bbZ_m$ ``duality defected'' partition functions in the chiral $(E_8)_1$ WZW model, i.e. we compute twisted characters for $\TY(\bbZ_m)$ actions on the lattice VOA $V := V_{E_8}$.

Suppose we want to find partition functions with a duality defect line $X$ twisting the Euclidean time direction, $Z_V[0,X]$, separating $V$ from some $\bbZ_m$ orbifold $[V/\bbZ_m]$. In practice, we proceed as follows for order $m$ symmetries:
\begin{enumerate}
    \item Following Kac's theorem, we find all inner automorphisms of the $E_8$ Lie algebra of order $m$. Using Equation \eqref{eq:disc}, we compute which symmetries of $V$ are non-anomalous and thus can actually be gauged. Note: there may be multiple conjugacy classes of non-anomalous symmetries e.g. for $m=5$ there are two (see Table \ref{tab:nonanomsym}).
    \item Taking one of the non-anomalous symmetry groups $G \cong \bbZ_m$, the fixed-point sub-VOA $V^G$ is a lattice VOA and is obtained as described in Section \ref{sec:orbifoldsOfVE8}. The irreps $A := \Irr(V^G)$ form a metric Abelian group $(A,h)$, with $A \cong \bbZ_m \times \hat{\bbZ}_m$ and metric function $h(i,j) = ij/m$. $V$ and $[V/G]$ decompose as direct sums over the isotropic subgroups with irrep labels $(i,0)$ and $(0,j)$ respectively, i.e.
        \begin{align}
            V &= \bigoplus_{i} \calH_0^i\,,\\
            [V/G] &= \bigoplus_{j} \calH_j^0\,.
        \end{align}
    \item Finding automorphisms of $V^G$ which ``swap the axes'' corresponding to $V$ and $[V/G]$ in $A$, we obtain the defected partition functions as the \textit{twisted} characters for this second automorphism.
\end{enumerate}
We've already seen the first two points in detail, so we elaborate on the final point.

% To reiterate, when $G$ is non-anomalous, the irreps of $V^G$ form a metric Abelian group $(A,h)$ where $A\cong \bbZ_m^2$ and $h(i,j)=ij/m$. Moreover, as we saw in the previous section, $V$ and its orbifold decompose as direct sums over the isotropic subgroups spanned by the irreps $(i,0)$ and $(0,j)$, i.e.
% \begin{align}
%     V &= \bigoplus_{i} \calH_0^i\,,\\
%     [V/G] &= \bigoplus_{j} \calH_j^0\,.
% \end{align}

Write $O(A,h)$ for the group of automorphisms of $A$ which preserve $h$. For simplicity, consider the case that $m$ is prime, then we have that
\begin{equation}
    O(A,h) = \bbZ_m^\times \rtimes \bbZ_2\,.
\end{equation}
The $\bbZ_m^\times$ factor corresponds to the freedom in redefining the action of $G$ on $V$ and is the normal subgroup $SO(A,h)$ which preserves the axes of $A$. This leaves $\bbZ_2 = O(A,h)/SO(A,h)$ to act by swapping the axes of $A$. The $\bbZ_2$ factor acts by electric-magnetic duality in the (2+1)d TFT. We can work this group out for more general $m$, but generally $O(A,h)$ is just the group of symmetries of $\bbZ_m$ Dijkgraaf-Witten theory \cite{DW90} aka the group of braided auto-equivalences of $\calZ(\Vc_G)$ (see \cite{gaiotto2020orbifold, jaumeDiego} for physics discussions, and \cite{homotopyFusion, FPSV:brauerGroupAbelDW, nikshychRiepel:catLagGrass} for mathematics discussions).
% Take the vector (1,0) and see where it has a chance of going. It's null, it has to go to another null vector. It has to go to another generator.
% In the prime power case, it must be this formula

Now, any automorphism of $V^G$ induces an automorphism of $A$ which preserves the conformal weight $h$. Note: many elements of $\Aut(V^G)$ induce the same automorphism in $O(A,h)$.

Our principal claim for this paper is that: \textit{$(V^{G})^{\bbZ_2} = V^{\TY(G)}$ if and only if the $\bbZ_2$ action on $V^{G}$ switches the axes of $A$}. We prove this in Theorem \ref{prop:proofTY}. At the level of partition functions, this means if $\hat{\sigma}\in \Aut(V^G)$ induces a $\bbZ_2$ action on $A$ which ``swaps the axes'' of $\Irr(V^G)$, then there is a $\bbZ_m$ Tambara-Yamagami line $X_{\hat{\sigma}}$ in $V$, whose partition function is the $\hat{\sigma}$-twisted character of $V^G$
\begin{equation}\label{eq:DualityDefectPF}
    Z_V[0,X_{\hat{\sigma}}] = \tr_{V^G} \hat{\sigma} q^{L_0-\frac{c}{24}} \,.
\end{equation}

In practice, this is where the technical details about lifts becomes relevant. We will identify $\bbZ_2$ symmetries of the underlying lattices associated to $V^G$ which should switch the axes, and then consider their lifts to the VOA. This means even after we have picked a non-anomalous $\bbZ_m$ symmetry and a $\bbZ_2$ which switches the axes, there is \textit{still} degeneracy in the partition function depending on how the $\bbZ_2$ lifts to a VOA automorphism, leading to different TY-lines.

Comparing to the end of Section \ref{sec:IsingCFT}, we see a 2d shadow of this equation. We already saw that the duality defected partition function involves only uncharged operators, and that we just had to resolve a sign in the partition function which tracks ``if $V$ and $[V/G]$ see the primary $\phi$ differently.'' This role of the $\hat{\sigma}$-twisted character is to track this sign.

\subsection{\texorpdfstring{$\bbZ_2$}{Z2} Duality Defects}\label{sec:Z2DDs}
Given $V := V_{E_8}$, there is one non-anomalous $\bbZ_2$ symmetry up to conjugation, and there are 4 duality defected partition functions, i.e. $\bbZ_2$ Tambara-Yamagami lines or duality defects. In this section we will compute them in detail using our VOA-theoretic techniques and then compare to the result using fermionization as in \cite{Thorngren:2018bhj, Ji:2019ugf, MonsterCFT}.

\subsubsection{\texorpdfstring{$\bbZ_2$}{Z2} Duality Defects from Lie Theory}\label{sec:Z2DDfromLieTheory}
We start by following our general procedure outlined in the previous section. There is one non-anomalous $\bbZ_2$ action on $V$, generated by $x^A = \frac{1}{2}\omega^7$ in the sense of Equation \eqref{eq:innerAutAction}, fixing the subalgebra $V^{\bbZ_2^A} \cong V_{D_8}$. We found this by taking the affine $E_8$ Dynkin diagram and ``chopping'' the 7th root to produce the $D_8$ Dynkin diagram.

We know from our discriminant calculations in Section \ref{sec:Z2Orbifold} that there are two ways to extend the $D_8$ root lattice into the $E_8$ lattice. In fact, we can see these two extensions directly from the Dynkin diagram
\begin{equation}
\begin{tikzpicture}[baseline={(current bounding box.center)}, scale = 1]
	\tikzstyle{vertex}=[circle, fill=black, minimum size=2pt,inner sep=2pt];
	\def\r{1.2};
	\node[vertex] (T1) at (\r*1,\r*0) {};
	\node[vertex] (T2) at (\r*2,\r*0) {};
	\node[vertex] (T3) at (\r*3,\r*0) {};
	\node[vertex] (T4) at (\r*4,\r*0) {};
	\node[vertex] (T5) at (\r*5,\r*0) {};
	\node[vertex] (T6) at (\r*6,\r*0) {};
	\node[vertex] (T7) at (\r*6.866,\r*0.866) {};
	\node[vertex] (T8) at (\r*6.866,-1*\r*0.866) {};
	\node[vertex] (T72) at (\r*7.866,\r*0.866) {};
	\node[vertex] (T82) at (\r*7.866,-1*\r*0.866) {};
	
	\draw[-] (T1) -- (T2);
	\draw[-] (T2) -- (T3);
	\draw[-] (T3) -- (T4);
	\draw[-] (T4) -- (T5);
	\draw[-] (T5) -- (T6);
	\draw[-] (T6) -- (T7);
	\draw[-] (T6) -- (T8);
	\draw[-, dashed] (T7) -- (T72);
	\draw[-, dashed] (T8) -- (T82);
\end{tikzpicture}
\end{equation}

To obtain a TY-category, we want a $\bbZ_2$ action on $V^{\bbZ_2}$ which swaps the axes of $\mathrm{Irr}(V^{\bbZ_2})$. At the level of root lattices, this means swapping the two $E_8$'s in $D_8^*$, and the $\bbZ_2$ which does this is obvious: the $D_8$ Dynkin diagram automorphism.

We start by looking at the conjugacy classes of $\bbZ_2$ outer automorphisms of $\mathfrak{so}(16)$, which all come from the $D_8$ Dynkin diagram automorphism. Using Kac's theorem with $m=2$ and $k=2$, we need to study the Dynkin diagram $D_8^{(2)}$. This is given (with its marks) by

\begin{equation}
\begin{tikzpicture}[baseline={(current bounding box.center)}, scale = 1]
	\tikzstyle{vertex}=[circle, fill=black, minimum size=2pt,inner sep=2pt];
	\def\r{1.2};
	\node[vertex] (T0) at (\r*0,\r*0) {};
	\node[vertex] (T1) at (\r*1,\r*0) {};
	\node[vertex] (T2) at (\r*2,\r*0) {};
	\node[vertex] (T3) at (\r*3,\r*0) {};
	\node[vertex] (T4) at (\r*4,\r*0) {};
	\node[vertex] (T5) at (\r*5,\r*0) {};
	\node[vertex] (T6) at (\r*6,\r*0) {};
	\node[vertex] (T7) at (\r*7,\r*0) {};
	
	\draw[above] (T0) node {$1$};
	\draw[above] (T1) node {$1$};
	\draw[above] (T2) node {$1$};
	\draw[above] (T3) node {$1$};
	\draw[above] (T4) node {$1$};
	\draw[above] (T5) node {$1$};
	\draw[above] (T6) node {$1$};
	\draw[above] (T7) node {$1$};
	
	\draw[below] (T0) node {$\alpha_0$};
	\draw[below] (T1) node {$\alpha_1$};
	\draw[below] (T2) node {$\alpha_2$};
	\draw[below] (T3) node {$\alpha_3$};
	\draw[below] (T4) node {$\alpha_4$};
	\draw[below] (T5) node {$\alpha_5$};
	\draw[below] (T6) node {$\alpha_6$};
	\draw[below] (T7) node {$\alpha_7$};
	
	\draw[-r- = 0.40 rotate 180, double,double distance = 0.05cm] (T0) -- (T1);
	\draw[-] (T1) -- (T2);
	\draw[-] (T2) -- (T3);
	\draw[-] (T3) -- (T4);
	\draw[-] (T4) -- (T5);
	\draw[-] (T5) -- (T6);
	\draw[-r- = 0.60 rotate 0, double,double distance = 0.05cm,double,double distance = 0.05cm] (T6) -- (T7);
\end{tikzpicture}
\end{equation}

Kac's theorem tells us that there are four different outer $\bbZ_2$ automorphisms, which are given by sequences $(s_0, \dots, s_7)$, where all components vanish except $s_i = 1$ with $i\in\{0,1,2,3\}$.\footnote{To be very explicit, if we think of $\mathfrak{so}(n)$ as being $n\times n$ antisymmetric matrices, then these four classes of outer automorphisms can be thought of as coming from the adjoint actions of the four matrices $\mathrm{diag}(+1^{2i+1},-1^{15-2i}) \in O(16)$.} The theorem also produces the fixed Lie subalgebras
\begin{equation}
    \mathfrak{so}(1)\oplus\mathfrak{so}(15)\,,\quad
    \mathfrak{so}(3)\oplus\mathfrak{so}(13)\,,\quad
    \mathfrak{so}(5)\oplus\mathfrak{so}(11)\,,\quad
    \mathfrak{so}(7)\oplus\mathfrak{so}(9)\,.
\end{equation}
The fixed Lie subalgebras tell us about the weight-one subspace of the four different $(V_{D_8})^{\bbZ_2}$s. Of course, $\mathfrak{so}(1)$ doesn't really exist, but the demand for its appearance in the pattern becomes clear if we recall that $\mathfrak{so}(n)_1$ is the theory of $n$ free fermions. The patterns above reflect that the four different $(V_{D_8})^{\bbZ_2}$s are just breaking into theories of $(1,15)$, $(3,13)$, $(5,11)$, and $(7,9)$ left-moving free fermions. We address this further in Section \ref{sec:Z2DDFermionization}.

Moving on, let's compute the appropriate $\bbZ_2$-twisted characters of $V_{D_8}$. By the result of \cite{Hohn:2020xfe} they all take the form $\hat{\sigma}_i = e^{2\pi i (x^i,\,\cdot\,)}.\hat{g}$ and have twisted character formula given by Equation \eqref{eq:SvensTwisted}. The first piece for the computation is the twisted eta product $\eta_g(\tau)$. Using \eqref{eq:eta_product}, the $D_8$ Dynkin diagram automorphism gives
\begin{equation}
    \eta_g(\tau) = \eta(\tau)^6 \eta(2 \tau) = \frac{\eta(\tau)^8}{\sqrt{\theta_3(\tau) \theta_4(\tau)}}.
\end{equation}
Moreover, the invariant \textit{root lattice} $L^g$ can be read off from the $D_8$ Dynkin diagram
\begin{equation}
\begin{tikzpicture}[baseline={(current bounding box.center)}, scale = 1]
	\tikzstyle{vertex}=[circle, fill=black, minimum size=2pt,inner sep=2pt];
	\def\r{1.2};
	\node[vertex] (T3) at (\r*3,\r*0) {};
	\node[vertex] (T4) at (\r*4,\r*0) {};
	\node[vertex] (T5) at (\r*5,\r*0) {};
	\node[vertex] (T6) at (\r*6,\r*0) {};
	\node[vertex] (T7) at (\r*6.866,\r*0.866) {};
	\node[vertex] (T8) at (\r*6.866,-1*\r*0.866) {};
	
	\draw[-] (T3) -- (T4);
	\node[] (dots) at (\r*4.5,\r*0) {$\dots$};
	\draw[-] (T5) -- (T6);
	\draw[-] (T6) -- (T7);
	\draw[-] (T6) -- (T8);
	
	\draw [->- = 0.9 rotate 0, ->- = 0.1 rotate 180, black, shorten >=5pt, shorten <=5pt] (T7) to [out=-60,in=60] (T8);
\end{tikzpicture}\quad
\begin{tikzpicture}[scale = 1]
	\tikzstyle{vertex}=[circle, fill=black, minimum size=2pt,inner sep=2pt];
	\def\r{1.2};
	\draw[->,decorate,decoration={snake,amplitude=.4mm,segment length=2mm,post length=1mm}] (\r*0,-0.1) -- (\r*1.5,-0.1) node[midway, above] {Invariant};
\end{tikzpicture}\quad
\begin{tikzpicture}[baseline={(current bounding box.center)}, scale = 1]
	\tikzstyle{vertex}=[circle, fill=black, minimum size=2pt,inner sep=2pt];
	\def\r{1.2};
	\node[vertex] (T3) at (\r*3,\r*0) {};
	\node[vertex] (T4) at (\r*4,\r*0) {};
	\node[vertex] (T5) at (\r*5,\r*0) {};
	\node[vertex] (T6) at (\r*6,\r*0) {};
	\node[vertex] (T7) at (\r*7,\r*0) {};
	
	\draw[-] (T3) -- (T4);
	\node[] (dots) at (\r*4.5,\r*0) {$\dots$};
	\draw[-] (T5) -- (T6);
	\draw[-r- = 0.40 rotate 180, double,double distance = 0.05cm,double,double distance = 0.05cm] (T6) -- (T7);
\end{tikzpicture}
\end{equation}
Note that the fixed root lattice is actually a $C_7$ lattice, not the $B_7$ lattice that we might naively expect from looking at the fixed point Lie subalgebras.

The Dynkin diagram automorphism does not experience order doubling, so we can obtain the weights $e^{2\pi i(x^i,\alpha)}$ by finding vectors in $(1/2)({D_8}^*)^g$ as described in Theorem 2.13 of \cite{Hohn:2020xfe}. We find four choices: 
\begin{alignat}{3}
    &\mathfrak{so}(1)\oplus\mathfrak{so}(15):\quad
        &&x^0 = 0\,,\\
    &\mathfrak{so}(3)\oplus\mathfrak{so}(13):\quad
        &&x^1 =\frac{1}{2} \omega_1\,,\\
    &\mathfrak{so}(5)\oplus\mathfrak{so}(11):\quad
        &&x^2 = \frac{1}{2}\omega_2\,,\\
    &\mathfrak{so}(7)\oplus\mathfrak{so}(9):\quad
        &&x^3 = \frac{1}{2} \omega_3\,.
\end{alignat}

We can then finally sum over the $C_7$ root lattice, with these weights inserted, and combine with the twisted eta functions to obtain the following four duality defects:
\begin{align}
    \tr_{V_{D_8}} \hat{\sigma_i}\, q^{L_0 - \frac{c}{24}} = \frac{\sqrt{\theta_3(\tau) \theta_4(\tau)}}{2 \eta(\tau)^8}\qty(\theta_3^i(\tau) \theta_4^{7-i}(\tau) + \theta_3^{7-i}(\tau) \theta_4^{i}(\tau))\,.\label{eq:DD2FromLA}
\end{align}
%The strategy for this computation is to determine the sublattice for which $e^{2\pi i (x^i, \alpha)}$ is equal to one. Then, the theta function with the weights inserted is simply given by subtracting the theta function of $C_7$ from twice the theta function of the fixed sublattice.
In terms of $\mathfrak{so}(2r+1)_1$ characters
\begin{align}\label{eq:BnCharacters}
    \chi^{(r)}_{\hat{\omega}_0}(\tau)
        &=\frac{1}{2}\left(\frac{\theta_3(\tau)^{r+\frac{1}{2}}+\theta_4(\tau)^{r+\frac{1}{2}}}{\eta(\tau)^{r+\frac{1}{2}}}\right)\,,\\
    \chi^{(r)}_{\hat{\omega}_1}(\tau)
        &=\frac{1}{2}\left(\frac{\theta_3(\tau)^{r+\frac{1}{2}}-\theta_4(\tau)^{r+\frac{1}{2}}}{\eta(\tau)^{r+\frac{1}{2}}}\right)\,,\\
    \chi^{(r)}_{\hat{\omega}_r}(\tau)
        &=\frac{1}{\sqrt{2}}\frac{\theta_2(\tau)^{r+\frac{1}{2}}}{\eta(\tau)^{r+\frac{1}{2}}}\,,
\end{align}
we have (up to normalization)
\begin{equation}
    Z_V[0,X_i] 
    = \chi^{(i)}_{\hat{\omega}_0}(\tau)\chi^{(7-i)}_{\hat{\omega}_0}(\tau)
    - \chi^{(i)}_{\hat{\omega}_1}(\tau)\chi^{(7-i)}_{\hat{\omega}_1}(\tau)\,.
\end{equation}

\subsubsection{Fermionization}\label{sec:fermionization}
Given a bosonic theory $T_b$ with a non-anomalous global $\bbZ_2$ symmetry, we may produce a fermionic theory $T_f$ by turning the $\bbZ_2$ symmetry into a $\bbZ_2^f$ ``Grassmann parity'' i.e. $(-1)^F$ symmetry, whose background connection is the spin structure on the manifold (an affine $\bbZ_2$ connection).\footnote{We assume unitarity so that spin and statistics are related.} This is a generalization of the classic ``Jordan-Wigner transformation.'' In practice, on a spacetime $M$, this Jordan-Wigner transformation is given by
\begin{equation}
    Z_{T_f}[\rho] = \frac{1}{\sqrt{\abs{H^1(M,\bbZ_2)}}}\sum_{\alpha\in H^1(M,\bbZ_2)} (-1)^{\Arf[\alpha+\rho]+\Arf[\rho]} Z_{T_b}[\alpha]\,,
\end{equation}
where $\Arf[\rho] = 0$ for even spin structures and $\Arf[\rho]=1$ for odd spin structures. In other words, it is the number of zero modes of the Dirac operator mod 2 \cite{johnson:quadraticSpin, atiyah:spinStruct,seibergWitten:spinStructInString, karchTongTurner:webOf2d}.

In modern language, the partition function $(-1)^{\Arf[\rho]}$ is the generator of the group $\Hom(\Omega_2^{\textrm{Spin}}(pt),U(1)) = \bbZ_2$ of invertible topological phases that can be stacked with a theory with $(-1)^F$ symmetry \cite{fermionicSPTCobordism}. It can also be thought of as the effective action for the continuum version of the Majorana-Kitaev chain. Practically, the effect of stacking with the $\Arf$ phase is to change the relative sign of the even and odd partition functions. 

Conversely, given any fermionic theory $T_f$ (with $c_L-c_R \in 8\bbZ$), it is always possible to gauge the $(-1)^F$ symmetry, i.e. sum over spin-structures, and produce a bosonic theory $T_b$ with a non-anomalous global $\bbZ_2$ symmetry by the inverse of the Jordan-Wigner transformation, or ``GSO projection'' (see \cite{seibergWitten:spinStructInString, worldsheetGSO}). This can be done in two ways, by
\begin{equation}
    Z_{T_b^A}[\alpha] = \frac{1}{\sqrt{\abs{H^1(M,\bbZ_2)}}}\sum_{\rho} (-1)^{\Arf[\alpha+\rho]+\Arf[\rho]} Z_{T_f}[\rho]\,,
\end{equation}
or by first stacking with $\Arf$
\begin{equation}
    Z_{T_b^B}[\alpha] = \frac{1}{\sqrt{\abs{H^1(M,\bbZ_2)}}}\sum_{\rho} (-1)^{\Arf[\alpha+\rho]} Z_{T_f}[\rho]\,.
\end{equation}
The two bosonizations are related by gauging the emergent $\bbZ_2$ symmetry \cite{KramWan, karchTongTurner:webOf2d, Thorngren:2018bhj}, i.e.
\begin{equation}
    [\mathrm{GSO}[T_f]/\bbZ_2] = \mathrm{GSO}[T_f \times \Arf]\,.
\end{equation}
For more examples of this procedure in action, see \cite{MonsterCFT, Ji:2019ugf, fermionicMinimalModels, moreMinimalModels, lessonsFromRamond, okuda2020u1, Smith:2021luc}.

When we gauge $(-1)^F$ we can ask what happens to the chiral fermion parities $(-1)^{F_L}$ and $(-1)^{F_R}$. As explained in \cite{Thorngren:2018bhj, MonsterCFT, Ji:2019ugf} they become $\TY(\bbZ_2)$ duality defect lines separating $T_b^A$ from $T_b^B$.

To make all this concrete, recall the partition function(s) for a chiral (left-moving say) Majorana-Weyl fermion in it's different sectors
\begin{align}
    \tr_{\textrm{NS}} q^{L_0 - \frac{c}{24}} 
        &= \sqrt{\frac{\theta_3(\tau)}{\eta(\tau)}}\,, \\
    \tr_{\textrm{R}} q^{L_0 - \frac{c}{24}}
        &= \sqrt{\frac{\theta_2(\tau)}{\eta(\tau)}}\,,\\
    \tr_{\textrm{NS}} (-1)^F q^{L_0 - \frac{c}{24}}
        &= \sqrt{\frac{\theta_4(\tau)}{\eta(\tau)}}\,,\\
    \tr_{\textrm{R}} (-1)^F q^{L_0 - \frac{c}{24}} 
        &= \sqrt{\frac{\theta_1(\tau)}{\eta(\tau)}}\,.
\end{align}
A ``full'' Majorana fermion (with both left and right-moving contribution) has the spin-structure dependent partition function
\begin{equation}
    Z_{\textrm{Full}}[\rho] = Z_{\textrm{Maj.}}[\rho]\bar{Z}_{\textrm{Maj.}}[\rho]\,.
\end{equation}
For example, we can GSO project the Majorana-Weyl fermion using the formulas above, and in either case we get the Ising partition function
\begin{equation}
    Z_{T_b^{A,B}}[0,0] =  \frac{1}{2}\left(\abs{\frac{\theta_3}{\eta}}+\abs{\frac{\theta_2}{\eta}}+\abs{\frac{\theta_4}{\eta}}\pm\abs{\frac{\theta_1}{\eta}}\right)\,.
\end{equation}
%The fact that $Z_{\textrm{Maj.}}[R,R] = 0$ in some sense ``explains'' why the Ising CFT is self-dual under $\bbZ_2$ orbifold, because the GSO projections of the $(-1)^F$ symmetry couldn't possibly give different partition functions.

To recover the Ising duality defected partition function, we need to gauge the diagonal spin-structure but first insert a line twisting by the chiral fermion parity i.e. we have
\begin{equation}
    \sum_{\rho} Z_{\textrm{Maj.}}[\rho_1,\rho_2+1]\bar{Z}_{\textrm{Maj.}}[\rho_1,\rho_2] \propto \sqrt{2}\lvert\chi_0\rvert^2 - \sqrt{2}\lvert\chi_{\frac{1}{2}}\rvert^2\,.
\end{equation}

More generally, all of this comes from the fact that $(-1)^{F_L}$ has one unit of the ``mod 8 anomaly'' coming from $\Hom(\Omega_3^{\textrm{Spin}}(B\bbZ_2),U(1)) = \bbZ_8$. For $n$ Majorana fermions the bosonization is the $\mathrm{Spin}(n)_1$ WZW model, and the $(-1)^{F_L}$ symmetry line bosonizes to $\TY(\bbZ_2,1,+1/\sqrt{2})$ if $n=1,7\,\Mod 8$ and $\TY(\bbZ_2,1,-1/\sqrt{2})$ if $n=3,5\,\Mod 8$. See \cite{Thorngren:2018bhj, Ji:2019ugf, Qi_2013,Ryu_2012} and references within for more technical details.

\subsubsection{Duality Defects from Fermionization}\label{sec:Z2DDFermionization}
From Section \ref{ex:VE8} we see that chiral theory $V := V_{E_8}$ is just 16 chiral Majorana-Weyl fermions. From the discussion in the previous section, we see that we have $4$ choices of duality defect based on if we're going to view the $E_8$ theory as coming from the GSO projection of
\begin{equation}
    Z_{\textrm{Maj.}}^1[\rho] Z_{\textrm{Maj.}}^{15}[\rho]\,,\quad
    Z_{\textrm{Maj.}}^3[\rho] Z_{\textrm{Maj.}}^{13}[\rho]\,,\quad
    Z_{\textrm{Maj.}}^5[\rho] Z_{\textrm{Maj.}}^{11}[\rho]\,,\quad
    Z_{\textrm{Maj.}}^7[\rho] Z_{\textrm{Maj.}}^{9}[\rho]\,.
\end{equation}

Let $X_p$ ($p=1,3,5,7$) be the duality defect obtained from bosonizing $(-1)^{F_L}$ in the $Z_{\textrm{Maj.}}^p [\rho] Z_{\textrm{Maj.}}^{16-p}[\rho]$ setup. Then the partition function with $X_p$ inserted along the spatial $S^1$ is
\begin{align}
    Z_V[0,X_p] 
        &\propto \frac{1}{2}\sum_{\rho} Z_{\textrm{Maj.}}^p[\rho_1,\rho_2+1] Z_{\textrm{Maj.}}^{16-p}[\rho_1,\rho_2]\,,\\
        &=\frac{(\theta_3(\tau)\theta_4(\tau))^{\frac{p}{2}}}{2\eta(\tau)^8}\left(\theta_3(\tau)^{8-p}+\theta_4(\tau)^{8-p}\right)\,.
\end{align}
This precisely matches the results of Equation \eqref{eq:DD2FromLA}, where no intimate knowledge of superfusion categories, bosonization, or fermionic anomalies were required.

\section{Higher order duality defects}\label{sec:higherDDs}
In this section, we investigate duality defects for higher order cyclic symmetries. Along the way we highlight a number of potential phenomena that one might encounter in enacting our defect hunting procedure described in Section \ref{sec:DefectedPFs}, such as order doubling and $\mathfrak{u}(1)$ factors. We compute the $q$-expansions of the defect partition functions using Magma \cite{Magma}.

\subsection{\texorpdfstring{$\bbZ_3$}{Z3} Duality Defects}\label{sec:Z3Defects}
As explained in Section \ref{sec:Automorphisms}, there is only one non-anomalous $\bbZ_3$ symmetry up to conjugation. We will see that this symmetry gives rise to seven $\bbZ_3$ Tambara-Yamagami lines, which we will investigate using both numerical and analytical methods.

In Section \ref{sec:orbifoldsOfVE8} we found that the non-anomalous $\bbZ_3$ symmetry is given by $x = \frac{1}{3} \omega_2$. Following our procedure, we obtain the fixed sub-VOA by ``chopping'' the second root of the affine $E_8$ Dynkin diagram
\begin{equation}
\begin{tikzpicture}[baseline={(current bounding box.center)}, scale = 1]
	\tikzstyle{vertex}=[circle, fill=black, minimum size=2pt,inner sep=2pt];
	\def\r{1.2};
	\node[vertex] (T1) at (\r*1,\r*0) {};
	\node[vertex] (T2) at (\r*2,\r*0) {};
	\node[cross out,draw] (T3) at (\r*3,\r*0) {};
	\node[vertex] (T4) at (\r*4,\r*0) {};
	\node[vertex] (T5) at (\r*5,\r*0) {};
	\node[vertex] (T6) at (\r*6,\r*0) {};
	\node[vertex] (T7) at (\r*7,\r*0) {};
	\node[vertex] (T8) at (\r*8,\r*0) {};
	\node[vertex] (T9) at (\r*6,\r*1) {};

	\draw[-] (T1) -- (T2);
	\draw[-, dashed] (T2) -- (T3);
	\draw[-, dashed] (T3) -- (T4);
	\draw[-] (T4) -- (T5);
	\draw[-] (T5) -- (T6);
	\draw[-] (T6) -- (T7);
	\draw[-] (T7) -- (T8);
	\draw[-] (T9) -- (T6);
	\label{eq:A2E6Chopping}
\end{tikzpicture}
\end{equation}
Therefore, we find that $(V_{E_8})^{\bbZ_3} \cong V_{A_2 \times E_6}$.

Before starting the analysis of the Tambara-Yamagami lines, let's study $V_{A_2\times E_6}$ more carefully and verify that we can extend it into two different versions of $V_{E_8}$. For this, recall the following theta functions for $A_2$ and $E_6$ (for details see \cite{conway2013sphere}):
\begin{align}
    \theta_{A_2}(\tau) &= \theta_2(2 \tau) \theta_2(6 \tau) + \theta_3(2\tau) \theta_3(6\tau)\,,\\
    \theta_{E_6}(\tau) &= \frac{1}{4}f(\tau)^3 + \theta_{A_2}(\tau)^3\,,
\end{align}
where 
\begin{equation}
    f(\tau) = \theta_{A_2}\left(\frac{\tau}{3}\right) - \theta_{A_2}(\tau)\,.
\end{equation}

The $A_2$ and $E_6$ lattices have three cosets inside their respective duals. These cosets are parametrized by shift vectors which we will call $[1]$ and $[2]$ in both cases. This gives two shifted theta functions for each lattice given by
\begin{align}
    \theta_{A_2+[1]}(\tau) = \theta_{A_2+[2]}(\tau) &= \frac{1}{2}f(\tau),\\
    \theta_{E_6+[1]}(\tau) = \theta_{E_6+[2]}(\tau) &= \frac{3}{4}f(\tau)^2 \,\theta_{A_2}(\tau).
\end{align}

We organize the cosets of $A_2\times E_6$ inside its dual as a $3\times3$ square, where the first column and the top row give two different extensions to an $E_8$ lattice inside the dual of the $A_2\times E_6$ lattice:

\begin{equation}
(A_2\times E_6)^* =
    \begin{tabular}{|c|c|c|}\hline
        $A_{2}\oplus E_{6}$            
            & $A_{2}^{[1]}\oplus E_{6}^{[2]}$
            & $A_{2}^{[2]}\oplus E_{6}^{[1]}$ 
            \\\hline
        $A_{2}^{[1]}\oplus E_{6}^{[1]}$              
            & $A_{2}^{[2]}\oplus E_{6}$
            & $A_{2}\oplus E_{6}^{[2]}$
            \\\hline
        $A_{2}^{[2]}\oplus E_{6}^{[2]}$ 
            & $A_{2}\oplus E_{6}^{[1]}$
            & $A_{2}^{[1]}\oplus E_{6}$
            \\\hline
    \end{tabular}
    \quad\stackrel{\mathrm{Weight}}{\rightsquigarrow}\quad
    q^A(x)=
    \begin{tabular}{|c|c|c|}
        \hline
        $0$ & $0$ & $0$\\
        \hline
        $0$ & $\tfrac{1}{3}$ & $\tfrac{2}{3}$\\
        \hline
        $0$ & $\tfrac{2}{3}$ & $\tfrac{1}{3}$\\
        \hline
    \end{tabular}\,.
\end{equation}
This shows explicitly that this order three symmetry is non-anomalous. Summing the relevant theta functions we obtain
\begin{equation}
    \theta_{E_8}(\tau) = \theta_{A_2}(\tau)\left(f(\tau)^3 + \theta_{A_2}(\tau)^3\right).
\end{equation}

Now we turn towards the duality defects of this $\bbZ_3$-symmetry. For this purpose, we need to study order two symmetries of the $A_2\times E_6$ lattice and their possible lifts to the corresponding VOA. After that, we investigate which of the $\bbZ_2$ symmetries switch the two $E_8$ extensions and swap the axes of $\Irr(V^{\bbZ_3})$.

We actually start by looking at the automorphisms of the relevant Lie algebras. The $A_2$ Lie algebra has two inner automorphisms of order $ \leq 2$ (up to conjugation) and one outer automorphism of order 2. To see this, note the affine and twisted Dynkin diagrams are
\begin{equation}
A_2^{(1)}\,=
\begin{tikzpicture}[baseline={(current bounding box.center)}, scale = 1]
	\tikzstyle{vertex}=[circle, fill=black, minimum size=2pt,inner sep=2pt];
	\def\r{1.2};
	\node[vertex] (T3) at (\r*0,\r*0) {};
	\node[vertex] (T4) at (\r*1/2,\r*1.732/2) {};
	\node[vertex] (T5) at (\r*1,\r*0) {};
	
	\draw[-] (T3) -- (T4);
	\draw[-] (T3) -- (T5);
	\draw[-] (T5) -- (T4);
	
	\draw[left] (T3) node {$1$};
	\draw[right] (T4) node {$1$};
	\draw[right] (T5) node {$1$};
\end{tikzpicture}\qquad
A_2^{(2)}\,=
\begin{tikzpicture}[baseline={(current bounding box.center)}, scale = 1]
	\tikzstyle{vertex}=[circle, fill=black, minimum size=2pt,inner sep=2pt];
	\def\r{1.2};
	\node[vertex] (T6) at (\r*6,\r*0) {};
	\node[vertex] (T7) at (\r*7,\r*0) {};

    \draw[double,double distance = 0.15cm] (T6) -- (T7);
	\draw[-r- = 0.40 rotate 180, double,double distance = 0.05cm] (T6) -- (T7);
	
	\draw[below] (T6) node {$2$};
	\draw[below] (T7) node {$1$};
\end{tikzpicture}
\end{equation}
The inner automorphisms can be read off from the affine Dynkin diagram. Kac's Theorem tells us we can either chop one node with weight $s_i=2$, or we can chop two nodes with weights $s_i=s_j=1$; by the symmetry of the affine Dynkin diagram, it doesn't matter which we chop. In the first case we get the trivial automorphism, and in the second case we get an inner automorphism of order $2$ fixing an $A_1 \times \mathfrak{u}(1)$ Lie algebra. There is only one outer automorphism which comes from chopping the node marked $1$ with weight $1$, fixing an $A_1$ Lie algebra. Standard lifts of this automorphism experience order doubling (as we will explain in Section \ref{sec:OrderDoubling}).

The $E_6$ case is similar, we have
\begin{equation}
E_6^{(1)} \,=
\begin{tikzpicture}[baseline={(current bounding box.center)}, scale = 1]
	\tikzstyle{vertex}=[circle, fill=black, minimum size=2pt,inner sep=2pt];
	\def\r{1.2};
	\node[vertex] (T4) at (\r*4,\r*0) {};
	\node[vertex] (T5) at (\r*5,\r*0) {};
	\node[vertex] (T6) at (\r*6,\r*0) {};
	\node[vertex] (T7) at (\r*7,\r*0) {};
	\node[vertex] (T8) at (\r*8,\r*0) {};
	\node[vertex] (T9) at (\r*6,\r*1) {};
	\node[vertex] (T10) at (\r*6,\r*2) {};

	\draw[-] (T4) -- (T5);
	\draw[-] (T5) -- (T6);
	\draw[-] (T6) -- (T7);
	\draw[-] (T7) -- (T8);
	\draw[-] (T9) -- (T6);
	\draw[-] (T9) -- (T10);
	
	\draw[below] (T4) node {$1$};
	\draw[below] (T5) node {$2$};
	\draw[below] (T6) node {$3$};
	\draw[below] (T7) node {$2$};
	\draw[below] (T8) node {$1$};
	\draw[right] (T9) node {$2$};
	\draw[right] (T10) node {$1$};
\end{tikzpicture}\quad
E_6^{(2)}\,=
\begin{tikzpicture}[baseline={(current bounding box.center)}, scale = 1]
	\tikzstyle{vertex}=[circle, fill=black, minimum size=2pt,inner sep=2pt];
	\def\r{1.2};
	\node[vertex] (T4) at (\r*4,\r*0) {};
	\node[vertex] (T5) at (\r*5,\r*0) {};
	\node[vertex] (T6) at (\r*6,\r*0) {};
	\node[vertex] (T7) at (\r*7,\r*0) {};
	\node[vertex] (T8) at (\r*8,\r*0) {};

	\draw[-] (T4) -- (T5);
	\draw[-] (T5) -- (T6);
	\draw[-r- = 0.40 rotate 180, double,double distance = 0.05cm,double,double distance = 0.05cm] (T6) -- (T7);
	\draw[-] (T7) -- (T8);

	\draw[below] (T4) node {$1$};
	\draw[below] (T5) node {$2$};
	\draw[below] (T6) node {$3$};
	\draw[below] (T7) node {$2$};
	\draw[below] (T8) node {$1$};
\end{tikzpicture}
\end{equation}
We find 3 inner automorphisms of the $E_6$ Lie algebra up to conjugation, given by: chopping a node marked $1$ with weight $2$ (the trivial automorphism); chopping a node marked $2$ with weight $1$, fixing $A_1 \times A_5$; or chopping two nodes marked $1$ with weight $1$, fixing $D_5\times \mathfrak{u}(1)$. We find $2$ outer automorphisms of $E_6$, fixing an $F_4$ or a $C_4$ Lie algebra.

Putting everything together, we have 15 automorphisms of the $A_2\times E_6$ Lie algebra of order less than or equal to 2. These automorphisms can be organised as in Table \ref{tab:A2E6automorphsims}.
\begin{table}[t]
    \centering
    \begin{tabular}{c c c}
    \toprule
    & $E_6$ inner (3) & $E_6$ outer (2)\\
    \midrule 
    $A_2$ inner (2) & 6         & 4 \\
    $A_2$ outer (1) & 3         & 2 \\
    \bottomrule
    \end{tabular}
    \caption{There are 15 automorphisms of the $A_2 \times E_6$ Lie algebra of order less than or equal to 2. The row and column labels denote the origin of the $A_2\times E_6$ automorphisms. For example, there are $4$ $\bbZ_2$ actions on $A_2\times E_6$ which come from one of the (2) inner actions on $A_2$ times one of the (2) outer actions on $E_6$.}
    \label{tab:A2E6automorphsims}
\end{table}
Similar to the $D_8$ case, we can read off automorphisms which switch the axes of $\Irr(V^{\bbZ_3})$ by looking at the chopped $E_8$ Dynkin diagram \eqref{eq:A2E6Chopping}. We see that there are two ways to recombine the $A_2$ and $E_6$ into an (affine) $E_8$: by gluing the chopped node back in, or by acting with \textit{one} of (the lifts of) the Dynkin diagram automorphisms of $A_2$ or $E_6$, \textit{but not both}, and then gluing. From the off-diagonals of Table \ref{tab:A2E6automorphsims}, this gives a total of seven automorphisms switching the axes and hence seven Tambara-Yamagami lines. However, before computing the defect partition functions, we need to look at order-doubling in the $A_2$ lattice.

\subsubsection{Example: Order Doubling in \texorpdfstring{$A_2$}{A2}}\label{sec:OrderDoubling}
Before moving forward, let's consider a \textit{standard} lift of the one non-trivial outer automorphism of $A_2$, as it is a prototypical example of order-doubling.

So let $g$ be the automorphism of the $A_2$ lattice that comes from the Dynkin diagram automorphism; switching the two simple roots $\alpha_1$ and $\alpha_2$. As described in Section \ref{sec:Automorphisms}, $g$ can be lifted to a symmetry $\hat{g}$ of the lattice VOA in a ``standard way'': swapping the chiral bosons $a^1_n \leftrightarrow a^2_n$, and sending $\Gamma_{\alpha} \mapsto u(\alpha) \Gamma_\alpha$, with $u(\alpha)=1$ on the fixed sublattice. In this case, the fixed sublattice is a ``long $A_1$'' spanned by $\alpha_1+\alpha_2$
\begin{equation}
\begin{tikzpicture}[baseline={(current bounding box.center)}, scale = 1]
	\tikzstyle{vertex}=[circle, fill=black, minimum size=2pt,inner sep=2pt];
	\def\r{1.2};
	\node[vertex] (T3) at (\r*3,\r*0) {};
	\node[vertex] (T4) at (\r*4,\r*0) {};
	
	\draw[-] (T3) -- (T4);
	\draw [->- = 0.9 rotate 0, ->- = 0.1 rotate 180, black, shorten >=5pt, shorten <=5pt] (T3) to [out=45,in=135] (T4);
\end{tikzpicture}\quad
\begin{tikzpicture}[scale = 1]
	\tikzstyle{vertex}=[circle, fill=black, minimum size=2pt,inner sep=2pt];
	\def\r{1.2};
	\draw[->,decorate,decoration={snake,amplitude=.4mm,segment length=2mm,post length=1mm}] (\r*0,-0.1) -- (\r*1.5,-0.1) node[midway, above] {Invariant};
\end{tikzpicture}\quad
\begin{tikzpicture}[baseline={(current bounding box.center)}, scale = 1]
	\tikzstyle{vertex}=[circle, fill=black, minimum size=2pt,inner sep=2pt];
	\def\r{1.2};
	\node[vertex] (T3) at (\r*3,\r*0) {};
\end{tikzpicture}
\end{equation}

Clearly $\hat{g}^2$ is the identity on the chiral bosons, but
\begin{equation}
    \hat{g}^2 \Gamma_{\alpha} = u(\alpha) u(g\alpha) \Gamma_{\alpha}\,.
\end{equation}
Using Equation \eqref{eq:liftExists} we have
\begin{equation}
    u(\alpha) u(g\alpha) = \frac{\epsilon(\alpha,g\alpha)}{\epsilon(g\alpha,g^2\alpha)}u(\alpha+g\alpha) = (-1)^{(g\alpha,\alpha)} u(\alpha+g\alpha)\,.
\end{equation}
Thus the order of $\hat{g}$ is doubled if and only if $(g\alpha,\alpha) \notin 2\bbZ$. 

Better yet, we explicitly see the origin of order doubling here: we get order doubling because the Dynkin diagram automorphism folds together two simple roots which were not originally orthogonal. Hence, just by looking at Dynkin diagrams, we can tell which automorphisms will have order doubling.

At the end of Section \ref{sec:Automorphisms} we saw all outer order $2$ automorphisms of $V_{A_2}$ can be viewed as a product of a (fixed) standard lift of this Dynkin diagram automorphism, and a twist by an ``inner'' automorphism (see Equation \eqref{eq:svensAutomorphisms}). Now we see the importance of allowing twists by inner automorphisms of order up to $4$. If we had not, we might falsely conclude that ``there are no outer automorphisms of $V_{A_2}$ of order 2,'' which would directly contradict the results of \cite{dong1999automorphism} and the fact known to physicists that the lattice automorphism $\alpha\mapsto-\alpha$ always lifts to a ``charge conjugation automorphism'' of order $2$ in the VOA (note: $\alpha\mapsto-\alpha$ is not in the Weyl group of the $A_2$ lattice).

%Since the order of the automorphism is even we need to check whether $(\alpha, \nu \alpha) \in 2 \bbZ$ for all $\alpha \in A_2$. Clearly this is not the case because for $\alpha_1$ we have $(\alpha_1, \nu \alpha_1) = (\alpha_1, \alpha_2) = -1$. This means that the lift of the automorphism will be of order four.

If instead we choose $u$ so that $u(\alpha_1 +\alpha_2) = -1$ the lift has order two, and we can compute the appropriate twisted theta function. The fixed sublattice is a long $A_1$ lattice with theta function $\theta_3(2\tau)$, but, due to the non-trivial lift, we adjust some signs and obtain $\theta_4(2\tau)$.

\subsubsection{\texorpdfstring{$\bbZ_3$}{Z3} Duality Defects from Lie Theory}
The partition functions corresponding to the seven Tambara-Yamagami lines can be computed using similar techniques to the order two case. A careful study of the lattices and partition functions gives the following list of theta functions 
\begin{align}
    \theta_{A_2,\text{inner}}(\tau) 
        &= \theta_3(2 \tau) \theta_3(6 \tau) - \theta_2(2 \tau) \theta_2(6\tau)\\
    \theta_{A_2,\text{outer}}(\tau) 
        &= \theta_4(2\tau)\\
    \theta_{E_6,\text{inner 1}}(\tau) 
        &= 2 \theta_{A_5}(\tau) \theta_3(2 \tau) - \theta_{E_6}(\tau)\\\
    \theta_{E_6,\text{inner 2}}(\tau) 
        &= 1 + 8q - 50q^2 + 80q^3 - 88q^4 + \dots\\
    \theta_{E_6,\text{outer 1}}(\tau) 
        &= \tfrac{1}{2}( \theta_3^4(\tau) + \theta_4^4(\tau))\\
    \theta_{E_6,\text{outer 2}}(\tau) 
        &= \theta_4^4(2\tau)\,.
\end{align}
The missing theta function corresponding to an inner $E_6$ symmetry is particularly messy, so we display the first few terms of its $q$-expansion. The twisted eta products are easily computed being $\eta(\tau)^6 \eta(2\tau)$ for the outer $A_2$ symmetries and $\eta(\tau)^4 \eta(2\tau)^2$ for the outer $E_6$ symmetries. In summary, we have the following seven defected partition functions:

\begin{align}
    Z_V[0,X_{A_2}^{(1)}]
        &= \frac{\theta_{A_2,\text{outer}}(\tau) \theta_{E_6}(\tau)}{\eta(\tau)^6 \eta(2\tau)}\,,\label{eq:Z3Defect1}\\
    Z_V[0,X_{A_2}^{(2)}]
        &= \frac{\theta_{A_2,\text{outer}}(\tau) \theta_{E_6,\text{inner 1}}(\tau)}{\eta(\tau)^6 \eta(2\tau)}\,,\\
   Z_V[0,X_{A_2}^{(3)}] 
        &= \frac{\theta_{A_2,\text{outer}}(\tau) \theta_{E_6,\text{inner 2}}(\tau)}{\eta(\tau)^6 \eta(2\tau)}\,,\\
    Z_V[0,X_{E_6}^{(1)}] 
        &= \frac{\theta_{A_2}(\tau) \theta_{E_6, \text{outer 1}}(\tau)}{\eta(\tau)^4 \eta(2\tau)^2}\,,\label{eq:Z3Defect4}\\
    Z_V[0,X_{E_6}^{(2)}] 
        &= \frac{\theta_{A_2,\text{inner}}(\tau) \theta_{E_6, \text{outer 1}}(\tau)}{\eta(\tau)^4 \eta(2\tau)^2}\,,\\
    Z_V[0,X_{E_6}^{(3)}] 
        &= \frac{\theta_{A_2}(\tau) \theta_{E_6, \text{outer 2}}(\tau)}{\eta(\tau)^4 \eta(2\tau)^2}\,,\\
    Z_V[0,X_{E_6}^{(4)}] 
        &= \frac{\theta_{A_2,\text{inner}}(\tau) \theta_{E_6, \text{outer 2}}(\tau)}{\eta(\tau)^4 \eta(2\tau)^2}\,.\label{eq:Z3Defect7}
\end{align}
We have recorded the $q$-expansions of these defect partition functions (in the same order) with their fixed Lie subalgebras in Table \ref{tab:class_duality_3}. We also obtain the defect in Equation \eqref{eq:Z3Defect4} from the Potts CFT in Appendix \ref{sec:PottsDefect}. 

\begin{table}[t]
\centering
\begin{tabular}{ >{\centering\arraybackslash} m{4cm} >{\raggedright\arraybackslash} m{8cm}}    \toprule
    \emph{$((V^{\bbZ_3})^{\bbZ_2})_1$} & \emph{$q$-Expansion of $\bbZ_3$ Duality Defects} \\\midrule
    %% Order 3
        $A_1 \times E_6$ 
            & $q^{-\frac{1}{3}}(1 + 76 q + 574 q^2 + 3000 q^3 + O(q^4))$\\
        $A_1 \times A_1 \times A_5$
            & $q^{-\frac{1}{3}}(1 - 4 q + 14 q^2 - 40 q^3 + O(q^4))$\\
        $A_1 \times D_5 \times \mathfrak{u}(1)$
            & $q^{-\frac{1}{3}}(1 + 12 q - 2 q^2 + 56 q^3 + O(q^4))$\\
        $A_2 \times F_4$
            & $q^{-\frac{1}{3}}(1 + 34  q + 304 q^2 + 1446 q^3 + O(q^4))$\\
        $A_1 \times F_4 \times \mathfrak{u}(1)$
            & $q^{-\frac{1}{3}}(1 + 26 q + 80 q^2 + 350 q^3 + O(q^4))$\\
        $A_2 \times C_4$
            & $q^{-\frac{1}{3}}(1 + 2 q - 16 q^2 + 38 q^3 + O(q^4))$\\
        $A_1 \times C_4 \times \mathfrak{u}(1)$
            & $q^{-\frac{1}{3}}(1 - 6 q + 16 q^2 - 34 q^3 + O(q^4))$\\
    \bottomrule
    \hline
\end{tabular}
\caption{List of the defect partition functions of order 3, with defect twisting time. On the left are the weight one subspaces of $(V^{\bbZ_3})^{\bbZ_2}$; these are Lie algebras, not invariant lattices. On the right are $q$-expansions of the defect partition functions from Equations (\ref{eq:Z3Defect1}\,--\,\ref{eq:Z3Defect7}). The first 3 defects are lifted from outer $A_2$ lattice actions, the next 4 are lifted from outer $E_6$ lattice actions.}\label{tab:class_duality_3}
\end{table}

As a final check, we've obtained the same $q$-expansions using computer algebra software Magma (in a way that we will explain in the following section) that matches with (at least) the first 11 terms of the $q$-expansions of the exact theta series above. This convinces us that our numerical methods work in the order three case, and justifies the computation of $\bbZ_4$ and $\bbZ_5$ defected partition functions using numerics only. In principle, exact formulas are obtainable in those cases following the techniques above.

We comment on data relevant to the symmetric non-degenerate bicharacter in Appendix \ref{sec:Rearrangements}.

\subsection{\texorpdfstring{$\bbZ_4$}{Z4} Duality Defects and Computer Implementation}\label{sec:Z4Defects}
Moving to the order 4 defects, our direct Lie algebra methods become less useful. As seen in Table \ref{tab:nonanomsym}, the only non-anomalous order 4 symmetry of $E_8$ has fixed Lie algebra $A_7\times\mathfrak{u}(1)$. This is not a simple Lie algebra and thus does not have an associated root system.

\subsubsection{An Algorithm for Computing Defects}
In this case, we can still construct the fixed sublattice as in the beginning of Section \ref{sec:Z2Orbifold}. The symmetry is generated by the vector 
\begin{equation}
    x=\frac{1}{4}\omega_8.
\end{equation}
Accordingly, the fixed sublattice is
\begin{equation}\label{eq:Z4fixedLattice}
    L_0=\mathbb{Z}\alpha_1\oplus\cdots\oplus\mathbb{Z}\alpha_7\oplus \mathbb{Z}4\alpha_8.
\end{equation}

We can now compute $q$-expansions of our defect partition functions with the aid of Magma \cite{Magma}.\footnote{We thank Sven M\"oller for providing us with our original Magma code and helpful Magma lessons.} The code is effectively an implementation of Theorem 2.13 of \cite{Hohn:2020xfe}. The algorithm is as follows:
\begin{enumerate}
    \itemsep0em
    \item[Step 0.] Input an even lattice $L_0$. This can be done in Magma by providing a basis for $L_0$ as above with built-in function BasisWithLattice(\,).
    \item[Step 1.] Compute the automorphism group $O(L_0)$ with built-in function AutomorphismGroup(\,) and the setwise stabilizer of a fixed set of roots $H_{\Delta}$. To obtain $H_{\Delta}$ is more involved:
    \begin{enumerate}
        \itemsep0em
        \item Enumerate a full set of simple roots (of length 2) $\Delta$ and split them into positive and negative roots. Note: this collection of simple roots may not be full rank, e.g. if there are $\mathfrak{u}(1)$'s involved and the lattice is not a root-lattice.
        \item Find the subgroup of $O(L_0)$ stabilizing some choice of simple roots by looking at group orbits with built-in function OrbitAction(\,,\,). It is possible that $O(L_0)$ won't act faithfully on the collection $\Delta$, in which case continually append $\Delta$ with vectors of higher norm until $O(L_0)$ acts faithfully, then obtain $H_{\Delta}$.
    \end{enumerate}
    In view of \cite{Hohn:2020xfe}, every finite-order automorphism of the VOA $V_L$ can be constructed by considering a fixed set of representatives $\nu$ of the conjugacy classes of $H_\Delta$. Obtained with built-in function ConjugacyClasses(\,).
    \item[Step 2.] Compute the centralizer $C_{O(L)}(\nu)$ of $\nu$ in $O(L)$ or, more precisely, a fixed set of orbit representatives of its action on $L^\nu\otimes_\mathbb{Z}\mathbb{Q}/{\pi_\nu(L')}$.
    Here $\pi_\nu$ is the projection of $L\otimes_\mathbb{Z}\mathbb{C}$ onto $L^\nu\otimes_\mathbb{Z}\mathbb{C}$, with $L^\nu$ the sublattice fixed by $\nu$. These are our ``outer $\bbZ_2$'' actions.
    
    Proposition 2.15 of \cite{Hohn:2020xfe} guarantees that in order to find automorphisms of order 2 swapping the axes, we only need to look at order 1 and 2 lattice automorphisms, $\nu$, and restrict the choice of $h$'s according to whether or not the standard lift of $\nu$ has order doubling.
    
    \item[Step 3.] Loop through all $\nu$ and compute all possible independent $h$ for each $\nu$. For each $\nu$ loop over all $h$ to create the automorphism $\hat{\nu}e^{2\pi i h(0)}$.
    
    \item[Step 4.] Implementing the formula in Equation \eqref{eq:SvensTwisted}, compute the $q$-expansions of our $\bbZ_2$-twisted characters for $V^{\bbZ_m}$ by summing over weighted lattice vectors.
\end{enumerate}

At this point, we have enumerated all order 2 automorphisms of our VOA $V_{L_0}$ and the (q-expansions of the) $\bbZ_2$ twisted characters. However, not all order 2 automorphisms swap the axes of the metric Abelian group $A \cong \bbZ_m \times \hat{\bbZ}_m$. Fortunately, this is simple to check within the code described above. To do this, note that our automorphisms extend to all of $\bbR^8$ and hence to our original $E_8$ lattice $L$. Thus, by looking at the image of the basis vectors under $\nu$, we can determine if the automorphism leaves the original $E_8$ invariant. In the case that it is not left invariant, then $L$ must be mapped to a different lattice, which must be the second $E_8$ extension $L^\prime$ by orthogonality of the automorphism. A similar procedure also tells us about the symmetric non-degenerate bicharacter, see Appendix \ref{sec:Rearrangements}.

\subsubsection{The \texorpdfstring{$\bbZ_4$}{Z4} Defects}
Let $b_i=\alpha_i$ for $i=1,\dots,7$, and let $b_8 = 4\alpha_8$. Following our algorithm above, we find three non-trivial lattice automorphisms $\nu$ of $L_0 = \bbZ\{b_i\}_{i=1}^8$. As matrices $b_i \mapsto (\nu_k)_{ij} b_j$, we have $\nu_1^{(4)}$, $\nu_2^{(4)}$ and $\nu_3^{(4)}$, which are respectively
\begin{equation}
\begin{aligned}
    \begin{pmatrix}
        0\, & 0\, & 0\, & 0\, & 0\, & 0\, & 1\, & -4\\
        0 & 0 & 0 & 0 & 0 & 1 & 0 & -8\\
        0 & 0 & 0 & 0 & 1 & 0 & 0 & -12\\
        0 & 0 & 0 & 1 & 0 & 0 & 0 & -12\\
        0 & 0 & 1 & 0 & 0 & 0 & 0 & -12\\
        0 & 1 & 0 & 0 & 0 & 0 & 0 & -8\\
        1 & 0 & 0 & 0 & 0 & 0 & 0 & -4\\
        0 & 0 & 0 & 0 & 0 & 0 & 0 & -1
    \end{pmatrix}\qc\begin{pmatrix}
       1\, & 0\, & 0\, & 0\, & 0\, & 0\, & 0\, & -3\\
        0 & 1 & 0 & 0 & 0 & 0 & 0 & -6\\
        0 & 0 & 1 & 0 & 0 & 0 & 0 & -9\\
        0 & 0 & 0 & 1 & 0 & 0 & 0 & -12\\
        0 & 0 & 0 & 0 & 1 & 0 & 0 & -15\\
        0 & 0 & 0 & 0 & 0 & 1 & 0 & -10\\
        0 & 0 & 0 & 0 & 0 & 0 & 1 & -5\\
        0 & 0 & 0 & 0 & 0 & 0 & 0 & -1
    \end{pmatrix}\qc\begin{pmatrix}
       0\, & 0\, & 0\, & 0\, & 0\, & 0\, & 1\, & -1\\
        0 & 0 & 0 & 0 & 0 & 1 & 0 & -2\\
        0 & 0 & 0 & 0 & 1 & 0 & 0 & -3\\
        0 & 0 & 0 & 1 & 0 & 0 & 0 & 0\\
        0 & 0 & 1 & 0 & 0 & 0 & 0 & 3\\
        0 & 1 & 0 & 0 & 0 & 0 & 0 & 2\\
        1 & 0 & 0 & 0 & 0 & 0 & 0 & 1\\
        0 & 0 & 0 & 0 & 0 & 0 & 0 & 1
    \end{pmatrix}.
\end{aligned}\label{eq:NonTrivialZ4s}
\end{equation}
To reiterate, these are representatives for the 3 conjugacy classes of $\bbZ_2$ automorphisms of the lattice $L_0$. We can check the actions of these matrices on the $E_8$ basis vectors spanning $L$, and find that $\nu_2^{(4)}$ and $\nu_3^{(4)}$ do not leave $L$ invariant, and hence swap the two distinct $E_8$'s in $L_0^*/L_0$. Neither $\nu_2^{(4)}$ nor $\nu_3^{(4)}$ experience order doubling.

Combining everything, we find six defected partition functions, with the weight one subspaces and $q$-expansions recorded in Table \ref{tab:class_duality_4} for the $\bbZ_4$ case.
\begin{table}[t]
\centering
\begin{tabular}{ >{\centering\arraybackslash} m{4cm} >{\raggedright\arraybackslash} m{8cm}}    \toprule
    \emph{$((V^{\bbZ_4})^{\bbZ_2})_1$} & \emph{$q$-Expansion of $\bbZ_4$ Duality Defects} \\\midrule
    %% Order 4
        $A_7$ 
        & $q^{-\frac{1}{3}}(1 + 62 q + 784 q^2 + 5088 q^3 + O(q^4))$\\
        $A_1 \times A_5 \times \mathfrak{u}(1)$
        & $q^{-\frac{1}{3}}(1 + 14 q + 16 q^2 + 96 q^3 + O(q^4))$\\
        $A_3 \times A_3 \times \mathfrak{u}(1)$
        & $q^{-\frac{1}{3}}(1 - 2 q + 16 q^2 - 32 q^3 + O(q^4))$\\
        $C_4 \times \mathfrak{u}(1)$
        & $q^{-\frac{1}{3}}(1 + 10 q + 80 q^2 + 224 q^3 + O(q^4))$\\
        $C_4 \times \mathfrak{u}(1)$
        & $q^{-\frac{1}{3}}(1 + 10 q + 16 q^2 + 96 q^3 + O(q^4))$\\
        $D_4 \times \mathfrak{u}(1)$
        & $q^{-\frac{1}{3}}(1 - 6 q + 16 q^2 - 32 q^3 + O(q^4))$\\
    \bottomrule
    \hline
\end{tabular}
\caption{List of the defect partition functions of order 4, with defect twisting time. The first three come from the lattice automorphism $\nu_2^{(4)}$ while the rest come from $\nu_3^{(4)}$.} \label{tab:class_duality_4}
\end{table}

These results are plausible from Lie theory. Ignoring the $\mathfrak{u}(1)$ factor there are several symmetries of the $A_7$ Lie algebra. First one can leave the roots of the $A_7$ Dynkin diagram unchanged. The corresponding phase factors can then be chosen to obtain the fixed Lie subalgebras $A_{6-i} \times A_i \times \mathfrak{u}(1)$. The other option is to swap the roots in the $A_7$ Dynkin diagram. In this situation the fixed subalgebra is given by $C_4$. This compares well to the structure presented in Table \ref{tab:class_duality_4}, where the first three lines correspond to the first case and the next two lines to the second case.

More generally, in the case of $\bbZ_m$ symmetries with $m$ not prime, its also important to keep in mind that we need to ``totally swap'' the axes, since there can be $\bbZ_2$ automorphisms of $\Irr(V^{\bbZ_m})$ which only swap a subgroup of $\bbZ_m$ with a subgroup of $\hat{\bbZ}_m$. 

We record results on the symmetric non-degenerate bicharacter at the end of Appendix \ref{sec:Rearrangements}.

\subsection{\texorpdfstring{$\bbZ_5$}{Z5} Duality Defects}\label{sec:Z5Defects}
There are two non-anomalous $\bbZ_5$ symmetries up to conjugacy, and so two families of duality defects of order $5$. We will arbitrarily call the two $\bbZ_5$ conjugacy classes $\bbZ_5^A$ and $\bbZ_5^B$ with
generators
\begin{align}
    x^A &:= \frac{1}{5}\omega_4\,,\\
    x^B &:= \frac{1}{5}\omega_1+\frac{1}{5}\omega_7\,.
\end{align}
The sub-VOAs $V^{\bbZ_5^A}$ and $V^{\bbZ_5^B}$ have weight one subspaces isomorphic to $A_4\times A_4$ and $D_6 \times \mathfrak{u}(1)^2$ respectively.

\subsubsection{\texorpdfstring{$\bbZ_5^A$}{Z5A} Duality Defects and \texorpdfstring{$A_4 \times A_4$}{A4 x A4}}
The invariant sublattice $L_0^A$ under the $\bbZ_5^A$ symmetry is a genuine root lattice spanned by the $A_4\times A_4$ inside $E_8$, so we can proceed in a straightforward way. Consider the chopped Dynkin diagram
\begin{equation}
\begin{tikzpicture}[baseline={(current bounding box.center)}, scale = 1]
	\tikzstyle{vertex}=[circle, fill=black, minimum size=2pt,inner sep=2pt];
	\def\r{1.2};
	\node[vertex] (T1) at (\r*1,\r*0) {};
	\node[vertex] (T2) at (\r*2,\r*0) {};
	\node[vertex] (T3) at (\r*3,\r*0) {};
	\node[vertex] (T4) at (\r*4,\r*0) {};
	\node[cross out,draw] (T5) at (\r*5,\r*0) {};
	\node[vertex] (T6) at (\r*6,\r*0) {};
	\node[vertex] (T7) at (\r*7,\r*0) {};
	\node[vertex] (T8) at (\r*8,\r*0) {};
	\node[vertex] (T9) at (\r*6,\r*1) {};

	\draw[-] (T1) -- (T2);
	\draw[-] (T2) -- (T3);
	\draw[-] (T3) -- (T4);
	\draw[-, dashed] (T4) -- (T5);
	\draw[-, dashed] (T5) -- (T6);
	\draw[-] (T6) -- (T7);
	\draw[-] (T7) -- (T8);
	\draw[-] (T9) -- (T6);
	\label{eq:A4A4Chopping}
\end{tikzpicture}
\end{equation}
Focusing on just one $A_4$, the $A_4^{(1)}$ and $A_4^{(2)}$ Dynkin diagrams tell us that the $A_4$ Lie algebra has 3 inner automorphisms of order less than or equal to $2$, with fixed Lie subalgebras: $A_4$ (trivial); $A_3 \times \mathfrak{u}(1)$; and $A_2 \times A_1 \times \mathfrak{u}(1)$. Similarly, there is one order 2 outer automorphism with a fixed $B_2$ subalgebra.

Since we have two $A_4$ Dynkin diagrams life is slightly more complicated than the $\bbZ_3$ case. First, there are the usual commuting $\bbZ_2$ actions on each $A_4$ Dynkin diagram $\sigma_1$ and $\sigma_2$ say; but we can also exchange the diagrams themselves in different ways, this gives another $\bbZ_2$ action $\tau$. The three together generate a dihedral group of order $8$ $\langle \,\sigma_1,\sigma_2,\tau\,\rangle \cong \mathrm{Di}_8$.

However, we are not interested in all eight elements of this group, but in the conjugacy classes. There are three classes which do not switch the axes of $\Irr(V^{\bbZ_5})$ given by $\{e\}$, $\{\sigma_1\sigma_2\}$ and $\{\sigma_1 \tau, \sigma_2 \tau\}$ and two classes switching the axes given by $\{\sigma_1, \sigma_2\}$ and $\{\tau, \sigma_1\sigma_2 \tau\}$. 

As before, we can obtain explicit matrix representatives of these classes. If we represent the $A_4\times A_4$ lattice as the span of $b_i=\alpha_i$ for $i\neq 4$ and $b_4=5\alpha_4$, then representatives for the non-trivial lattice automorphisms $\nu^{(5)}_{[\sigma_1\sigma_2]}$, $\nu^{(5)}_{[\sigma_1\tau]}$, $\nu^{(5)}_{[\sigma_1]}$, and $\nu^{(5)}_{[\tau]}$ are given respectively by
\begin{equation}
    \begin{aligned}
    \begin{pmatrix}
        1 & 0 & 0 & 0 & 0 & 0 & 1 & 0\\
        2 & 0 & 1 & -5 & 0 & 0 & 0 & 0\\
        3 & 1 & 0 & -5 & 0 & 0 & 0 & 0\\
        1 & 0 & 0 & -1 & 0 & 0 & 0 & 0\\
        6 & 0 & 0 & -10 & 0 & 1 & 0 & 0\\
        4 & 0 & 0 & -10 & 1 & 0 & 0 & 0\\
        2 & 0 & 0 & -5 & 0 & 0 & 0 & 1\\
        3 & 0 & 0 & -5 & 0 & 0 & 1 & 0
    \end{pmatrix}\qc 
    &\begin{pmatrix}
        0 & 0 & 0 & -5 & 0 & 0 & 1 & 1\\
        0 & 0 & 0 & -10 & 1 & 0 & 2 & 0\\
        0 & 0 & 0 & -10 & 0 & 1 & 3 & 0\\
        0 & 0 & 0 & -2 & 0 & 0 & 1 & 0\\
        0 & 0 & 1 & -15 & 0 & 0 & 6 & 0\\
        0 & 1 & 0 & -10 & 0 & 0 & 4 & 0\\
        1 & 0 & 0 & -5 & 0 & 0 & 2 & 0\\
        0 & 0 & 0 & -5 & 0 & 0 & 3 & 0
    \end{pmatrix}\qc\\
    \begin{pmatrix}
        1 & 0 & 0 & 0 & 0 & 0 & 1 & 0\\
        2 & 0 & 1 & -5 & 0 & 0 & 0 & 0\\
        3 & 1 & 0 & -5 & 0 & 0 & 0 & 0\\
        1 & 0 & 0 & -1 & 0 & 0 & 0 & 0\\
        6 & 0 & 0 & -12 & 1 & 0 & 0 & 0\\
        4 & 0 & 0 & -8 & 0 & 1 & 0 & 0\\
        2 & 0 & 0 & -4 & 0 & 0 & 1 & 0\\
        3 & 0 & 0 & -6 & 0 & 0 & 0 & 1
    \end{pmatrix}\qc
    &\begin{pmatrix}
        0 & 0 & 0 & -5 & 0 & 0 & 1 & 1\\
        0 & 0 & 0 & -10 & 1 & 0 & 2 & 0\\
        0 & 0 & 0 & -10 & 0 & 1 & 3 & 0\\
        0 & 0 & 0 & -2 & 0 & 0 & 1 & 0\\
        0 & 1 & 0 & -14 & 0 & 0 & 6 & 0\\
        0 & 0 & 1 & -11 & 0 & 0 & 4 & 0\\
        0 & 0 & 0 & -3 & 0 & 0 & 2 & 0\\
        1 & 0 & 0 & -7 & 0 & 0 & 3 & 0
    \end{pmatrix}.
    \end{aligned}
\end{equation}

The elements of this $A_4$ ``flipping'' conjugacy class $\{\sigma_1, \sigma_2\}$ flip roots of one $A_4$ while leaving one copy of $A_4$ invariant. In this case, the fixed Lie subalgebra is one of the inners multiplied by a $B_2$ coming from the $A_4$ outer automorphism, giving 3 duality defects for this conjugacy class. Note that standard lifts from this class experience order doubling because the outer automorphism folds roots together which are not orthogonal. The ``exchange'' conjugacy class $\tau$ fixes a ``diagonal'' $A_4$ Lie subalgebra giving 1 final duality defect. It does not experience order doubling. We enumerate these 4 cases with their $q$-expansions in Table \ref{tab:class_duality_5A}.

\begin{table}[t]
\centering
\begin{tabular}{ >{\centering\arraybackslash} m{4cm} >{\raggedright\arraybackslash} m{8cm}}    \toprule
    \emph{$((V^{\bbZ_5^A})^{\bbZ_2})_1$} & \emph{$q$-Expansion of $\bbZ_5^A$ Duality Defects} \\\midrule
    %% Order 5
        $A_4 \times B_2$ 
        & $q^{-\frac{1}{3}}(1 + 20 q + 34 q^2 + 140 q^3 + O(q^4))$\\
        $A_3 \times B_2 \times \mathfrak{u}(1)$
        & $q^{-\frac{1}{3}}(1 + 4 q - 14 q^2 + 28 q^3 + O(q^4))$\\
        $A_1 \times A_2 \times B_2 \times \mathfrak{u}(1)$
        & $q^{-\frac{1}{3}}(1 - 4 q + 10 q^2 - 28 q^3 + O(q^4))$\\
        $A_4$
        & $q^{-\frac{1}{3}}(1 + 24 q + 124 q^2 + 500 q^3 + O(q^4))$\\
    \bottomrule
    \hline
\end{tabular}
\caption{List of the defect partition functions of order 5A, with defect twisting time. The first 3 defects come from the $\{\sigma_1,\sigma_2\}$ ``flip'' conjugacy class lattice action. The final defect lifts from the $\{\tau,\sigma_1\sigma_2\tau\}$ ``exchange'' conjugacy class lattice action.} \label{tab:class_duality_5A}
\end{table}

\subsubsection{\texorpdfstring{$\bbZ_5^B$}{Z5B} Duality Defects and \texorpdfstring{$D_6 \times \mathfrak{u}(1)^2$}{D6 x u(1)**2}}\label{sec:Z5B}
For the case of $D_6 \times \mathfrak{u}(1)^2$ the solution is slightly less elegant and cannot be read off just by looking at Dynkin diagrams. Proceeding as in Section \ref{sec:Z4Defects}, we can realize the fixed sublattice as the span of $b_i=\alpha_i$ for $i\neq 1$, $7$, and $b_1=5\alpha_1$ and $b_7=4\alpha_1+\alpha_7$. We use our previous methods to obtain representatives of the conjugacy classes of non-trivial automorphisms $\nu^{(5)}_1$, $\nu^{(5)}_2$, $\nu^{(5)}_3$, $\nu^{(5)}_4$, given respectively by
\begin{equation}
    \begin{aligned}
    \begin{pmatrix}
        -1 & 0 & 0 & 0 & 0 & 0 & 0 & 0\\
        -10 & 1 & 0 & 0 & 0 & 0 & -9 & 0\\
        -10 & 0 & 1 & 0 & 0 & 0 & -10 & 0\\
        -10 & 0 & 0 & 1 & 0 & 0 & -11 & 0\\
        -10 & 0 & 0 & 0 & 1 & 0 & -12 & 0\\
        -5 & 0 & 0 & 0 & 0 & 1 & -7 & 0\\
        0 & 0 & 0 & 0 & 0 & 0 & -1 & 0\\
        -5 & 0 & 0 & 0 & 0 & 0 & -6 & 1
    \end{pmatrix}\qc
    &\begin{pmatrix}
        -5 & 0 & 0 & 0 & 0 & 0 & -4 & 0\\
        -3 & 1 & 0 & 0 & 0 & 0 & -2 & 0\\
        0 & 0 & 1 & 0 & 0 & 0 & 0 & 0\\
        3 & 0 & 0 & 1 & 0 & 0 & 2 & 0\\
        6 & 0 & 0 & 0 & 1 & 0 & 4 & 0\\
        6 & 0 & 0 & 0 & 0 & 1 & 4 & 0\\
        6 & 0 & 0 & 0 & 0 & 0 & 5 & 0\\
        3 & 0 & 0 & 0 & 0 & 0 & 2 & 1
    \end{pmatrix}\qc\\
    \begin{pmatrix}
        5 & 0 & 0 & 0 & 0 & 0 & 3 & 0\\
        -16 & 1 & 0 & 0 & 0 & 0 & -12 & 0\\
        -20 & 0 & 1 & 0 & 0 & 0 & -15 & 0\\
        -24 & 0 & 0 & 1 & 0 & 0 & -18 & 0\\
        -28 & 0 & 0 & 0 & 1 & 0 & -21 & 0\\
        -18 & 0 & 0 & 0 & 0 & 0 & -13 & 1\\
        -8 & 0 & 0 & 0 & 0 & 0 & -5 & 0\\
        -14 & 0 & 0 & 0 & 0 & 1 & -11 & 0
    \end{pmatrix}\qc
    &\begin{pmatrix}
        -7 & 0 & 0 & 0 & 0 & 0 & -5 & 0\\
        5 & 1 & 0 & 0 & 0 & 0 & 2 & 0\\
        10 & 0 & 1 & 0 & 0 & 0 & 5 & 0\\
        15 & 0 & 0 & 1 & 0 & 0 & 8 & 0\\
        20 & 0 & 0 & 0 & 1 & 0 & 11 & 0\\
        15 & 0 & 0 & 0 & 0 & 0 & 9 & 1\\
        10 & 0 & 0 & 0 & 0 & 0 & 7 & 0\\
        10 & 0 & 0 & 0 & 0 & 1 & 5 & 0
    \end{pmatrix}.
    \end{aligned}\label{eq:NonTrivialZ5s}
\end{equation}

The automorphisms switching the axes of $\Irr(V^{\bbZ_5^B})$ are $\nu^{(5)}_2$ and $\nu^{(5)}_3$. The standard lifts of both of these experience order doubling. In total, there are 7 defected partitions functions, 4 which lift from $\nu_2^{(5)}$ and 3 which lift from $\nu_3^{(5)}$. They are recorded in Table \ref{tab:class_duality_5B}.

\begin{table}[t]
\centering
\begin{tabular}{ >{\centering\arraybackslash} m{4cm} >{\raggedright\arraybackslash} m{8cm}}    \toprule
    \emph{$((V^{\bbZ_5^B})^{\bbZ_2})_1$} & \emph{$q$-Expansion of $\bbZ_5^B$ Duality Defects} \\\midrule
    %% Order 5
        $D_5 \times \mathfrak{u}(1)^2$ 
        & $q^{-\frac{1}{3}}(1 + 26 q + 80q^2 + 352 q^3 + O(q^4))$\\
        $A_5 \times \mathfrak{u}(1)^2 $
        & $q^{-\frac{1}{3}}(1 + 6 q + 32 q^2 + 32 q^3 + O(q^4))$\\
        $A_5 \times \mathfrak{u}(1)^2 $
        & $q^{-\frac{1}{3}}(1 + 6 q + 32 q^2 - 44 q^3 + O(q^4))$\\
        $A_3\times A_3\times\mathfrak{u}(1)$
        & $q^{-\frac{1}{3}}(1 - 6 q + 16 q^2 - 32 q^3 + O(q^4))$\\
        $B_5\times\mathfrak{u}(1)$ & $q^{-\frac{1}{3}}(1 + 44 q + 266 q^2 + 1268 q^3 + O(q^4))$\\
        $A_1\times B_4\times\mathfrak{u}(1)$ & $q^{-\frac{1}{3}}(1 + 12 q + 10 q^2 + 84 q^3 + O(q^4))$\\
        $B_2\times B_3\times\mathfrak{u}(1)$ & $q^{-\frac{1}{3}}(1 - 4 q + 10 q^2 - 28 q^3 + O(q^4))$\\
    \bottomrule
    \hline
\end{tabular}
\caption{List of the defect partition functions of order 5B, with defect twisting time. The first four come from the automorphism $\nu^{(5)}_2$, while the rest come from $\nu^{(5)}_3$.} \label{tab:class_duality_5B}
\end{table}

\section{A (2+1)d TFT Perspective}\label{sec:3dTFT}
There is a natural (2+1)d perspective which sheds light on our previous discussions and the procedure in Section \ref{sec:DefectedPFs} and Equation \eqref{eq:DualityDefectPF}. This all follows from the theory of modular tensor categories, fusion categories, and their relationship to gapped boundary conditions for (2+1)d TFTs. 

The relevant pieces of this formalism are well-described in mathematical physics works such as \cite{Fuchs_2013, Kong:2013aya} and are well-known in the condensed-matter literature (see e.g. \cite{Barkeshli:2014cna, Cong_2017}). For this reason, we will provide only a small recap of the beautiful connection between (2+1)d TFTs, their gapped boundary conditions, and the relevant mathematics, before turning to duality defects. Helpful mathematical references include \cite{tensorCatBook2, tensorCatBook}.

\subsection{Topological Boundaries of (2+1)d TFTs}\label{sec:topBds}
Given a fusion category $\calA$ it's Drinfeld center $\calZ(\calA)$ is naturally a braided fusion category. This leads to the natural mathematical question: are all braided fusion categories the Drinfeld center of some fusion category? The answer is no, as already demonstrated by various Chern-Simons theories \cite{Freed:2020qfy}. %(see also \cite{wittenTalk}).

The next simplest question in this line of thinking is: if one has a braided (non-degenerate\footnote{Non-degenerate means that the S-matrix is non-degenerate. By definition this is true for a MTC, so applies to e.g. reps of a rational VOA. More broadly, non-degeneracy is satisfied iff there are no anyons which are invisible to braiding with the entire rest of the category (Theorem 3.4 of \cite{DGNO}), known as ``remote detectability'' \cite{Levin_2013, kong2014braided, johnsonfreyd2020classification}. So-called ``slightly degenerate'' braided fusion categories are also of interest in the study of fermionic topological orders \cite{DNO, johnsonfreyd2020classification}.}) fusion category $\calC$, when is $\calC = \calZ(\calA)$ for some fusion category $\calA$? The answer to this question is also known: the data of a braided equivalence $\calC \cong \calZ(\calA)$ determines a Lagrangian algebra object $A \in \calC$; and inversely, each Lagrangian algebra object $A\in\calC$ determines a braided equivalence $\calC \cong \calZ(\calC_A)$ where $\calC_A$ is the category of right-$A$ modules in $\calC$. Moreover, equivalences $\calZ(\calA) \cong \calZ(\calB)$ are known to be in bijection with indecomposable $\calA$-module categories \cite{ETINGOF2011176, MUGER200381}. Altogether, one has the following result:

\begin{proposition}[\cite{DMNO}, Proposition 4.8]\label{prop:DMNOGappedBC}
    Let $\calA$ be a fusion category and $\calC = \calZ(\calA)$. There is a bijection between the sets of (isomorphism classes of) Lagrangian algebras in $\calC$ and (equivalence classes of) indecomposable $\calA$-module categories.
\end{proposition}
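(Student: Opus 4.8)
The plan is to factor the claimed bijection through the intermediate set of \emph{categorical Morita equivalences of $\calA$}: pairs $(\calB,\phi)$ with $\calB$ a fusion category and $\phi\colon\calZ(\calB)\to\calZ(\calA)$ a braided equivalence, taken up to the evident notion of equivalence (an equivalence $\calB\simeq\calB'$ which, after applying $\calZ$, intertwines $\phi$ and $\phi'$). Concretely I would prove two bijections and compose them: (1) indecomposable $\calA$-module categories $\leftrightarrow$ pairs $(\calB,\phi)$; and (2) pairs $(\calB,\phi)$ $\leftrightarrow$ Lagrangian algebras in $\calZ(\calA)$. Keeping these steps separate is what keeps the coherences manageable; a one-shot construction tends to bury them.

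For (1): to an indecomposable $\calA$-module category $\calM$ attach its dual fusion category $\calB:=\calA^{*}_{\calM}=\mathrm{Fun}_{\calA}(\calM,\calM)$. Then $\calM$ is an invertible $(\calA,\calB)$-bimodule category, i.e. a categorical Morita equivalence, and Morita-equivalent fusion categories have canonically braided-equivalent Drinfeld centers, yielding $\phi_{\calM}\colon\calZ(\calB)\to\calZ(\calA)$. Conversely, a pair $(\calB,\phi)$ determines an invertible $(\calA,\calB)$-bimodule category: such bimodule categories form a torsor over the Brauer--Picard group $\mathrm{BrPic}(\calA)\cong\mathrm{Aut}^{\mathrm{br}}(\calZ(\calA))$, and $\phi$ singles out a point of it; its underlying left $\calA$-module category is automatically indecomposable. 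That the two assignments are mutually inverse is the categorical Morita theory of \cite{ENO,ETINGOF2011176,MUGER200381}; Ostrik's classification of module categories by internal-$\Hom$ algebras (see \cite{tensorCatBook}) provides the concrete model $\calM\simeq\calA_{A}$ with $A=\underline{\Hom}_{\calA}(M,M)$ for a generator $M\in\calM$.

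For (2): to $(\calB,\phi)$ attach the Lagrangian algebra $L_{(\calB,\phi)}:=\phi\bigl(I_{\calB}(\mathds{1}_{\calB})\bigr)$, where $I_{\calB}\colon\calB\to\calZ(\calB)$ is right adjoint to the forgetful functor. One checks that $I_{\calB}(\mathds{1})$ carries a canonical connected \'etale algebra structure with Frobenius--Perron dimension $\mathrm{FPdim}\,I_{\calB}(\mathds{1})=\mathrm{FPdim}\,\calB=\sqrt{\mathrm{FPdim}\,\calZ(\calB)}$, hence is Lagrangian, and transports to $\calZ(\calA)$ along $\phi$. In the internal-$\Hom$ language this is the \emph{full center} $L_{\calM}=Z\bigl(\underline{\Hom}_{\calA}(M,M)\bigr)$, which depends only on the Morita class of the algebra and therefore only on $\calM$. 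The inverse sends a Lagrangian algebra $L\in\calC:=\calZ(\calA)$ to the fusion category $\calC_{L}$ of right $L$-modules in $\calC$, together with the canonical braided equivalence $\calZ(\calC_{L})\simeq\calC$. Here non-degeneracy of $\calZ(\calA)$ is essential: $L$ being Lagrangian says precisely that the subcategory $\calC_{L}^{\mathrm{loc}}$ of local modules is trivial, and for non-degenerate $\calC$ this forces the canonical functor $\calC\to\calZ(\calC_{L})$ to be an equivalence. One then checks the two round trips — $\calC_{L_{(\calB,\phi)}}\simeq\calB$ compatibly with $\phi$, and $I_{\calC_{L}}(\mathds{1})\mapsto L$ — using the free-module/forgetful adjunction for $L$ and the uniqueness of the canonical Lagrangian algebra of a fusion category.

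The main obstacle — where essentially all the content of \cite{DMNO} lives — is the non-degenerate Morita reconstruction invoked in step (2): that for a Lagrangian algebra $L$ in a non-degenerate braided fusion category $\calC$, the category $\calC_{L}$ is fusion with $\calZ(\calC_{L})\simeq\calC$ \emph{canonically}, and that this equivalence is inverse to $(\calB,\phi)\mapsto L_{(\calB,\phi)}$. Concretely one must manufacture the half-braidings on objects of $\calZ(\calC_{L})$ from the braiding of $\calC$ and the $L$-action, and bound Frobenius--Perron dimensions to exclude proper subcategories, using $\mathrm{FPdim}\,\calC_{L}=\mathrm{FPdim}\,\calC/\mathrm{FPdim}\,L$ and $(\mathrm{FPdim}\,L)^{2}=\mathrm{FPdim}\,\calC$ for $L$ Lagrangian, so that $\mathrm{FPdim}\,\calZ(\calC_{L})=(\mathrm{FPdim}\,\calC_{L})^{2}=\mathrm{FPdim}\,\calC$. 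Everything else — indecomposability, the adjunction computations, and the bookkeeping of equivalences — is formal once this reconstruction and the classical Morita theory of fusion categories are in hand.
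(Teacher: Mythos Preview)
Your proposal is correct and follows essentially the same route as the paper. The paper does not give a self-contained proof of this proposition --- it is quoted from \cite{DMNO} --- but the paragraph immediately preceding the statement sketches exactly your two-step factorization: Lagrangian algebras $\leftrightarrow$ braided equivalences $\calC\simeq\calZ(\calB)$ (your step (2)), composed with braided equivalences $\leftrightarrow$ indecomposable $\calA$-module categories via \cite{ETINGOF2011176, MUGER200381} (your step (1)). Your write-up simply fleshes out that sketch with the relevant constructions (dual category, canonical Lagrangian $I_{\calB}(\mathds{1})$, and the reconstruction $\calZ(\calC_L)\simeq\calC$ for $L$ Lagrangian) and correctly flags that the substantive content lies in the non-degenerate reconstruction theorem of \cite{DMNO}.
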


These questions arise naturally in physics as well, as the data of a (2+1)d TFT are encoded as a Modular Tensor Category of topological line defects or anyons \cite{freedQGfromPI, tensorCatBook2, bartlett2015modular}. A question of interest when studying a (2+1)d TFT is: does the (2+1)d theory admit any topological boundary conditions?

In the physical case, the MTC of anyons $\calC$ is the braided fusion category of interest, and it can be shown that if $\calC \cong \calZ(\calA)$ for some $\calA$, then the bulk admits a topological boundary condition $B$ with boundary line defects modelled by $\calA$. So we see that the natural question for physicists is essentially the same as those studying braided fusion categories \cite{Fuchs_2013, Kong:2013aya}.

More generally, the theory of ``anyon condensation'' allows one to describe topological interfaces between two TFTs, described by MTCs $\calC$ and $\calD$, in the case that the anyons of $\calC$ ``condense'' to a new vacuum for $\calD$ at the separating interface \cite{Bais:2008ni, Kong:2013aya}. The formalism allows one to easily compute condensation interfaces for $\calC$, the new phase $\calD$, and the excitations on the interface (see Theorem 4.7 of \cite{Kong:2013aya} and \cite{mattAnyon} for computational references). In particular, the search for topological boundary conditions is a condensation to vacuum, controlled by Lagrangian algebra objects $A\in \calC$.\footnote{We only care about bosonic anyon condensations and bosonic topological boundary conditions. For results on fermionic anyon condensation and fermionic topological orders see \cite{Lou:2020gfq} and references within.}

It is common in the physics literature to talk about a particular realization of a (2+1)d TFT modeled by the MTC $\calC$, rather than just discuss abstract MTC data. For example, in Turaev-Viro/Levin-Wen models \cite{Turaev:1992hq, Levin:2004mi} one chooses a particular ``input fusion category'' $\calA$ and constructs the (2+1)d TFT whose anyons are modeled by the MTC $\calC = \calZ(\calA)$. From this frame, the boundary condition with excitations modeled by $\calA$ is particularly distinguished as a ``reference'' or ``Dirichlet'' boundary condition, and one identifies gapped boundaries with indecomposable $\calA$-module categories \cite{Cong_2017}.\footnote{We will use the term Dirichlet boundary condition to mean the canonical boundary condition for the Turaev-Viro theory based on $\calA$.} By Proposition \ref{prop:DMNOGappedBC} these are in bijection with the Lagrangian algebras.

Similarly, if the theory admits a gauge theory description, making such distinctions are equivalent to saying ``who the Wilson line is'' in the TFT. The choice of a particular boundary condition $B$ defines the Wilson line by declaring that $B$ is the Dirichlet boundary condition for the bulk gauge fields in that frame. Mathematically, the Wilson lines are the kernel of our particular bulk-boundary map $\calZ(\calA) \to \calA$, i.e. their image is a sum of copies of the vacuum operator on the boundary.

The simplest and most well known example of this relationship in physics is in the toric code \cite{Kitaev_2003}. Algebraically, the simple anyons of the (2+1)d TFT are described by the irreducible representations of the untwisted quantum double $D(\bbZ_2)$, or more geometrically as (2+1)d Dijkgraaf-Witten theory with gauge group $\bbZ_2$ and trivial topological action \cite{DW90}. In short, the simple anyons in this model are $\{1,e,m,f\}$ with spins $\{0,0,0,1/2\}$ respectively and non-trivial fusion relations
\begin{equation}
    e\otimes m = f\,,\quad e\otimes f = m\,,\quad m\otimes f = e\,.
\end{equation}

Famously, such a model admits two \textit{bosonic} indecomposable topological boundary conditions as described in \cite{bravyi1998quantum}, where they manifest in the microscopic description as ``smooth'' and ``rough'' boundaries of the square lattice. Such a description makes it pictorially clear how certain anyons of the toric code ``condense'' as they reach the boundary. For example, the $e$ anyon living along a string of vertices is condensed at a rough boundary, but the $m$ anyon gets stuck and becomes a boundary defect line (see also \cite{Beigi_2011} and \cite{Kitaev_2012}). In other words, the ``rough'' topological boundary condition $B_e$ condenses the $e$ anyon to the vacuum on the topological boundary, and corresponds to the Lagrangian algebra object $1\oplus e$. The topological boundary condition is populated with boundary excitations $\{1\oplus e,m\oplus f\}$, which behave like $\mathcal{A}_e = \Vc_{\bbZ_2}$.

\subsection{Duality Defects}\label{sec:DDs}
%Recall that a duality defect is a non-invertible topological defect line which separates a theory from the orbifold theory. Indeed, to construct a duality defect in some statistical model, one might introduce a background connection to some patch of plaquettes, then gauge the connection on the patch. The end result will be a patch of gauge theory surrounded by a patch of the original theory, and the wall separating them is a duality defect by construction (and if the theory is isomorphic to its orbifold, then it's a self duality defect akin to the Kramers-Wannier defect or $(E_8)_1$ duality defects in this paper).

Gauging 2d theories has a simple (2+1)d interpretation in terms of boundary conditions and interfaces in topological gauge theories (see e.g. \cite{Freed:2018cec, gaiotto2020orbifold, theoMoonshine} for an extended explanation). In particular, given any 2d theory $T$ with non-anomalous $\bbZ_m$ symmetry, the theory provides an ``enriched Neumann'' boundary condition $B[T]$ for a (2+1)d $\bbZ_m$ Dijkgraaf-Witten theory by coupling the boundary value of the bulk connection to $T$. The local operators of $B[T]$ are effectively the local operators of $T^{\bbZ_m}$.

\begin{figure}[t]
	\centering
	\begin{tikzpicture}[thick]
	
	% Dimensions
	\def\Depth{4}
	\def\DepthTwo{3}
	\def\Height{2}
	\def\Width{2}
	\def\Sep{3}        
	
	% 3d Manifold on double segment
	\coordinate (O) at (0,0,0);
	\coordinate (A) at (0,\Width,0);
	\coordinate (B) at (0,\Width,\Height);
	\coordinate (C) at (0,0,\Height);
	\coordinate (D) at (\Depth,0,0);
	\coordinate (E) at (\Depth,\Width,0);
	\coordinate (F) at (\Depth,\Width,\Height);
	\coordinate (G) at (\Depth,0,\Height);
	\draw[black] (O) -- (C) -- (G) -- (D) -- cycle;% Bottom Face
	\draw[black] (O) -- (A) -- (E) -- (D) -- cycle;% Back Face
	\draw[black, fill=blue!20,opacity=0.8] (O) -- (A) -- (B) -- (C) -- cycle;% Left Face
	\draw[black, fill=yellow!20,opacity=0.8] (D) -- (E) -- (F) -- (G) -- cycle;% Right Face
	\draw[black] (C) -- (B) -- (F) -- (G) -- cycle;% Front Face
	\draw[black] (A) -- (B) -- (F) -- (E) -- cycle;% Top Face
	\draw[below] (0, 0*\Width, \Height) node{\begin{tabular}{l} a) $B[T$] \\ b) $\mathrm{Dir}^{\mathrm{e}}$ \end{tabular}};
	\draw[below] (\Depth, 0*\Width, \Height) node{$I_{em}$};
	\draw[midway] (\Depth/2,\Width-\Width/2,\Height/2) node {\begin{tabular}{c} Electric $\bbZ_m$\\ Gauge Theory\end{tabular}};
	\draw[midway] (\Depth/2+\Depth,\Width-\Width/2,\Height/2) node {\begin{tabular}{c} Magnetic $\hat{\bbZ}_m$\\ Gauge Theory\end{tabular}};
	
	\coordinate (O2) at (\Depth,0,0);
	\coordinate (A2) at (\Depth,\Width,0);
	\coordinate (B2) at (\Depth,\Width,\Height);
	\coordinate (C2) at (\Depth,0,\Height);
	\coordinate (D2) at (\Depth+\DepthTwo,0,0);
	\coordinate (E2) at (\Depth+\DepthTwo,\Width,0);
	\coordinate (F2) at (\Depth+\DepthTwo,\Width,\Height);
	\coordinate (G2) at (\Depth+\DepthTwo,0,\Height);
	\draw[black] (O2) -- (D2);
	\draw[black] (A2) -- (E2);
	\draw[black] (B2) -- (F2);
	\draw[black] (C2) -- (G2);
	
	% 3d Manifold on single segment
	\coordinate (O3) at (\Depth+\DepthTwo+\Sep+3*\Depth/24,0,0);
	\coordinate (A3) at (\Depth+\DepthTwo+\Sep+3*\Depth/24,\Width,0);
	\coordinate (B3) at (\Depth+\DepthTwo+\Sep+3*\Depth/24,\Width,\Height);
	\coordinate (C3) at (\Depth+\DepthTwo+\Sep+3*\Depth/24,0,\Height);
	\coordinate (D3) at (\Depth+2*\DepthTwo+\Sep+3*\Depth/24,0,0);
	\coordinate (E3) at (\Depth+2*\DepthTwo+\Sep+3*\Depth/24,\Width,0);
	\coordinate (F3) at (\Depth+2*\DepthTwo+\Sep+3*\Depth/24,\Width,\Height);
	\coordinate (G3) at (\Depth+2*\DepthTwo+\Sep+3*\Depth/24,0,\Height);
	\draw[black, fill=green!20,opacity=0.8] (O3) -- (C3) -- (B3) -- (A3) -- cycle;
	\draw[black] (O3) -- (D3);
	\draw[black] (A3) -- (E3);
	\draw[black] (B3) -- (F3);
	\draw[black] (C3) -- (G3);
	\draw[below] (\Depth+\DepthTwo+\Sep*3/2-\Sep*5/32, 0*\Width, \Height) node{\begin{tabular}{l} a) $B[[T/\bbZ_m]]$ \\ b) $\mathrm{Neu}^{\mathrm{m}}$ \end{tabular}};
	\draw[midway] (\Depth/2+\Depth+\DepthTwo+\Sep+3*\Depth/24,\Width-\Width/2,\Height/2) node {\begin{tabular}{c} Magnetic $\hat{\bbZ}_m$\\ Gauge Theory\end{tabular}};
	
	% Squiggly line coupling to boundary
	\draw[->,decorate,decoration={snake,amplitude=.4mm,segment length=2mm,post length=1mm}] (\Depth+\DepthTwo+\Width/4+\Depth/24, \Width/2, \Height/2) -- (\Sep+\Depth+\DepthTwo-\Width/4+2*\Depth/24,\Width/2,\Height/2) node[midway, above] {Collide $I_{em}$};
	\end{tikzpicture}
	\caption{On the left, a slab of $\bbZ_m$ gauge theory and $\hat{\bbZ}_m$ gauge theory separated by an (invertible topological) electric-magnetic duality interface $I_{em}$ (yellow). If one moves the interface $I_{em}$ all the way to the left, then all that remains is $\hat{\bbZ}_m$ gauge theory with a new boundary condition (green).
	a) If the left boundary condition (blue) is an enriched Neumann boundary condition $B[T]$, then when it collides with the interface it appears as the enriched Neumann boundary condition $B[[T/\bbZ_m]]$.
	b) If the left boundary condition (blue) is a Dirichlet boundary condition for the $\bbZ_m$ gauge theory, then it appears as a Neumann boundary condition for the $\hat{\bbZ}_m$ gauge theory.} \label{fig:collisionOfInterface}
\end{figure}
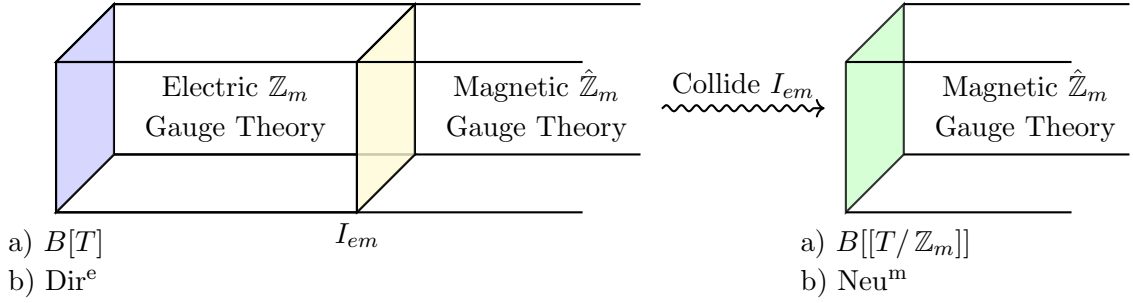

In this case, the 2d theory $T$ (on $M$ say) can be understood as a compactification of a slab of this $\bbZ_m$ gauge theory on $M\times[0,1]$, with the boundary condition $B[T]$ for one boundary, and the reference (topological) Dirichlet boundary condition for the other. Similarly, $[T/\bbZ_m]$ can be understood as the same setup but with the (topological) fully-Neumann boundary condition instead. 

However, by the same logic, $[T/\bbZ_m]$ can also be viewed as a sandwich of $\hat{\bbZ}_m$ gauge theory with boundary condition $B[[T/\bbZ_m]]$ and Dirichlet boundary condition for $\hat{\bbZ}_m$ on the other. These two descriptions are dual, what has changed is the identification of the bulk TFT as an ``electric'' $\bbZ_m$ gauge theory or a ``magnetic'' $\hat{\bbZ}_m$. By our previous section, the MTC data is the same, but we have presented a different topological boundary condition as ``Dirichlet''. The two pictures are related by an electric-magnetic duality interface $I_{em}$, which composes with $B[T]$ to produce $B[[T/\bbZ_m]]$ and turns the electric Dirichlet boundary condition into a magnetic Neumann boundary condition. We depict all of this in Figure \ref{fig:collisionOfInterface}.

Since a duality defect separates a theory from its orbifold, we recognize a duality defect as a potential ending line for the interface $I_{em}$, separating the electric and magnetic Dirichlet boundary conditions in a slab picture i.e. as a $\bbZ_2^{em}$ twist line operator, see Figure \ref{fig:KWisZ2TwistLine}. Gauging the electric-magnetic $\bbZ_2^{em}$ 0-form symmetry \textit{of the bulk TFT} (with Neumann boundary conditions on the left and Dirichlet boundary conditions on the right of Figure \ref{fig:KWisZ2TwistLine}), produces a new bulk TFT where the interface $I_{em}$ ``becomes invisible,'' leaving the twist line behind, as in Figure \ref{fig:KWfromTYCat}. In Section XI of \cite{Barkeshli:2014cna} (also Section 3.5 of \cite{Bhardwaj:2016clt}) the authors show explicitly how the boundary twist line left behind after gauging has the fusion rules of $\TY(\bbZ_m)$.

This is simply squaring two commensurate pictures. In one case, we are thinking of $T$ as having a $\bbZ_m$ symmetry with topological line defects for the symmetry modeled by the fusion category $\calD = \Vc_{\bbZ_m}$, and so we can couple it to a $\calZ(\Vc_{\bbZ_m})$ bulk and view the duality defect line as the endpoint of an electric-magnetic duality wall. Alternatively, we can view $T$ as having ``categorical symmetry'' $\calF = \TY(\bbZ_m)$, in which case it can be coupled to a bulk of $\calZ(\TY(\bbZ_m))$ theory, and the duality defect line is understood simply as a condensate of bulk lines on the topological boundary.

In general, suppose $\calD$ admits a $G$-extension
\begin{equation}
    \calF = \bigoplus_{g\in G} \calF_g
\end{equation}
with identity piece $\calF_e \cong \calD$, then the (2+1)d TFT $Q$ with MTC $\calZ(\calD)$ and reference Dirichlet boundary condition $\calD$ necessarily has topological surfaces with a $G$-composition law. If we gauge this $G$ symmetry to produce a new (2+1)d TFT $[Q/G]$, then the new MTC data is given by $\calZ(\calF)$ with reference boundary condition $\calF$. This follows from the work of \cite{gelaki2009centers,Barkeshli:2014cna}.

\begin{figure}[t]
\centering
\begin{subfigure}[b]{0.9\textwidth}
\begin{equation*}
\begin{tikzpicture}[thick, baseline={([yshift=-.5ex]current bounding box.center)}]
	% Dimensions
	\def\Depth{6}
	\def\Height{3}
	\def\Width{2}
	\def\Sep{3}        
	% 3d Manifold
	\coordinate (O) at (0,0,0);
	\coordinate (A) at (0,\Width,0);
	\coordinate (B) at (0,\Width,\Height);
	\coordinate (C) at (0,0,\Height);
	\coordinate (KW1) at (0,0,\Height/2);
	\coordinate (KW2) at (0,\Width,\Height/2);
	\draw[black, fill=blue!20,opacity=0.8] (KW1) -- (KW2) -- (B) -- (C) -- cycle;
	\draw[black, fill=purple!20,opacity=0.8] (O) -- (A) -- (KW2) -- (KW1) -- cycle;
	\draw[black] (O) -- (A) -- (B) -- (C) -- cycle;% Box
	\draw[below] (C) node{$T$};
	\draw[above] (-1*\Depth*5/60,\Width,0) node{$[T/\bbZ_m]$};
	\draw[ultra thick, red] (KW1) -- (KW2);
\end{tikzpicture}
\quad\longleftrightarrow
\begin{tikzpicture}[thick, baseline={([yshift=-.5ex]current bounding box.center)}]
	% Dimensions
	\def\Depth{6}
	\def\Height{3}
	\def\Width{2}
	\def\Sep{3}        
	% 3d Manifold
	\coordinate (O) at (0,0,0);
	\coordinate (A) at (0,\Width,0);
	\coordinate (B) at (0,\Width,\Height);
	\coordinate (C) at (0,0,\Height);
	\coordinate (D) at (\Depth,0,0);
	\coordinate (E) at (\Depth,\Width,0);
	\coordinate (F) at (\Depth,\Width,\Height);
	\coordinate (G) at (\Depth,0,\Height);
	\coordinate (I1) at (0,0,\Height/2);
	\coordinate (I2) at (0,\Width,\Height/2);
	\coordinate (I3) at (\Depth,\Width,\Height/2);
	\coordinate (I4) at (\Depth,0,\Height/2);	
	% Wally wally
	\draw[black, fill=yellow!20,opacity=0.8] (I1) -- (I2) -- (I3) -- (I4) -- cycle;% Interface
	\draw[below] (\Depth/2,\Width*2/3,\Height/2) node{$I_{em}$};
    % Bulk labels
	\draw[below] (\Depth/2, 0*\Width, \Height) node{$\,\bbZ_m$ Gauge Theory};
	\draw[above] (\Depth/2, \Width, \Height*0) node{$\,\,\hat{\bbZ}_m$ Gauge Theory};
	\draw[black] (O) -- (C) -- (G) -- (D) -- cycle;% Bottom Face
	\draw[black] (O) -- (A) -- (E) -- (D) -- cycle;% Back Face
	\draw[black] (O) -- (A) -- (B) -- (C) -- cycle;% Left Face
	\draw[black] (D) -- (E) -- (F) -- (G) -- cycle;% Right Face
	\draw[black] (C) -- (B) -- (F) -- (G) -- cycle;% Front Face
	\draw[black] (A) -- (B) -- (F) -- (E) -- cycle;% Top Face
	\draw[below] (0, 0*\Width, \Height) node{$T^{\bbZ_m\phantom{Y()}}$};
	\draw[above] (0, \Width, \Height*0) node{$[T/\bbZ_m]^{\hat{\bbZ}_m}\,$};
	\draw[below] (\Depth, 0*\Width, \Height) node{$\,\,\mathrm{Dir}^e$};
	\draw[above] (\Depth, \Width, \Height*0) node{$\qquad\mathrm{Dir}^m$};
	\coordinate (KW1) at (\Depth,0,\Height/2);
	\coordinate (KW2) at (\Depth,\Width,\Height/2);
	\coordinate (KW1) at (\Depth,0,\Height/2);
	\coordinate (KW2) at (\Depth,\Width,\Height/2);
	%\draw[black, fill=purple!20,opacity=0.8] (KW1) -- (KW2) -- (E) -- (D) -- cycle;
	%\draw[black, fill=blue!20,opacity=0.8] (G) -- (F) -- (KW2) -- (KW1) -- cycle;
	\draw[ultra thick, red] (\Depth,0,\Height/2) -- (\Depth,\Width,\Height/2);
\end{tikzpicture}
\end{equation*}
\caption{The 2d theory $T$ (blue) is separated from $[T/\bbZ_m]$ (purple) by a duality defect (red). This picture can be blown up to a slab of electric $\bbZ_m$ gauge theory (front) and magnetic $\hat{\bbZ}_m$ gauge theory (back) separated by an electric-magnetic duality wall (yellow). $\bbZ_m$ invariant local operators of $T$ live at the left boundary and similarly for $[T/\bbZ_m]$. Electric and magnetic Dirichlet boundary conditions for the respective gauge theories are established at the right boundary. The duality defect is a twist line for the bulk $\bbZ_2^{em}$ symmetry.}
\label{fig:KWisZ2TwistLine}
\end{subfigure}
\begin{subfigure}[b]{0.9\textwidth}
\begin{equation*}
\begin{tikzpicture}[thick, baseline={([yshift=-.5ex]current bounding box.center)}]
	% Dimensions
	\def\Depth{6}
	\def\Height{3}
	\def\Width{2}
	\def\Sep{3}        
	% 3d Manifold
	\coordinate (O) at (0,0,0);
	\coordinate (A) at (0,\Width,0);
	\coordinate (B) at (0,\Width,\Height);
	\coordinate (C) at (0,0,\Height);
	\coordinate (KW1) at (0,0,\Height/2);
	\coordinate (KW2) at (0,\Width,\Height/2);
	\draw[black, fill=blue!20,opacity=0.8] (KW1) -- (KW2) -- (B) -- (C) -- cycle;
	\draw[black, fill=purple!20,opacity=0.8] (O) -- (A) -- (KW2) -- (KW1) -- cycle;
	\draw[black] (O) -- (A) -- (B) -- (C) -- cycle;% Box
	\draw[below] (C) node{$T$};
	\draw[above] (-1*\Depth*5/60,\Width,0) node{$[T/\bbZ_m]$};
	\draw[ultra thick, red] (KW1) -- (KW2);
\end{tikzpicture}
\quad\longleftrightarrow
\begin{tikzpicture}[thick, baseline={([yshift=-.5ex]current bounding box.center)}]
	% Dimensions
	\def\Depth{6}
	\def\Height{3}
	\def\Width{2}
	\def\Sep{3}        
	% 3d Manifold
	\coordinate (O) at (0,0,0);
	\coordinate (A) at (0,\Width,0);
	\coordinate (B) at (0,\Width,\Height);
	\coordinate (C) at (0,0,\Height);
	\coordinate (D) at (\Depth,0,0);
	\coordinate (E) at (\Depth,\Width,0);
	\coordinate (F) at (\Depth,\Width,\Height);
	\coordinate (G) at (\Depth,0,\Height);
	\coordinate (I1) at (0,0,\Height/2);
	\coordinate (I2) at (0,\Width,\Height/2);
	\coordinate (I3) at (\Depth,\Width,\Height/2);
	\coordinate (I4) at (\Depth,0,\Height/2);	
	\draw[below] (\Depth/2,\Width*2/3,\Height/2) node{$\calZ(\TY(\bbZ_m))$ Bulk};
    % Bulk labels
	\draw[black] (O) -- (C) -- (G) -- (D) -- cycle;% Bottom Face
	\draw[black] (O) -- (A) -- (E) -- (D) -- cycle;% Back Face
	\draw[black] (O) -- (A) -- (B) -- (C) -- cycle;% Left Face
	\draw[black] (D) -- (E) -- (F) -- (G) -- cycle;% Right Face
	\draw[black] (C) -- (B) -- (F) -- (G) -- cycle;% Front Face
	\draw[black] (A) -- (B) -- (F) -- (E) -- cycle;% Top Face
	\draw[below] (0, 0*\Width, \Height) node{$T^{\TY(\bbZ_m)}$};
	\draw[below] (\Depth, 0*\Width, \Height) node{$\TY(\bbZ_m)$ $\mathrm{Dir}$};
	\draw[ultra thick, red] (\Depth,0,\Height/2) -- (\Depth,\Width,\Height/2);
	\draw[above] (\Depth, \Width, \Height*0) node{$\phantom{\qquad\mathrm{Dir}^m}$};
	\draw[above] (0, \Width, \Height*0) node{$\phantom{[T/\bbZ_m]^{\hat{\bbZ}_m}\,}$};
\end{tikzpicture}
\end{equation*}
\caption{The same 2d picture can be blown up into a slab of $\calZ(\TY(\bbZ_m))$ bulk with $\TY(\bbZ_m)$ invariant local operators on the left boundary and Dirichlet boundary conditions on the right boundary. The duality defect now lives at the end of the trivial wall. $\calZ(\TY(\bbZ_m))$ has $2m$ lines with quantum dimension $\sqrt{m}$, hence many bulk lines condense to the same line at the Dirichlet boundary.}
\label{fig:KWfromTYCat}
\end{subfigure}
\caption{Two different (2+1)d blow-ups for a theory with a duality defect line inserted. To go from (a) to (b), one must gauge the $\bbZ_2^{em}$ 0-form symmetry generated by the duality wall $I_{em}$, with Neumann boundary conditions for $\bbZ_2^{em}$ on the left and Dirichlet on the right. This makes the duality wall $I_{em}$ ``become invisible,'' leaving behind the duality defect. To go from (b) to (a), one must gauge the bulk $\bbZ_2^{em}$ 1-form symmetry of $\calZ(\TY(\bbZ_m))$.}
\end{figure}

\subsubsection{Application to Holomorphic VOAs}\label{sec:holoVOA}
Following \cite{moller2017cyclic}, we call a VOA \textit{nice} if it is simple, rational, $C_2$-cofinite, self-contragredient, and of CFT-type. Examples of nice VOAs include lattice VOAs built from even lattices and the Monster $V^\natural$.

\begin{proposition}\label{prop:confInc}
    Given a conformal inclusion of nice VOAs $W \subset V$ there is a canonical fusion category $\mathcal{F}$ determined by the inclusion.
\end{proposition}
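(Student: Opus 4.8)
The plan is to reinterpret the conformal inclusion $W\subset V$ as the datum of a commutative algebra object in a modular tensor category, and then read off $\calF$ from the structure theory of module categories over such an algebra.

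First I would record that, because $W$ is nice, $\calC:=\Rep(W)$ is a modular tensor category, hence in particular a fusion category; this is exactly the input already used in Section~\ref{sec:orbifoldsOfVE8} and follows from the technical results of \cite{miyamoto2004uniform, miyamoto2015c_2, carnahan2016regularity}. Next, the conformal inclusion equips $V$ with a canonical structure of object of $\calC$ (restrict the $V$-action on $V$ to the subalgebra $W$) and, via the restriction of the vertex operator $Y_V$, with the structure of a unital commutative associative algebra $A\in\calC$: commutativity is the categorical shadow of the locality axiom matched against the braiding of $\calC$, while unitality and associativity descend from the corresponding VOA axioms. Because $V$ is of CFT-type and $W$ contains the vacuum, $\Hom_\calC(\id,A)=\bbC$, so $A$ is connected (haploid); because $V$ is itself rational and $C_2$-cofinite, $\Rep(V)$ is semisimple and the multiplication of $A$ splits as a bimodule map, so $A$ is a \emph{connected étale} algebra in $\calC$. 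This is the now-standard dictionary between conformal extensions of a nice VOA and connected étale algebras in its representation category.

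With this in hand I would \emph{define} $\calF:=\calC_A$, the category of (not necessarily local) right $A$-modules in $\calC$. By the structure theory of étale algebras in fusion categories (Kirillov--Ostrik; see also \cite{DMNO} and \cite{Kong:2013aya}), $\calC_A$ is semisimple with finitely many simple objects, rigid (the Frobenius structure on $A$ dualizes modules), and has simple unit $A$ --- the latter because $\Hom_{A}(A,A)\cong\Hom_\calC(\id,A)=\bbC$ by connectedness --- hence $\calF$ is a fusion category. Note that the full subcategory $\calC_A^{\mathrm{loc}}$ of local (dyslectic) modules reproduces $\Rep(V)$, which is already an MTC by niceness of $V$; the content of the proposition is the strictly larger fusion category $\calF$ of all $A$-modules, which in the holomorphic case $\Rep(V)=\Vc$ (so $A$ is Lagrangian) is precisely an $\calF$ with $\calZ(\calF)\cong\Rep(W)$ as in Proposition~\ref{prop:DMNOGappedBC}.

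Finally, for canonicality: the $W$-module $V$ and its algebra structure depend only on the inclusion $\iota\colon W\hookrightarrow V$, so $A=A_\iota$ is determined by $\iota$ up to unique isomorphism of algebra objects, and an isomorphism of VOA inclusions $(W\hookrightarrow V)\cong(W'\hookrightarrow V')$ induces a braided equivalence $\Rep(W)\simeq\Rep(W')$ carrying $A_\iota$ to $A_{\iota'}$, hence an equivalence of fusion categories $\calC_{A_\iota}\simeq\calC'_{A_{\iota'}}$. Thus $\calF$ is determined by $W\subset V$ up to canonical equivalence. I expect the main obstacle to be the second step above: verifying that $A_\iota$ is genuinely étale, i.e. that the rationality and $C_2$-cofiniteness of the extension $V$ really translate into separability and rigidity of the algebra object in the braided category $\Rep(W)$. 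This is where the heavy VOA input (Huang's tensor-category machinery for $C_2$-cofinite rational VOAs and the theory of VOA extensions) is used; once $A_\iota$ is known to be a connected étale algebra in the MTC $\Rep(W)$, everything downstream is purely categorical.
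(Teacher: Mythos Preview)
Your proposal is correct and follows essentially the same approach as the paper: identify $V$ as a connected commutative separable (i.e.\ \'etale) algebra object $A$ in the MTC $\Rep(W)$, then set $\calF=\calC_A$. The paper cites \cite{Huang:2005gs} for $\Rep(W)$ being an MTC and \cite{Huang:2014ixa} (together with \cite{alex2001qanalog,DMNO}) for the algebra-object structure of $A$, and notes the parallel phrasing in the conformal-net language; you correctly anticipated that the VOA-extension machinery is where the real work lies, and you in fact supply more detail on canonicality and on why $\calC_A$ is fusion than the paper does.
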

This follows from the theory of anyon condensations described in Theorem 4.7 and Section 6.4 of \cite{Kong:2013aya}. In the VOA literature, the proposition is typically phrased in the reverse direction as being about extensions of the VOA $W$.
\begin{proof}
    Since $W$ and $V$ are nice, their representation categories are MTCs \cite{Huang:2005gs}, let $\calC := \Rep(W)$ and $\calD := \Rep(V)$. Since $W \subset V$, $V$ can be decomposed into a finite direct sum of irreducible $W$ modules, we will write this as $A\in \calC$. By \cite{Huang:2014ixa} (and the classical results Theorem 5.2 of \cite{alex2001qanalog} and Section 6 of \cite{DMNO} in the holomorphic case), $A$ is a commutative, connected, and separable algebra object in $\calC$.\footnote{In fact, when $V$ is holomorphic $A$ is actually a Lagrangian algebra. This can be seen since $\mathrm{FPdim}(A)^2 \, \mathrm{FPdim}(\calC^0_A) = \mathrm{FPdim}(\calC)$, but $\mathrm{FPdim}(\calC^0_A) = \mathrm{FPdim}(\Rep(V)) = 1$. From the anyon-condensation point-of-view this is because the conformal inclusion is describing a gapped boundary for $\calC$, rather than a gapped interface between two phases $\calC$ and $\calD$.}
    
    The fusion category we seek is $\calF=\calC_A$, the category of right $A$-modules in $\calC$. In \cite{alex2001qanalog} this is called $\Rep A$, the category of twisted $V$-modules
    
    In the language of conformal nets, the proof runs parallel, with the algebra object $A$ now called a ``Q-system,'' and $\calF$ the ``category of solitons'' obtained by ``$\alpha$-induction'' \cite{kawahigashi2015, bischoff2015characterization, bischoff2016orbifold}.
\end{proof}
Generally, if $W \subset V$ as in the proposition, then we will write $W$ as $V^{\calF}$ and say that $\calF$ \textit{acts nicely} on $V$. Note if $\calF$ acts nicely on $V$ and $\calA$ is a fusion subcategory, then $\calA$ also acts nicely on $V$.

As an example, consider the case $W = V^G$ where $G$ is a finite group of automorphisms of a holomorphic VOA $V$, then $\calF \cong \Vc_G^\alpha$ as in the celebrated-results of \cite{Kirillov1, Kirillov2} (see \cite{theoMoonshine,bischoff2018conformal} for conformal nets).

Physically, $\calF$ is a collection of topological defect lines which acts on $V$ and $W$ is the sub-VOA of $V$ which commutes with the lines of $\calF$. More explicitly, if $X\in \calF$ is a topological defect line in $\calF$ which acts on $V$, then it has a vector space of (not necessarily topological) twist operators $\calH_{X}$ which $X$ can end on. Since any element of $\calH_{X}$ is not a true local operator, but must be attached to a topological $X$ tail, a generic operator of $V$ will collect $X$-monodromy if it encircles the endpoint operator; the operators of $W$ are exactly those which collect no monodromy.

In \cite{carnahan2016regularity}, the authors prove the following
\begin{proposition}[\cite{carnahan2016regularity}, Corollary 5.25]
    Let $V$ be a nice VOA and $G$ a finite solvable group of automorphisms of $V$ then $V^G$ is nice. 
\end{proposition}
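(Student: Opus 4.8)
The plan is to reduce, using the structure of finite solvable groups, to the case of a cyclic group of prime order, dispose of the three ``soft'' clauses in the definition of niceness by direct arguments, and then invoke the cyclic-orbifold regularity machinery for the remaining two (rationality and $C_2$-cofiniteness). Recall that ``nice'' means simple, rational, $C_2$-cofinite, self-contragredient, and of CFT-type.

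First I would set up the induction on $|G|$. If $|G|=1$ there is nothing to prove. Otherwise, since $G$ is solvable we may choose a maximal proper normal subgroup $N \triangleleft G$; then $G/N$ is simple and solvable, hence $G/N \cong \bbZ_p$ for some prime $p$. Because $N$ is normal, $G/N$ acts on $V^N$ and $(V^N)^{G/N}=V^G$. By the inductive hypothesis applied to the smaller solvable group $N$, the subalgebra $V^N$ is nice. Writing $W := V^N$ and replacing the $G/N$-action by its image in $\Aut(W)$ (still cyclic and now faithful, which does not change the fixed-point subalgebra), it therefore suffices to prove: \emph{if $W$ is nice and $g \in \Aut(W)$ has prime order $p$, then $W^{\langle g\rangle}$ is nice.}

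Three of the five conditions for $W^{\langle g\rangle}$ are quick. CFT-type: the vacuum $\mathbf{1}$ is $g$-fixed and the conformal weight grading is inherited, so $(W^{\langle g\rangle})_0 = \bbC\,\mathbf{1}$ and $(W^{\langle g\rangle})_n = 0$ for $n<0$. Simplicity: this is classical quantum Galois theory for the finite, faithfully acting group $\langle g\rangle$ on the simple VOA $W$ --- a nonzero ideal of $W^{\langle g\rangle}$ generates a $g$-stable ideal of $W$, which must be all of $W$, and one recovers that the original ideal was all of $W^{\langle g\rangle}$. Self-contragredience: the isomorphism $W \cong W'$ is encoded by a non-degenerate invariant bilinear form on $W$; decomposing $W = \bigoplus_{j=0}^{p-1} W_{(j)}$ into $g$-eigenspaces, invariance forces the form to pair $W_{(j)}$ with $W_{(-j)}$ (indices mod $p$), so it restricts to a non-degenerate invariant form on $W_{(0)} = W^{\langle g\rangle}$, i.e.\ $W^{\langle g\rangle} \cong (W^{\langle g\rangle})'$.

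The substantive part is $C_2$-cofiniteness and rationality of $W^{\langle g\rangle}$. Since $W$ is nice, Dong--Li--Mason provide a unique irreducible $g^a$-twisted $W$-module for each $a=0,\dots,p-1$; $C_2$-cofiniteness of $W^{\langle g\rangle}$ is then Miyamoto's cyclic-orbifold theorem, proved by building a finite generating set for $W^{\langle g\rangle}/C_2(W^{\langle g\rangle})$ out of one for $W/C_2(W)$ while carefully controlling the contributions of the twisted sectors. Rationality follows from the Carnahan--Miyamoto analysis: $W$ is a completely reducible $W^{\langle g\rangle}$-module, the $W^{\langle g\rangle}$-module category is governed by the twisted and untwisted $W$-modules together with Huang's modular tensor category results, and for a $C_2$-cofinite VOA of CFT-type rationality is equivalent to regularity (Abe--Buhl--Dong), a property stable under this passage. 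I expect this step --- $C_2$-cofiniteness of the orbifold, and the upgrade to full rationality --- to be the only genuine obstacle: the solvable-to-prime-cyclic reduction and the soft axioms are essentially bookkeeping, whereas this is precisely the technical heart of \cite{carnahan2016regularity}, resting on \cite{miyamoto2015c_2, miyamoto2004uniform}.
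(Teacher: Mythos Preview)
The paper does not supply its own proof of this proposition; it merely quotes the result from \cite{carnahan2016regularity} and remarks that the hypotheses there are actually weaker. So there is no ``paper's proof'' to compare against beyond the cited reference itself. Your sketch is essentially a faithful outline of the Carnahan--Miyamoto strategy: the solvable-to-prime-cyclic reduction via a composition series, the elementary verification of the soft axioms (CFT-type, simplicity via quantum Galois, self-contragredience via restriction of the invariant form), and then the appeal to \cite{miyamoto2015c_2} for $C_2$-cofiniteness and to \cite{carnahan2016regularity} for rationality of the prime-cyclic orbifold.

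One inaccuracy worth flagging: you write that ``Dong--Li--Mason provide a unique irreducible $g^a$-twisted $W$-module for each $a$.'' Uniqueness holds only when $W$ is \emph{holomorphic}; for a general nice $W$ there can be several irreducible $g^a$-twisted modules (indeed, for $a=0$ there are as many as $W$ has ordinary irreducibles). Fortunately your argument does not actually use uniqueness --- Miyamoto's $C_2$-cofiniteness theorem and the Carnahan--Miyamoto rationality argument work without it --- so this is a misstatement rather than a gap. With that sentence corrected or removed, your outline is sound and matches the intended proof in the cited literature.
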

\noindent In fact, the result they prove makes weaker ``niceness assumptions'' than what we've stated here. In the context of duality defects, we will be concerned with the case that $G$ is cyclic.

So suppose $V$ is holomorphic and $\calF = \TY(\bbZ_m)$ is a Tambara-Yamagami category which acts nicely on $V$. Note: here we are being ambiguous about the associator data. The degree-0 subcategory of $\calF$ defines a $\bbZ_m$ action on $V$ which acts nicely, so that $V^{\calF} \subset V^{\bbZ_m} \subset V$ is an inclusion of nice VOAs. By Proposition \ref{prop:confInc}, $V^{\bbZ_m} \subset V$ determines a fusion category isomorphic to $\Vc_{\bbZ_m}$ and so $V^{\calF} \subset V^{\bbZ_m}$ determines a fusion category of order $2$. Thus we have
\begin{equation}
    V^{\TY(\bbZ_m)} = (V^{\bbZ_m})^{\bbZ_2}\,,
\end{equation}
where the $\bbZ_2$ action on $V^{\bbZ_m}$ may be anomalous.

We see that the data of a $\TY(\bbZ_m)$ action on $V$ is defined by a choice of $\bbZ_m$ action on $V$ and $\bbZ_2$ action on $V^{\bbZ_m}$, but not all of the former actions realize a $\TY(\bbZ_m)$ action. As explained in Section \ref{sec:DefectedPFs}, our claim is that it does so if and only if the $\bbZ_2$ action ``fully swaps the axes'' of the metric Abelian group $A = \Irr(V^{\bbZ_m})$. By ``fully,'' we mean the $\bbZ_2$ action truly maps all elements of $A$ to elements of $\hat{A}$ and vice-versa, i.e., it doesn't just act non-trivially on a subgroup of order dividing $m$. 
\begin{theorem}\label{prop:proofTY}
    The inclusion $(V^{\bbZ_m})^{\bbZ_2} \subset V$ corresponds to a $\bbZ_m$ Tambara-Yamagami action on $V$ when the $\bbZ_2$ action fully swaps the axes of the metric Abelian group $A = \Irr(V^{\bbZ_m})$.
\end{theorem}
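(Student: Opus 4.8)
\emph{Setup and strategy.} The plan is to identify $\calF$ --- the canonical fusion category attached by Proposition \ref{prop:confInc} to the conformal inclusion $W := (V^{\bbZ_m})^{\bbZ_2} \subset V$, which acts on $V$ with $W = V^{\calF}$ --- as a faithfully $\bbZ_2$-graded extension of $\Vc_{\bbZ_m}$, and then to read off from the grading datum exactly when its fusion rules are those of $\TY(\bbZ_m)$. First, since $\bbZ_m$ and $\bbZ_2$ are solvable, $V^{\bbZ_m}$ and $W$ are nice by Corollary 5.25 of \cite{carnahan2016regularity}, so $W \subset V^{\bbZ_m} \subset V$ is a tower of conformal inclusions of nice VOAs. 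I use here that the $\bbZ_m$ action is non-anomalous: this is precisely what makes $A := \Irr(V^{\bbZ_m})$ the hyperbolic metric group $\bbZ_m \oplus \hat{\bbZ}_m$ with $\Rep(V^{\bbZ_m}) \cong \calZ(\Vc_{\bbZ_m})$ (cf. \cite{Kirillov1}), so that ``swapping the axes'' has meaning. By Proposition \ref{prop:confInc}, $V^{\bbZ_m} \subset V$ yields the fusion category $\Vc_{\bbZ_m}$ (the twisted $V$-modules, one per $\bbZ_m$-sector because $V$ is holomorphic), while $W \subset V$ yields $\calF$; writing $\calC := \Rep(W)$ and $A_V \in \calC$ for the algebra object realizing $V$, holomorphy of $V$ makes $A_V$ Lagrangian, so $\mathrm{FPdim}(\calF) = \mathrm{FPdim}(\calC)/\mathrm{FPdim}(A_V) = 4m^2/2m = 2m$.

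\emph{$\calF$ is a $\bbZ_2$-extension of $\Vc_{\bbZ_m}$.} By induction-in-stages for anyon condensation (Theorem 4.7 and Section 6.4 of \cite{Kong:2013aya}) applied to the intermediate VOA $V^{\bbZ_m}$ --- equivalently, using the orbifold decomposition $V^{\bbZ_m} = W \oplus W_-$ into $\bbZ_2$-eigenspaces --- the lines of $\calF$ that additionally commute with all of $V^{\bbZ_m}$ form a fusion subcategory of $\calF$ canonically identified with the fusion category of $V^{\bbZ_m} \subset V$, namely $\Vc_{\bbZ_m}$, and $\calF$ acquires a faithful $\bbZ_2$-grading with trivial part this $\Vc_{\bbZ_m}$ (an index-two fusion subcategory is automatically normal in any case). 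Hence $\calF = \Vc_{\bbZ_m} \oplus \calF_1$ with $\mathrm{FPdim}(\calF_1) = m$, so $\calF_1$ is an invertible $\Vc_{\bbZ_m}$-bimodule category.

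\emph{Matching the grading with the $\bbZ_2$ action, and conclusion.} By the extension theory of \cite{homotopyFusion}, $\calF$ is classified (up to the $H^3$/$H^4$-valued associativity data) by an involution $g \in \mathrm{BrPic}(\Vc_{\bbZ_m}) \cong \mathrm{Aut}^{\mathrm{br}}(\calZ(\Vc_{\bbZ_m})) \cong O(A,h)$, with $\calF_1$ the corresponding invertible bimodule category $\calM_g$. On the other hand, the $\bbZ_2$ automorphism $\sigma$ of $V^{\bbZ_m}$ cutting out $W$ acts on $\Rep(V^{\bbZ_m}) = \calZ(\Vc_{\bbZ_m})$ as a braided autoequivalence, i.e. as an element of the very same group $O(A,h)$, and this is literally its action on the metric group $A$. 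The crux is that these two involutions agree: the $\bbZ_2$-equivariantization of $\Rep(V^{\bbZ_m})$ by $\sigma$ is $\Rep(W)$ (Kirillov-M\"uger for the $\bbZ_2$ orbifold), which by \cite{gelaki2009centers, Barkeshli:2014cna} is the Drinfeld centre of the $\bbZ_2$-extension of $\Vc_{\bbZ_m}$ determined by $[\sigma] \in \mathrm{BrPic}(\Vc_{\bbZ_m})$; comparing with $\calF$ (whose centre is also $\Rep(W)$, as $W = V^{\calF}$) forces $g = [\sigma]$. Finally, $[\sigma]$ fully swaps the axes of $A$ exactly when $\calM_{[\sigma]}$ has a single simple object --- then that object has $\mathrm{FPdim} = \sqrt{m}$ and its square is $\bigoplus_{a \in \bbZ_m} X_a$, which is the $\TY(\bbZ_m)$ fusion rule --- whereas any involution fixing the axes, or only partially swapping them when $m$ is composite, yields an $\calM_{[\sigma]}$ of higher rank and hence a \emph{pointed} $\calF$. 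Thus axes-swapping implies $\calF$ has Tambara-Yamagami fusion rules, and by the classification of \cite{tambara1998tensor} its associator is necessarily of the form $(\chi,\tau)$ with $\chi$ a non-degenerate symmetric bicharacter and $\tau$ a sign (the sign recording the possible anomaly of $\sigma$), i.e. $\calF = \TY(\bbZ_m,\chi,\tau)$, as claimed. Running the chain backwards gives the converse: if $\calF = \TY(\bbZ_m,\chi,\tau)$ then $\Rep(W) = \calZ(\TY(\bbZ_m))$ and, by \cite{gelaki2009centers}, the embedded $\calZ(\Vc_{\bbZ_m})$ must sit inside it via the electric-magnetic swap, so $\sigma$ swaps the axes.

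\emph{Main obstacle.} The genuinely nontrivial point is the compatibility claimed above: that the \emph{geometric} action of $\sigma$ on $\Irr(V^{\bbZ_m})$ is the same element of $O(A,h) \cong \mathrm{BrPic}(\Vc_{\bbZ_m})$ that governs the extension $\calF$ produced by Proposition \ref{prop:confInc}. Making this precise amounts to checking that ``condense $A_V$ in one step'' and ``condense the $V^{\bbZ_m}$-algebra, then the residual algebra'' are identified compatibly with the Morita/equivariantization bookkeeping; the most transparent route is to track the two obvious Lagrangian algebras of $\calZ(\Vc_{\bbZ_m}) = \Rep(V^{\bbZ_m})$ --- the electric one $\bigoplus_i \calH_0^i$, whose condensation is $V$, and the magnetic one $\bigoplus_j \calH_j^0$, whose condensation is $[V/\bbZ_m]$ --- and to observe that $\sigma$ interchanges them precisely when it swaps the axes, which is what pins $\calF_1$ down to the rank-one bimodule category. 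Everything else (niceness, the FP-dimension counts, normality of an index-two subcategory, and Tambara-Yamagami's classification of associators) is standard.
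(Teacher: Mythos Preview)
Your proof is correct and follows essentially the same architecture as the paper's: both establish that $\calF$ is a $\bbZ_2$-extension of $\Vc_{\bbZ_m}$ of total dimension $2m$, identify the degree-$1$ component $\calF_1$ as an invertible $\Vc_{\bbZ_m}$-bimodule category classified by an element of $\mathrm{BrPic}(\Vc_{\bbZ_m}) \cong O(A,h)$, and then argue that axis-swapping is exactly the condition for $\calF_1$ to have rank one.

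The emphasis is placed differently at the last step. You invoke the ENO extension machinery and equivariantization abstractly, asserting that ``$[\sigma]$ fully swaps the axes exactly when $\calM_{[\sigma]}$ has a single simple object''; the paper instead unpacks this concretely via the classification of $\Vc_G$-module categories by pairs $(H,\beta)$ with $H \leq G$, identifying the relevant bimodule with the graph $H = \{(a,\chi(a)) : a \in \bbZ_m\} \subset \bbZ_m \times \hat{\bbZ}_m$ and then counting simple objects by pushing along the forgetful functor $\calZ(\Vc_{\bbZ_m}) \to \Vc_{\bbZ_m}$ (projection onto the first factor), where the image is all of $\bbZ_m$ and hence yields a single coset. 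Your route is cleaner conceptually but leaves that rank count as an exercise; the paper's is more explicit but hides the general picture. On the other hand, you are more careful than the paper about the ``compatibility'' step --- that the geometric action of $\sigma$ on $\Irr(V^{\bbZ_m})$ really is the same element of $O(A,h)$ governing the extension --- which the paper's proof takes as evident; your tracking of the two Lagrangian algebras (electric and magnetic) under $\sigma$ is a nice way to pin this down. Your observation that the converse also follows, and that $\tau$ should record the $\bbZ_2$ anomaly, goes a little beyond what the paper proves here.
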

\begin{proof}
    First suppose it does not fully swap the axes, i.e. that $\bbZ_2 \subset SO(A,h)$, then the $\bbZ_2$ action is only acting by automorphisms on the individual axes, permuting the contents of $\bbZ_m$ and $\hat{\bbZ}_m$, but not amongst each other. So it is really just acting as an automorphism on $\bbZ_m$, hence there is a finite group action $G = \bbZ_m.\bbZ_2$ acting on $V$ such that $(V^{\bbZ_m})^{\bbZ_2} = V^G$.
    
    Now suppose we have an order 2 automorphism in $O(A,h)$ which does swap the axes. We must show that $\calF$ determined by $(V^{\bbZ_m})^{\bbZ_2} \subset V$ is a Tambara-Yamagami category. We know that $\calF$ is some $\bbZ_2$ extension of $\Vc_{\bbZ_m}$ with total dimension $2m$, so it will suffice to show that the part in degree 1 has one simple object (which will then be dimension $\sqrt{m}$). In other words, if we write $\calF = \Vc_{\bbZ_m} \oplus \calA$, then we want to show $\calA$ has exactly one simple object.
    
    $\calA$ is an invertible bimodule category for $\Vc_{\bbZ_m}$.\footnote{Note: physically, this is not saying that the TY line is invertible, but that the 2d wall which extends off of it into the (2+1)d bulk is invertible.} Now, bimodule categories of $\Vc_{\bbZ_m}$ are in bijection with module categories for $\calZ(\Vc_{\bbZ_m})$. In any case, module categories for $\Vc_G$ are labelled by $(H,\beta)$ where $H<G$ and  $\beta \in H^2(H;U(1))$ \cite{homotopyFusion}.
    
    Since $\calZ(\Vc_{\bbZ_m}) \cong \Vc_{\bbZ_m \times \hat{\bbZ}_m}$ \textit{as fusion categories}, the module categories are labelled as above. Physically, the $(H,\beta)$ data are labelling a topological boundary condition for the $\bbZ_m$ gauge theory described in our previous section. The $\bbZ_2$ action is defining an electric-magnetic duality which swaps our Dirichlet and Neumann boundary conditions, and we are convincing ourselves that at the end of the wall lives a $\bbZ_2$ twist-line with Tambara-Yamagami like properties.
    
    There is a distinguished module category given by our $\bbZ_2$ action. In particular, the $\bbZ_2$ axis-swapping action defines an isomorphism $\chi:\bbZ_m \to \hat{\bbZ}_m$ whose graph is our subgroup $H = \bbZ_m < \bbZ_m \times \hat{\bbZ}_m$. Physically speaking, this is telling us if the electric-magnetic duality is swapping the axes and sending things of ``electric charge $1$'' to things with ``magnetic charge $1$'' or also rearranging the definitions of electric and magnetic charge.
    
    Now the module category of $\calZ(\Vc_{\bbZ_m})$ defined by that diagonal $\bbZ_m$ (with trivial cocycle since $H^2(\bbZ_m,U(1))$ is trivial), gives us a bimodule category of $\Vc_{\bbZ_m}$ using the bulk-boundary functor $\calZ(\Vc_{\bbZ_m}) \to \Vc_{\bbZ_m}$. In particular, the data is simply the projection onto the first factor in $\pi: \bbZ_m \times \hat{\bbZ}_m \to \bbZ_m$. Following our particular module category $(H,1)$, we get a $\Vc_{\bbZ_m}$ module category (we don't need the full bimodule category to count simple objects). 
    
    The simple objects of $\calA$ are the simple module-objects for the $\Vc_{\bbZ_m}$ algebra object $\pi((H,1))$, but these correspond to subgroups $\bbZ_m < \bbZ_m$, of which there is only $1$. So $\calA$ has one object.
\end{proof}

\section{Conclusion}%\label{sec:Conclusion}
\label{sec:OpenProblems}
In this paper we studied duality defects of the holomorphic $E_8$ lattice VOA $V$ which separate a theory from its orbifold by a cyclic subgroup $\bbZ_m$. We computed the defect partition functions explicitly as $\bbZ_2$ twists of the invariant sub-VOAs $V^{\bbZ_m}$ and compared our results to fermionization in the $m=2$ case and a particular conformal inclusion in the $m=3$ case (see Appendix \ref{sec:PottsDefect}), and obtained matching results. We also explained this computation using the (2+1)d topological field theory perspective of 2d orbifolds.

Focusing on $V$ gave us the ability to understand the group actions very explicitly. In fact, since the non-anomalous $\bbZ_m$ symmetries of $V$ are systematically enumerable, as well as $\bbZ_2$ actions of the invariant sub-VOA, we have actually \textit{classified all $\bbZ_m$ Tambara-Yamagami actions on the $E_8$ lattice VOA} (but see the open problems list). They are given as a pair consisting of a non-anomalous $\bbZ_m$ symmetry of $V$ and a (possibly anomalous) $\bbZ_2$ action which ``swaps the axes'' of the metric Abelian group of irreducible representations $\Irr(V^{\bbZ_m})$.

However, the (2+1)d discussion of duality defects did not actually rely very particularly on the details of $V$, so similar constructions should work for arbitrary holomorphic theories that are sufficiently ``nice,'' as demonstrated in Theorem \ref{prop:proofTY}. Generally, by the results of \cite{Gaiotto:2019xmp} (see also Appendix D of \cite{gaiotto2020orbifold}), we expect a modified version of this procedure to work for more complicated theories in higher dimensions (as in \cite{Koide:2021zxj, Choi:2021kmx, Kaidi:2021xfk}), as it's all simply a manifestation of the higher-dimensional TFT perspective on these topological defect operators.

\subsubsection*{Open Problems}%\label{sec:OpenProblems}
\begin{enumerate}
    \item \textbf{Full associator data}. In this paper, we do not obtain the full associator data for our Tambara-Yamagami categories. From a 2d perspective, data like the sign $\tau$ has an effect on the associativity relations for the duality defect line in the CFT. Based on our discussion, we suspect it is controlled by whether or not the $\bbZ_2$ action is anomalous or non-anomalous. Comparing to the (2+1)d TFTs, there is similarly a question of whether or not we stack with an invertible phase of matter before gauging the $\bbZ_2^{em}$ bulk $0$-form symmetry, as $H^3(\bbZ_2;U(1))=\bbZ_2$. It would be fun to work this data out explicitly and compare with the formalism of $G$-crossed categories \cite{gelaki2009centers, Barkeshli:2014cna}.
    \item \textbf{$n$-ality defects}. In \cite{thorngren2019fusion, thorngren2021fusion}, the authors discuss higher $n$-ality defects 
    $\mathcal{N}$ with fusion rules
    \begin{equation}
        X_g \otimes \mathcal{N} = \mathcal{N} = \mathcal{N} \otimes X_g\,,\quad \mathcal{N}^n = \sum_{g \in G} g\,.
    \end{equation}
    There is nothing particularly restrictive about our procedure for finding duality defects that could not be extended to these higher $n$-ality defects. In particular, for $E_8$, one could use Kac's theorem to construct automorphisms $G$ so that the weight-one subspace $(V_{E_8})^G$ has an $\mathfrak{so}(8)$ subalgebra, then try twisting by (a lift of) the $\bbZ_3$ automorphism of the $D_4$ Dynkin diagram. This would give a triality defect of a very different origin than those discussed in \cite{thorngren2021fusion}.
    \item \textbf{Matching to and from other examples}. To gain physical insight into other theories, the $E_8$ theory, or the structure of duality defects more broadly, it could be helpful to compare the constructions of defects in more specific cases. For example, using parafermionization or playing games with conformal inclusions as in Appendix \ref{sec:PottsDefect}. The fact that factors of $\mathfrak{u}(1)$ appear in the weight-one subspaces of various fixed VOAs $((V^{\bbZ_m})^{\bbZ_2})_1$ indicates the possibility of explicit character decompositions mirroring those in Appendix \ref{sec:PottsDefect} but with free bosons attached.
    \item \textbf{Other Theories.} We focused on the $E_8$ VOA, but our strategy should apply to arbitrary holomorphic VOAs without extra complications, although the lattice descriptions may not be available. From the discussion in Section \ref{sec:3dTFT}, we actually expect it to apply to duality interfaces in more general theories.
\end{enumerate}

\acknowledgments
We would like to thank Theo Johnson-Freyd for suggesting this project, for immeasurablely helpful comments throughout the research process, and feedback at various stages of the draft. We would also like to thank Sven M\"oller for providing us with Magma code to perform twisted character calculations and for explanations of the VOA literature. JK would also like to thank Davide Gaiotto for illuminating discussions on 2d CFT and (2+1)d TFT and feedback on the draft; and Matt Yu for discussions on anyon condensation. This work was initiated during the 2020 PSI Winter School, we thank the organizers of the PSI program and the staff at Camp Kintail for hospitality during this time. We are particularly grateful for the participation of Alicia Lima and Melissa Rodr\'iguez-Z\'arate in the early stages of the project. IMB would like to acknowledge financial support from the Berkeley Graduate Division and the Friends of Warren Hellman Fund. JK is funded through the NSERC CGS-D program. Research at Perimeter Institute is supported in part by the Government of Canada through the Department of Innovation, Science and Economic Development Canada and by the Province of Ontario through the Ministry of Colleges and Universities. JN thanks the Max Weber Stiftung and the Hanns-Seidel Stiftung for financial and intellectual support. 

\appendix
\section{A \texorpdfstring{$\bbZ_3$}{Z3} Defect from the Potts Model}\label{sec:PottsDefect}
In Section \ref{sec:fermionization} we obtained our $\bbZ_2$ duality defects from fermionization to contrast it to the Lie-theoretic method. We can also obtain other duality defects by playing similar games with parafermions and/or various conformal inclusions. 

We will explain how to obtain one of our $\bbZ_3$ duality defects from the Potts Model. This largely just amounts to doing representation theory for 2d CFTs, but we take a (2+1)d TFT and anyon condensation viewpoint to highlight the story in Section \ref{sec:3dTFT}. The final answer is given in Equation \eqref{eq:pottsDefect}.

The (bosonic) $c=\frac{4}{5}$ minimal models are constructed by pairing data from the chiral algebra $m_{6,5}$ and the anti-chiral algebra $\overline{m}_{6,5}$ in a modular invariant way. $m_{6,5}$ has 10 chiral primaries with spins \cite{BPZ}
\begin{equation}
    \{{0},
    {\tfrac{2}{5}},
    {\tfrac{1}{40}},
    {\tfrac{7}{5}},
    {\tfrac{21}{40}},
    {\tfrac{1}{15}},
    {3},
    {\tfrac{13}{8}},
    {\tfrac{2}{3}},
    {\tfrac{1}{8}}
    \}\,.
\end{equation}

The chiral algebra data may pair up via the diagonal (or ``A-type'') modular invariant, which corresponds to the tetracritical Ising model, with 10 full CFT primaries and a $\bbZ_2$ symmetry group. Alternatively, there is also a non-diagonal (or ``D-type'') modular invariant, which corresponds to the critical three-state Potts model, with 12 primaries (we use the notation of \cite{Chang_2019})
\begin{equation}
    \mathds{1}_{0,0}, 
    \epsilon_{\frac{2}{5},\frac{2}{5}},
    X_{\frac{7}{5},\frac{7}{5}},
    Y_{3,3},
    \Phi_{\frac{7}{5},\frac{2}{5}},
    \tilde{\Phi}_{\frac{2}{5},\frac{7}{5}},
    \Omega_{3,0},
    \tilde{\Omega}_{0,3},
    \sigma_{\frac{1}{15},\frac{1}{15}},
    \sigma^*_{\frac{1}{15},\frac{1}{15}},
    Z_{\frac{2}{3},\frac{2}{3}},
    Z^*_{\frac{2}{3},\frac{2}{3}}\,,
\end{equation}
and an $S_3$ symmetry group. The two are related by $\bbZ_2$ orbifold. 

By the construction of \cite{kapustin2010surface}, the Potts CFT on $\Sigma$ can be blown up to a slab of (2+1)d TFT on $\Sigma\times I$ with bulk $\calM := \Rep(m_{6,5})$. The TFT assigns the vector space $\calH_\Sigma$ of conformal blocks of $m_{6,5}$ to the chiral boundary (and $\overline{\calH}_\Sigma$ to the other).\footnote{We label the boundaries of the (2+1)d TFT by the chiral algebras to avoid overcluttering.} From this point of view, the choice of $D$-type modular invariant is specified by a topological interface $I_{D(6,5)}$,\footnote{One can quickly obtain $I_{D(6,5)}$ as one the only two (irreducible bosonic) gapped boundary conditions of $\calM \times \overline{\calM\,}$ using the Lagrangian algebra discussion outlined in Section \ref{sec:3dTFT}. The interface $I_{D(6,5)}$ satisfies $I_{D(6,5)}^2 = 2 I_{D(6,5)}$.} so that we may draw a picture like
\begin{equation}
\begin{tikzpicture}[thick, baseline={([yshift=-.5ex]current bounding box.center)}]
	% Dimensions
	\def\Depth{6}
	\def\Height{2}
	\def\Width{2}
	\def\Sep{3}        
	
	% 3d Manifold
	\coordinate (O) at (0,0,0);
	\coordinate (A) at (0,\Width,0);
	\coordinate (B) at (0,\Width,\Height);
	\coordinate (C) at (0,0,\Height);
	
	\draw[black, fill=yellow!20,opacity=0.8] (O) -- (A) -- (B) -- (C) -- cycle;% Interface

    %chiral BC on left face
    \draw[pattern=crosshatch, pattern color=blue!80] (O) -- (A) -- (B) -- (C) -- cycle;
	
	\draw[black] (O) -- (A) -- (B) -- (C) -- cycle;% Box
	
	\draw[below] (C) node{Potts};
    
\end{tikzpicture}
\quad\longleftrightarrow
\begin{tikzpicture}[thick, baseline={([yshift=-.5ex]current bounding box.center)}]
	% Dimensions
	\def\Depth{6}
	\def\Height{2}
	\def\Width{2}
	\def\Sep{3}        
	
	% 3d Manifold
	\coordinate (O) at (0,0,0);
	\coordinate (A) at (0,\Width,0);
	\coordinate (B) at (0,\Width,\Height);
	\coordinate (C) at (0,0,\Height);
	\coordinate (D) at (\Depth,0,0);
	\coordinate (E) at (\Depth,\Width,0);
	\coordinate (F) at (\Depth,\Width,\Height);
	\coordinate (G) at (\Depth,0,\Height);
	\coordinate (I1) at (\Depth/2,0,0);
	\coordinate (I2) at (\Depth/2,\Width,0);
	\coordinate (I3) at (\Depth/2,\Width,\Height);
	\coordinate (I4) at (\Depth/2,0,\Height);
	
	% Condensation wall
	\draw[black, fill=yellow!20,opacity=0.8] (I1) -- (I2) -- (I3) -- (I4) -- cycle;% Interface
	\draw[below] (I4) node{$I_{D(6,5)}$};
	%s\node[draw=none, xslant = -0.5, yslant = 0.5] at (\Depth/2,\Width/2,\Height/2) {$\times$};

    % Bulk labels
	\draw[midway] (\Depth/4,\Width-\Width/2,\Height/2) node {$\calM$};
	\draw[midway] (3*\Depth/4,\Width-\Width/2,\Height/2) node {$\calM$};

    %chiral BC on left face
    \draw[pattern=north west lines, pattern color=blue] (O) -- (A) -- (B) -- (C) -- cycle;
	
	\draw[black] (O) -- (C) -- (G) -- (D) -- cycle;% Bottom Face
	\draw[black] (O) -- (A) -- (E) -- (D) -- cycle;% Back Face
	\draw[black] (O) -- (A) -- (B) -- (C) -- cycle;% Left Face
	\draw[black] (D) -- (E) -- (F) -- (G) -- cycle;% Right Face
	\draw[black] (C) -- (B) -- (F) -- (G) -- cycle;% Front Face
	\draw[black] (A) -- (B) -- (F) -- (E) -- cycle;% Top Face
	\draw[below] (0, 0*\Width, \Height) node{$m_{6,5}$};
	\draw[below] (\Depth, 0*\Width, \Height) node{$\,\,\overline{m}_{6,5}$};
	%\draw[above] (\Depth/2,\Width+\Width/4,\Height/2) node {$\Sigma\times[0,2]$};
	
	%anti-chiral BC on right face
    \draw[pattern=north east lines, pattern color=blue] (D) -- (E) -- (F) -- (G) -- cycle;
\end{tikzpicture}
\end{equation}
where ``$\longleftrightarrow$'' means ``blows up to'' in one direction and ``compactifies down to'' in the other.

The untwisted (torus) partition function for the three-state Potts model describes this $D$-type pairing, it is given by
\begin{equation}\label{eq:Potts}
    Z_{\mathrm{Potts}}[0,0]   = \abs{\chi_{1,1}+\chi_{4,1}}^2 
        + \abs{\chi_{2,1}+\chi_{3,1}}^2
        + 2 \abs{\chi_{4,3}}^2
        + 2 \abs{\chi_{3,3}}^2\,.
\end{equation}
However, it is also well known that the Potts model can be realized as the diagonal modular invariant for the $W_3$ algebra
\begin{equation}
    Z_{W_3}[0,0]   = \abs{\chi_{\mathds{1}}}^2 
        + \abs{\chi_{\epsilon}}^2
        + \abs{\chi_{Z}}^2
        + \abs{\chi_{Z^*}}^2
        + \abs{\chi_{\sigma}}^2
        + \abs{\chi_{\sigma^*}}^2\,.
\end{equation}

The fact that the Potts model can also be written as the diagonal modular invariant CFT for a $W_3$ algebra signals that there exists an anyon condensation wall between $\calM$ and $\calW := \Rep(W_3)$, condensing the $10$ anyons of $\calM$ down to the $6$ bulk anyons of $\calW$ (as we will confirm). Pictorially, there exists a topological interface $I_{\calM\vert\calW}$ such that
\begin{equation}
\begin{tikzpicture}[thick, baseline={([yshift=-.5ex]current bounding box.center)}]
	% Dimensions
	\def\Depth{6}
	\def\Height{2}
	\def\Width{2}
	\def\Sep{3}        
	
	% 3d Manifold
	\coordinate (O) at (0,0,0);
	\coordinate (A) at (0,\Width,0);
	\coordinate (B) at (0,\Width,\Height);
	\coordinate (C) at (0,0,\Height);
	\coordinate (D) at (\Depth,0,0);
	\coordinate (E) at (\Depth,\Width,0);
	\coordinate (F) at (\Depth,\Width,\Height);
	\coordinate (G) at (\Depth,0,\Height);
	\coordinate (I1) at (\Depth/2,0,0);
	\coordinate (I2) at (\Depth/2,\Width,0);
	\coordinate (I3) at (\Depth/2,\Width,\Height);
	\coordinate (I4) at (\Depth/2,0,\Height);
	
	% Condensation wall
	\draw[black, fill=yellow!20] (I1) -- (I2) -- (I3) -- (I4) -- cycle;% Interface
	\draw[black, pattern=north east lines, pattern color=blue!80] (I1) -- (I2) -- (I3) -- (I4) -- cycle;% Interface

	\draw[below] (I4) node{$I_{\calM\vert\calW}$};
	%s\node[draw=none, xslant = -0.5, yslant = 0.5] at (\Depth/2,\Width/2,\Height/2) {$\times$};

    % Bulk labels
	\draw[midway] (\Depth/4,\Width-\Width/2,\Height/2) node {$\calM$};
	\draw[midway] (3*\Depth/4,\Width-\Width/2,\Height/2) node {$\calW$};

    %chiral BC on left face
    \draw[pattern=north west lines, pattern color=blue] (O) -- (A) -- (B) -- (C) -- cycle;
	
	\draw[black] (O) -- (C) -- (G) -- (D) -- cycle;% Bottom Face
	\draw[black] (O) -- (A) -- (E) -- (D) -- cycle;% Back Face
	\draw[black] (O) -- (A) -- (B) -- (C) -- cycle;% Left Face
	\draw[black] (D) -- (E) -- (F) -- (G) -- cycle;% Right Face
	\draw[black] (C) -- (B) -- (F) -- (G) -- cycle;% Front Face
	\draw[black] (A) -- (B) -- (F) -- (E) -- cycle;% Top Face
	\draw[below] (0, 0*\Width, \Height) node{$m_{6,5}$};
	\draw[below] (\Depth, 0*\Width, \Height) node{$\,\,\overline{W}_3$};
	%\draw[above] (\Depth/2,\Width+\Width/4,\Height/2) node {$\Sigma\times[0,2]$};
	
	%anti-chiral BC on right face
    \draw[pattern=north east lines, pattern color=green] (D) -- (E) -- (F) -- (G) -- cycle;
\end{tikzpicture}
\end{equation}

The 10 lines of $\calM = \Rep(m_{6,5})$ are
\begin{equation}
    \{0_{0},
    1_{\frac{2}{5}},
    2_{\frac{1}{40}},
    3_{\frac{7}{5}},
    4_{\frac{21}{40}},
    5_{\frac{1}{15}},
    6_{3},
    7_{\frac{13}{8}},
    8_{\frac{2}{3}},
    9_{\frac{1}{8}}
    \}\,,
\end{equation}
where the subscript denotes the chiral spin of the operator they correspond to (but will henceforth be dropped). Note that spin in the (2+1)d TFT is only meaningful $\Mod 1$ so that lines $0$ and $6$ have the same spin. At this point, there's only one bosonic condensation we could even try to perform; it must be that the new vacuum line is $\varphi = 0 + 6$. This makes sense since the spin-3 operator generates the $W_3$ algebra. 

To understand the condensation at the interface $I_{\calM\vert\calW}$, we compute the fusion rules of the (non-trivial) lines in $\calM$ with the new vacuum $\varphi$
\begin{alignat}{3}
    &\varphi \times 1 = 1+3\,, \qquad 
        &&\varphi \times2 = 2 + 4\,, \qquad
        &&\varphi \times3 = 3 + 1\,,\nonumber\\
    &\varphi \times4 = 4+2\,, \qquad 
        &&\varphi \times5 = 5_1 + 5_2\,, \qquad
        &&\varphi \times6 = 6 + 0\,,\\
    &\varphi \times7 = 7+9 \qquad 
        &&\varphi \times8 = 8_1 + 8_2\,, \qquad
        &&\varphi \times9 = 9 + 7\,.\nonumber
\end{alignat}
Fusions like $\varphi \times 1 = 1 + 3$ are straightforward, as the fusion rules of lines in $\calM$ match those of primaries in $m_{6,5}$. One subtlety is the appearance of terms like $\varphi\times 5 = 5_1 + 5_2$, where the subscript reminds us that the line formerly known as $5$ splits into two simple objects on $I_{\calM\vert\calW}$. Altogether, the lines living on $I_{\calM\vert\calW}$, coming from the condensation, are
\begin{equation}
    \{\varphi, (1+3), (2+4), 5_1, 5_2, (7+9), 8_1, 8_2\}\,.
\end{equation}
We note that the lines $(2+4)$ and $(7+9)$ cannot be lifted to the bulk $\calW$ phase because the constituent anyons have different spins and so the combined object cannot be given a braiding; in the language of \cite{mattAnyon} they are ``totally confined.'' The anyons of the phase $\calW$ must be
\begin{equation}
    \calW = \{
        \varphi_0, 
        (1+3)_{\frac{2}{5}}, 
        (5_1)_{\frac{1}{15}}, 
        (5_2)_{\frac{1}{15}}, 
        (8_1)_{\frac{2}{3}}, 
        (8_2)_{\frac{2}{3}}\}\,,
\end{equation}
which matches what we know about the $W_3$ algebra.

One can now verify that $\calW$ has the same fusion rules, but opposite spins, as the category $\calC := \Rep(\mathfrak{su}(3)_1\times(\mathfrak{f}_4)_1)$. In particular, the spins are
\begin{alignat}{2}
    (\mathfrak{f}_4)_1&: \quad &&\{0, \tfrac{3}{5}\}\,,\\
    \mathfrak{su}(3)_1&: \quad &&\{0, \tfrac{1}{3}, \tfrac{1}{3}\}\,,\\
    \calC&: \quad && \{0, \tfrac{1}{3},\tfrac{1}{3}, 
    \tfrac{3}{5}, \tfrac{14}{15}, \tfrac{14}{15}\}\,,
\end{alignat}
and we have $\calC = \overline{\calW}$. 

This means we can draw a picture like
\begin{equation}
\begin{tikzpicture}[thick, baseline={([yshift=-.5ex]current bounding box.center)}]
	% Dimensions
	\def\Depth{6}
	\def\Height{2}
	\def\Width{2}
	\def\Sep{3}        
	
	% 3d Manifold
	\coordinate (O) at (0,0,0);
	\coordinate (A) at (0,\Width,0);
	\coordinate (B) at (0,\Width,\Height);
	\coordinate (C) at (0,0,\Height);
	\coordinate (D) at (\Depth,0,0);
	\coordinate (E) at (\Depth,\Width,0);
	\coordinate (F) at (\Depth,\Width,\Height);
	\coordinate (G) at (\Depth,0,\Height);
	\coordinate (I1) at (\Depth/2,0,0);
	\coordinate (I2) at (\Depth/2,\Width,0);
	\coordinate (I3) at (\Depth/2,\Width,\Height);
	\coordinate (I4) at (\Depth/2,0,\Height);
	
	% Gravitational Chern-Simons Filling
	%\draw[fill=pink] (I4) -- (I1) -- (D) -- (G) -- cycle;% Bottom Face
	\draw[fill=pink!40] (I4) -- (I3) -- (F) -- (G) -- cycle;% Front Face
	%\draw[fill=pink] (D) -- (E) -- (F) -- (G) -- cycle;% Right Face
	\draw[fill=pink!40] (I2) -- (I3) -- (F) -- (E) -- cycle;% Top Face
	
	% Condensation wall
	\draw[black, fill=yellow!20] (I1) -- (I2) -- (I3) -- (I4) -- cycle;% Interface
	\draw[black, pattern=north east lines, pattern color=blue!80] (I1) -- (I2) -- (I3) -- (I4) -- cycle;% Interface

	\draw[below] (I4) node{$I_{\calM\vert\overline\calC}$};
	%s\node[draw=none, xslant = -0.5, yslant = 0.5] at (\Depth/2,\Width/2,\Height/2) {$\times$};

    % Bulk labels
	\draw[midway] (\Depth/4,\Width-\Width/2,\Height/2) node {$\calM$};
	\draw[midway] (3*\Depth/4,\Width-\Width/2,\Height/2) node {$ \mathscr{G} \times \overline{\calC}$};

    %chiral BC on left face
    \draw[pattern=north west lines, pattern color=blue] (O) -- (A) -- (B) -- (C) -- cycle;
	
	\draw[black] (O) -- (C) -- (G) -- (D) -- cycle;% Bottom Face
	\draw[black] (O) -- (A) -- (E) -- (D) -- cycle;% Back Face
	\draw[black] (O) -- (A) -- (B) -- (C) -- cycle;% Left Face
	\draw[black] (D) -- (E) -- (F) -- (G) -- cycle;% Right Face
	\draw[black] (C) -- (B) -- (F) -- (G) -- cycle;% Front Face
	\draw[black] (A) -- (B) -- (F) -- (E) -- cycle;% Top Face
	\draw[below] (0, 0*\Width, \Height) node{$m_{6,5}$};
	\draw[below] (\Depth, 0*\Width, \Height) node{$\,\,\mathfrak{su}(3)_1 \times (\mathfrak{f}_4)_1$};
	%\draw[above] (\Depth/2,\Width+\Width/4,\Height/2) node {$\Sigma\times[0,2]$};
	
	%chiral BC on right face
    \draw[pattern=north west lines, pattern color=green] (D) -- (E) -- (F) -- (G) -- cycle;
\end{tikzpicture}
\end{equation}
where we have flooded the right-hand side with $\mathscr{G}$, which denotes the $k=-1$ gravitational Chern-Simons theory, to soak up the gravitational anomaly \cite{Alvarez-Gaume:1983ihn, Kaidi:2021gbs}. Recall $c(m_{6,5}) = +4/5$, $c(\mathfrak{su}(3)_1) = +3$, and $c((\mathfrak{f}_4)_1) = +26/5$.

By folding just the $\overline{\calC}$ phase, we obtain the way to blow up the CFT $V_{E_8}$ as $m_{6,5} \times \mathfrak{su}(3)_1 \times (\mathfrak{f_4})_1$ with a topological boundary condition (or an interface to a gravitational Chern-Simons phase) on the other end
\begin{equation}
\begin{tikzpicture}[thick, baseline={([yshift=-.5ex]current bounding box.center)}]
	% Dimensions
	\def\Depth{2}
	\def\Height{2}
	\def\Width{2}
	\def\Sep{3}        
	
	% 3d Manifold
	\coordinate (O) at (0,0,0);
	\coordinate (A) at (0,\Width,0);
	\coordinate (B) at (0,\Width,\Height);
	\coordinate (C) at (0,0,\Height);
	\coordinate (D) at (\Depth,0,0);
	\coordinate (E) at (\Depth,\Width,0);
	\coordinate (F) at (\Depth,\Width,\Height);
	\coordinate (G) at (\Depth,0,\Height);
	\coordinate (I1) at (0,0,0);
	\coordinate (I2) at (0,\Width,0);
	\coordinate (I3) at (0,\Width,\Height);
	\coordinate (I4) at (0,0,\Height);
	
	% Gravitational Chern-Simons Filling
	%\draw[fill=pink] (I4) -- (I1) -- (D) -- (G) -- cycle;% Bottom Face
	\draw[fill=pink!40] (I4) -- (I3) -- (F) -- (G) -- cycle;% Front Face
	\draw[fill=pink!40] (D) -- (E) -- (F) -- (G) -- cycle;% Right Face
	\draw[fill=pink!40] (I2) -- (I3) -- (F) -- (E) -- cycle;% Top Face

    % Bulk labels
	\draw[midway] (\Depth/2,\Width-\Width/2,\Height/2) node {$\mathscr{G}$};

    %chiral BC on left face
    \draw[black, fill=yellow!20,opacity=0.8] (O) -- (A) -- (B) -- (C) -- cycle;% Interface
    \draw[pattern=north west lines, pattern color=blue!80] (O) -- (A) -- (B) -- (C) -- cycle;
	\draw[black] (O) -- (A) -- (B) -- (C) -- cycle;% Box
	\draw[below] (C) node{$V_{E_{8}}$};
	
	\draw[black] (O) -- (C) -- (G) -- (D) -- cycle;% Bottom Face
	\draw[black] (O) -- (A) -- (E) -- (D) -- cycle;% Back Face
	\draw[black] (O) -- (A) -- (B) -- (C) -- cycle;% Left Face
	\draw[black] (D) -- (E) -- (F) -- (G) -- cycle;% Right Face
	\draw[black] (C) -- (B) -- (F) -- (G) -- cycle;% Front Face
	\draw[black] (A) -- (B) -- (F) -- (E) -- cycle;% Top Face
	\draw[below] (0, 0*\Width, \Height) node{};
	\draw[below] (\Depth, 0*\Width, \Height) node{};
	%\draw[above] (\Depth/2,\Width+\Width/4,\Height/2) node {$\Sigma\times[0,2]$};
	
	%anti-chiral BC on right face
    %\draw[pattern=north east lines, pattern color=green] (D) -- (E) -- (F) -- (G) -- cycle;
\end{tikzpicture}
\quad\longleftrightarrow
\begin{tikzpicture}[thick, baseline={([yshift=-.5ex]current bounding box.center)}]
	% Dimensions
	\def\Depth{6}
	\def\Height{2}
	\def\Width{2}
	\def\Sep{3}        
	
	% 3d Manifold
	\coordinate (O) at (0,0,0);
	\coordinate (A) at (0,\Width,0);
	\coordinate (B) at (0,\Width,\Height);
	\coordinate (C) at (0,0,\Height);
	\coordinate (D) at (\Depth,0,0);
	\coordinate (E) at (\Depth,\Width,0);
	\coordinate (F) at (\Depth,\Width,\Height);
	\coordinate (G) at (\Depth,0,\Height);
	\coordinate (I1) at (\Depth/2,0,0);
	\coordinate (I2) at (\Depth/2,\Width,0);
	\coordinate (I3) at (\Depth/2,\Width,\Height);
	\coordinate (I4) at (\Depth/2,0,\Height);
	
	% Gravitational Chern-Simons Filling
	%\draw[fill=pink] (I4) -- (I1) -- (D) -- (G) -- cycle;% Bottom Face
	\draw[fill=pink!40] (I4) -- (I3) -- (F) -- (G) -- cycle;% Front Face
	\draw[fill=pink!40] (D) -- (E) -- (F) -- (G) -- cycle;% Right Face
	\draw[fill=pink!40] (I2) -- (I3) -- (F) -- (E) -- cycle;% Top Face
	
	% Condensation wall
	\draw[black, fill=yellow!20] (I1) -- (I2) -- (I3) -- (I4) -- cycle;% Interface
	\draw[black, pattern=north east lines, pattern color=blue!80] (I1) -- (I2) -- (I3) -- (I4) -- cycle;% Interface

	\draw[below] (I4) node{$B_{\calM\times\calC}$};
	%s\node[draw=none, xslant = -0.5, yslant = 0.5] at (\Depth/2,\Width/2,\Height/2) {$\times$};

    % Bulk labels
	\draw[midway] (\Depth/4,\Width-\Width/2,\Height/2) node {$\calM \times \calC$};
	\draw[midway] (3*\Depth/4,\Width-\Width/2,\Height/2) node {$\mathscr{G}$};

    %chiral BC on left face
    \draw[pattern=north west lines, pattern color=blue!50] (O) -- (A) -- (B) -- (C) -- cycle;
    %chiral BC on left
    \draw[pattern=north west lines, pattern color=green!50] (O) -- (A) -- (B) -- (C) -- cycle;
	
	\draw[black] (O) -- (C) -- (G) -- (D) -- cycle;% Bottom Face
	\draw[black] (O) -- (A) -- (E) -- (D) -- cycle;% Back Face
	\draw[black] (O) -- (A) -- (B) -- (C) -- cycle;% Left Face
	\draw[black] (D) -- (E) -- (F) -- (G) -- cycle;% Right Face
	\draw[black] (C) -- (B) -- (F) -- (G) -- cycle;% Front Face
	\draw[black] (A) -- (B) -- (F) -- (E) -- cycle;% Top Face
	\draw[below] (0, 0*\Width, \Height) node{\begin{tabular}{c} $m_{6,5} \, \times$ \\ $\mathfrak{su}(3)_1 \times (\mathfrak{f}_4)_1$  \end{tabular}};
	\draw[below] (\Depth, 0*\Width, \Height) node{};
	%\draw[above] (\Depth/2,\Width+\Width/4,\Height/2) node {$\Sigma\times[0,2]$};
	
	%anti-chiral BC on right face
    %\draw[pattern=north east lines, pattern color=green] (D) -- (E) -- (F) -- (G) -- cycle;
\end{tikzpicture}
\end{equation}

In terms of characters, all these pictures have just been to say that
\begin{align}
    \chi^{E_8}(\tau) 
        &= (\chi_{1,1}(\tau)  + \chi_{4,1}(\tau) )\,\chi_0^{A_2}(\tau)\, \chi_0^{F_4}(\tau)
        + (\chi_{2,1}(\tau) + \chi_{3,1}(\tau))\,\chi_0^{A_2}(\tau)\,\chi_{\frac{3}{5}}^{F_4}(\tau)\nonumber\\
        & + 2\, \chi_{4,3}(\tau)\, \chi_{\frac{1}{3}}^{A_2}(\tau)\, \chi_0^{F_4}(\tau)
        + 2\, \chi_{3,3}(\tau)\, \chi_{\frac{1}{3}}^{A_2}(\tau)\, \chi_{\frac{3}{5}}^{F_4}(\tau)
\end{align}
where we recognize the shadow of the original Potts partition function from Equation \eqref{eq:Potts}. For implementation purposes, it may be helpful to know that the $(\mathfrak{f}_4)_1$ characters can similarly be written in terms of $\mathfrak{so}(9)_1$ characters (given in \eqref{eq:BnCharacters}) and the $c=\frac{7}{10}$ tricritical ising characters (see \cite{lopes2019non} for a cool application)
\begin{align}
    \chi_0^{F_4}(\tau)
        &=\chi^{m_{5,4}}_{1,1}(\tau)\,\chi^{B_4}_{0}(\tau)
        +\chi^{m_{5,4}}_{2,1}(\tau)\,\chi^{B_4}_{\frac{1}{16}}(\tau)
        +\chi^{m_{5,4}}_{3,1}(\tau)\,\chi^{B_4}_{\frac{1}{2}}(\tau)\,,
        \\
    \chi_{\frac{3}{5}}^{F_4}(\tau)
        &=\chi^{m_{5,4}}_{3,2}(\tau)\,\chi^{B_4}_{0}(\tau)
        +\chi^{m_{5,4}}_{2,2}(\tau)\,\chi^{B_4}_{\frac{1}{16}}(\tau)
        +\chi^{m_{5,4}}_{3,3}(\tau)\,\chi^{B_4}_{\frac{1}{2}}(\tau)\,.
\end{align}
    
In \cite{Petkova:2000} the authors enumerate the simple topological defect lines of the Potts model, with defect partition functions. We can use our character decompositions to similarly produce a $\bbZ_3$ defect partition function for $E_8$. There are actually two $\bbZ_3$ duality defect lines in the Potts model, related by charge conjugation (see \cite{Chang_2019}), they have the same defect partition function
\begin{align}
    Z_{\mathrm{Potts}}[0,X_{\TY}] 
        &= \sqrt{3} 
    (\chi_{1,1}(\tau)+\chi_{1,5}(\tau))(\bar\chi_{1,1}(\bar\tau)-\bar\chi_{1,5}(\bar\tau))\nonumber\\
        &+\sqrt{3} 
    (\chi_{3,1}(\tau)+\chi_{3,5}(\tau))(\bar\chi_{3,1}(\bar\tau)-\bar\chi_{3,5}(\bar\tau))
\end{align}
So that translated to $V_{E_8}$ we have
\begin{align}\label{eq:pottsDefect}
    Z_{V_{E_8}}[0,X_{\TY}] 
        &= \sqrt{3} 
    (\chi_{1,1}(\tau)+\chi_{1,5}(\tau))\,\chi_0^{A_2}(\tau)\,\chi_0^{F_4}(\tau)\nonumber\\
        &+\sqrt{3} 
    (\chi_{3,1}(\tau)+\chi_{3,5}(\tau))\,\chi_0^{A_2}(\tau)\,\chi_{\frac{3}{5}}^{F_4}(\tau)
\end{align}
whose $q$-expansion matches our result in Equation \eqref{eq:Z3Defect4}.

With hindsight, the path to this result is somewhat obvious. Kac's theorem tells us about the weight one subspace of $(V^{\bbZ_3})^{\bbZ_2}$, one of them happens to be $A_2 \times F_4$ (where the $F_4$ appeared as the invariant subalgebra under the $\bbZ_2$ Dynkin diagram automorphism for $E_6$). If we assume these are both reflecting the presence of level-1 WZW models, then we only have to explain a missing central charge of $\frac{4}{5}$, and we know that the Potts model has a $\bbZ_3$ symmetry and is self-dual under $\bbZ_3$ orbifold (because there's no other theory for it to map to with $\bbZ_3$ symmetry).

As a related exercise, one can reobtain the results from fermionization with their associator data by noting that $\TY(\bbZ_2,1,+1/\sqrt{2})\cong \mathrm{Ising}$ and $\TY(\bbZ_2,1,-1/\sqrt{2})\cong \mathfrak{su}(2)_2$ as fusion categories and using techniques of anyon condensation.

Such manipulations would become rather cumbersome if one wants to enumerate all duality defects. The treatment given in the main text provides a more systematic way to obtain these same results. A combined approach could be of physical interest to better understand the final expressions for the duality defects in terms of the characters of $(V^{\bbZ_m})^{\bbZ_2}$. 

\section{Symmetric Non-Degenerate Bicharacters and ``Rearrangement Data''}\label{sec:Rearrangements}
Tambara-Yamagami categories are specified by an Abelian group $G$, a symmetric non-degenerate bicharacter $\chi:G\times G \to U(1)$, and a Frobenius-Schur indicator $\tau = \pm \sqrt{\abs{G}}$. In the main body of this paper we largely focus on the fusion rules, but the symmetric non-degenerate bicharacter is also a mostly accessible piece of data.

Recall that a symmetric non-degenerate bicharacter gives isomorphisms $\chi_G:G\to\hat{G}$ and $\chi_{\hat{G}}:\hat{G} \to G$ such that $\chi_{\hat{G}}\circ \chi_{G} = \id_{G}$ and $\chi_G \circ \chi_{\hat{G}} = \id_{\hat{G}}$. In our case, $G=\bbZ_m$, and $\Irr(V^{\bbZ_m}) \cong \bbZ_m \times \hat{\bbZ}_m$ is a metric Abelian group (with metric function given by the natural pairing between $\bbZ_m$ and $\hat{\bbZ}_m$).

In this paper, we look for $\bbZ_2$ symmetries of $V^G$ which act to ``swap the axes'' of $A$. The data of a symmetric non-degenerate bicharacter describes whether or not the axes are also ``rearranged'' when they are swapped. To what extent can we obtain this ``rearrangement data'' of our $\bbZ_2$ action?

There are many ``non-canonicalities'' involved in this axis-swapping. For example, $\bbZ_m \cong \hat{\bbZ}_m$, but not-canonically, with $\abs{\bbZ_m^\times}$ isomorphisms between them. Even then, when dealing with cyclic group symmetries, there isn't really a meaning of ``one'' unit of electric charge. We claim that we can distinguish, at minimum, if two different $\bbZ_2$ actions on $V^G$ lead to different symmetric non-degenerate bicharacters.

We are dealing with some concrete data, we have:
\begin{enumerate}
    \itemsep0em
    \item An ``original'' $E_8$ lattice $L$.
    \item A sublattice $L_0 \subseteq L$ which is obtained as a fixed sublattice of $L$ under our original (non-anomalous) $\bbZ_m$ action.
    \item A dual lattice $L_0^*$ such that $L \subseteq L_0^*$ and $L_0^*/L_0 \cong A$.
    \item Another $E_8$ lattice $L^\prime \subseteq L_0^*$ such that $L \cap L^\prime = L_0$.
\end{enumerate}
The first piece of data in some sense explains if we can identify $G$ or $\hat{G}$ inside of $A$, as we can identify if some element is in $L$. But this will not be of primary concern.

Let $\{\sigma_i\}$ be a collection of order 2 automorphisms of $L_0$ (whose actions extend to all of $\bbR^8$ by linearity), and $\{\hat{\sigma}_i\}$ the (well-defined) induced actions on $A = L_0^*/L_0$. We will only concern ourselves with those $\sigma_i$ such that $\hat{\sigma}_i$ swaps the axes of $A$. We can determine if they correspond to different symmetric non-degenerate bicharacters as follows:
\begin{enumerate}
    \itemsep0em
    \item We start with the lattice $L_0$ and obtain $L_0^*/L_0$. In particular, we obtain a list of representatives for the $m^2$ equivalence classes. We can identify elements corresponding to equivalence classes ``on the axes'' by using the discriminant (which is well-defined on the equivalence classes), i.e. checking which representatives are even vectors.
    \item Next we pick some representative $\xi$ with discriminant $0$ and order $m$, and declare that the coset $\xi + L_0$ generates the subgroup corresponding to $L$ in $A$. We define things in the sector $\xi + L_0$ to have reference electric charge $1$.
    \item Now we pick one of our $\bbZ_2$ actions, $\sigma_1$ say, and declare that the operators in $\omega+L_0$ have reference magnetic charge $1$, where $\omega = \sigma_1(\xi)$. In other words, $\hat{\sigma}_1$ maps the sector with electric charge 1 to the sector with magnetic charge 1. This part is non-canonical in that $\sigma_1$ picks a particular isomorphism $\hat{\sigma}_1:G\to \hat{G}$ inside $A$.
    \item Finally, we act with all other $\sigma_i$ on $\xi$ to find the  ``image magnetic charge'' relative to $(\xi,\sigma_1)$. i.e. $\hat{\sigma}_i(\xi+L_0)=m_i \omega + L_0$ for some $m_i \in \bbZ_m^\times$.
\end{enumerate}
We see that relative to the declarations that $\xi$ has electric charge $1$ and $\sigma_1$ maps it to magnetic charge $1$, $m_i$ tells us whether or not $\sigma_i$ also ``rearranges'' the axes when it swaps them, and so constitutes the data we want.

To what extent do our choices matter? On one hand, note that if we had chosen a different reference electric charge $\xi^\prime$, then $\sigma_1$ would define a different generator $\omega^\prime + L_0 = \hat\sigma_1(\xi^\prime+L_0)$ for the magnetic subgroup. But $\xi^\prime + L_0 = k \xi + L_0$ for some invertible $k \Mod m$. Thus, multiplying both sides by $k$, we see $\hat\sigma_i(\xi+L_0) = m_i\omega +L_0$ if and only if $\hat\sigma_i(\xi^\prime+L_0) = m_i\omega^\prime +L_0$ since $k$ is invertible. So it's not our choice of reference electric charge sector $\xi+L_0$ that matters, just $\sigma_1$.

On the other hand, $m_i$ does depend on our original choice of $\sigma_1$ in defining the sector with one unit of magnetic charge. This is obvious since $m_i=1$ relative to itself. We will denote this ``rearrangement number'' relative to $\sigma_1$ as $\mathrm{Rearr}_{\sigma_1}(\hat{\sigma}_i) = m_i$. 

We can do slightly better though, as ratios of $m_i$ are meaningful independent of $\sigma_1$. To see this is the same as before. Fix some $\xi$ and suppose we have two reference isomorphisms $\sigma$ and $\tilde{\sigma}$ with their respective $\omega$ and $\tilde{\omega}$, and suppose $\hat{\mu}$ is some $\bbZ_2$ action of interest. Then relative to $\sigma$ and $\tilde{\sigma}$ we have
\begin{equation}
    \mathrm{Rearr}_\sigma(\hat{\mu}) =: m_\mu\,,\quad 
    \mathrm{Rearr}_{\tilde{\sigma}}(\hat{\mu}) =: \tilde{m}_\mu\,.
\end{equation}
Then, by definition, this means $\hat\mu(\xi+L_0) = \tilde{m}_\mu \tilde\omega + L_0 = m_\mu \omega + L_0$. But $\tilde \omega + L_0 = k \omega + L_0$ for some invertible $k$, so $m_\mu = k \tilde{m}_\mu$ and the statement about ratios of rearrangement numbers for two $\hat{\mu}_1$ and $\hat{\mu}_2$ follows.

\subsection{Our Results from the Text}

We note that this only depends on how the $\bbZ_2$ automorphism acts on the underlying lattice, so duality defects that are lifted from the same underlying lattice action on $L_0$ are in the same rearrangement class.

Our data is as follows:
\begin{enumerate}
    \item[$\bbZ_2$ Case] The fixed lattice is $L_0 = D_8$ and $L_0^*/L_0 \cong \bbZ_2\times \bbZ_2$. There is no rearrangement data in this case.
    \item[$\bbZ_3$ Case] The fixed lattice is $L_0 = A_2 \times E_6$ and $L_0^*/L_0 \cong \bbZ_3 \times \bbZ_3$. There are 7 duality defects which come from the Dynkin diagram automorphism on $A_2$ or on $E_6$, but not both. We find that these are in different rearrangement classes, and thus have rearrangement number 2 relative to each other.
    \item[$\bbZ_4$ Case] The fixed lattice is given in \eqref{eq:Z4fixedLattice}, and there are 6 duality defects coming from lifts of 2 of the 3 non-trivial $\bbZ_2$ lattice automorphisms, namely $\nu_2^{(4)}$ and $\nu_3^{(4)}$ in Equation \eqref{eq:NonTrivialZ4s}. Since $\abs{\bbZ_4^\times} = 2$ there are only two possibly rearrangement classes. We find that they are in different rearrangement classes, having rearrangement number 3 relative to each other.
    \item[$\bbZ_5^A$ Case] The fixed lattice is given by the $A_4\times A_4$ root lattice inside $E_8$, and there are 4 duality defects coming from lifts of 2 of the 5 conjugacy classes of outer automorphisms. These correspond to the $\{\sigma_1,\sigma_2\}$ class ``flipping'' one $A_4$, and the $\{\tau,\sigma_1\sigma_2\tau\}$ class ``exchanging'' $A_4$'s. We find that these are in different rearrangement classes, with the ``exchange'' conjugacy class having rearrangement number $2$ relative to the ``flip'' conjugacy class, of course this means the flip-class has rearrangement number $2^{-1} = 3$ relative to the exchange-class.
    \item[$\bbZ_5^B$ Case] The fixed lattice is described at the start of Section \ref{sec:Z5B}, and there are 7 duality defects coming from lifts of 2 lattice automorphisms $\nu_2^{(5)}$ and $\nu_3^{(5)}$ given in Equation \eqref{eq:NonTrivialZ5s}. We find that they are in different rearrangement classes, with the $\nu_3^{(5)}$ defects having rearrangement number $2$ relative to $\nu_2^{(5)}$ defects.
\end{enumerate}

\bibliographystyle{JHEP}
\bibliography{DualityDefects} 

\end{document}